\documentclass[10pt,journal,compsoc]{IEEEtran}
\usepackage{url}
\usepackage{amssymb,amsmath,amsfonts,amsthm}
\usepackage{enumerate}

\usepackage{enumitem}
\usepackage{array}
\usepackage{framed}
\usepackage{multirow}
\usepackage{dblfloatfix}
\usepackage{etoolbox}
\usepackage{xcolor,xspace}
\usepackage{float,subfigure}
\usepackage{tabularx}
\usepackage[referable]{threeparttablex}
\usepackage{verbatim}
\usepackage{caption}
\usepackage{boxedminipage}
\usepackage{graphicx}
\usepackage{etoolbox}
\usepackage{setspace}
\usepackage{threeparttable}
\usepackage{dblfloatfix}
\patchcmd{\paragraph}{\itshape}{\bfseries\boldmath}{}{}
\usepackage[colorlinks,
            linkcolor=black,
            anchorcolor=black,
            citecolor=blue
            ]{hyperref}

\newcommand{\prob}{\mathbf{Pr}}
\newcommand{\N}{\mathbb{N}}
\newcommand{\punc}{\mathsf{Puncture}}

\newcommand{\expt}{\mathbf{Expt}}
\newcommand{\BF}{\mathsf{BF}}
\newcommand{\gen}{\mathsf{Gen}}
\newcommand{\update}{\mathsf{Update}}
\renewcommand{\check}{\mathsf{Check}}
\newcommand{\rsample}{\overset{\$}{\leftarrow}}
\newcommand{\G}{\mathbb{G}}
\newcommand{\Z}{\mathbb{Z}}
\newcommand{\delegate}{\mathsf{Delegate}}
\newcommand{\params}{\mathsf{params}}
\newcommand{\tabincell}[2]{\begin{tabular}{@{}#1@{}}#2\end{tabular}}
\theoremstyle{definition}
\newtheorem{definition}{\textbf{\textup{Definition}}}
\newtheorem{remark}{\textbf{\textup{Remark}}}
\newtheorem{theorem}{\textbf{\textup{Theorem}}}
\newtheorem{lemma}{\textbf{\textup{Lemma}}}[section]
\newcommand{\ignore}[1]{}

\newcommand{\secparam}{\ensuremath{\lambda}\xspace}
\newcommand{\A}{\ensuremath{\mathcal{A}}\xspace}
\newcommand{\B}{\ensuremath{\mathcal{B}}\xspace}

\newcommand{\ppt}{\ensuremath{\textsc{ppt}}\xspace}
\newcommand{\negl}{\ensuremath{\mathsf{negl}}\xspace}

\newcommand{\sk}{\ensuremath{\mathsf{sk}}\xspace}
\newcommand{\vk}{\ensuremath{\mathsf{vk}}\xspace}

\newcommand{\setup}{\ensuremath{\mathsf{Setup}}\xspace}

\newcommand{\sign}{\ensuremath{\mathsf{Sign}}\xspace}
\newcommand{\verify}{\ensuremath{\mathsf{Verify}}\xspace}

\ifCLASSOPTIONcompsoc
  \usepackage[nocompress]{cite}
\else
  \usepackage{cite}
\fi

\ifCLASSINFOpdf

\else

\fi
\hyphenation{op-tical net-works semi-conduc-tor}

\begin{document}
\title{Puncturable Signatures and Applications in Proof-of-Stake Blockchain Protocol} 

\author{
{Xin-Yu Li, Jing Xu, Xiong Fan, Yu-Chen Wang and Zhen-Feng Zhang}
\IEEEcompsocitemizethanks{

\IEEEcompsocthanksitem X.-Y. Li, J. Xu, Y.-C Wang and Z.-F. Zhang are with Trusted Computing and Information Assurance Laboratory, Institute of Software, Chinese Academy of Sciences, Beijing, China.
E-mails: \{xujing, xinyu2016\}@iscas.ac.cn,\{wangyuchen, zfzhang\}@tca.iscas.ac.cn.
\IEEEcompsocthanksitem X. Fan is with  University of Maryland, College Park, MD, America.
E-mail: leofanxiong@gmail.com.}
}




\IEEEtitleabstractindextext{
\begin{abstract}
Proof-of-stake blockchain protocols are becoming one of the most promising alternatives to the energy-consuming proof-of-work protocols.
However, one particularly critical threat in the PoS setting is the well-known long-range attacks caused by secret key leakage (LRSL attack). Specifically, an adversary can attempt to control/compromise accounts possessing substantial stake at some past moment such that double-spend or erase past transactions, violating the fundamental persistence property of blockchain. Puncturable signatures provide a satisfying solution to construct practical proof-of-stake blockchain resilient to LRSL attack, despite of
the fact that existent constructions are not efficient enough for
practical deployments.

In this paper, we provide an in-depth study of puncturable signatures and explore its applications in the proof-of-stake blockchain. We formalize a security model that allows the adversary for adaptive signing and puncturing queries, and show a construction with efficient puncturing operations based on the Bloom filter data structure and strong Diffie-Hellman assumption. The puncturing functionality we desire is
for a particular part of message, like prefix, instead of the whole
message. Furthermore, we use puncturable signatures to construct practical proof-of-stake blockchain protocols that are resilient to LRSL attack, while previously the forward-secure signature is used to immunize this attack. We implement our scheme and provide experimental results showing that in comparison with the forward-secure signature, our construction performs substantially better on signature size, signing and verification efficiency, significantly on key update efficiency.

\end{abstract}


\begin{IEEEkeywords}
Puncturable signature, Bloom filter, Proof-of-Stake, Blockchain.
\end{IEEEkeywords}}

\maketitle

\section{Introduction}

Proof-of-stake (PoS) protocols have been envisioned as a more ecological alternative of the proof-of-work (PoW) mechanism to maintain consensus, since PoS leverages the virtual resources (e.g., the stake of a node) rather than consuming vast amounts of energy to solve the computational puzzles in PoW. In a proof-of-stake blockchain protocol, roughly speaking, participants randomly elect one party to produce the next block by running a ``leader election" process with probability proportional to their current stake held on blockchain.

In spite of high efficiency, proof-of-stake blockchains only account for a tiny percentage of existing digital currencies market, mainly due to the fact that most existing proof-of-stake protocols suffer from the well-known long-range attacks \cite{li2017securing}\cite{stakebleeding}\cite{deirmentzoglou2019a}(stemming from its weak subjectivity and costless simulation) which degrades security in the blockchain. A oft-cited long-range attack is caused by secret key leakage (abbreviated as LRSL attack in this paper). Specifically, an adversary can attempt to bribe/corrupt the secret keys corresponding to accounts with a substantial stake at some past moment (however, currently low-stake or even no stake), and then construct a fork and alter the history from the point in the past with the majority stake. Moreover, the attack can be sustained due to the fact that the adversary can continue to hold majority stake (e.g., by the reward fees of generating or issuing blocks).
Note that the accounts with low-stake are highly vulnerable to the secret key leakage since they might not be well protected compared with other active accounts, which further aggravates this attack.

Puncturable signature (PS), introduced by Bellare et al.~\cite{bellare2016new}, provides a satisfying solution to construct practical proof-of-stake blockchain protocols resilient to LRSL attack. Generally speaking, a puncturable signature scheme provides a {\sf Puncture} functionality that, given a secret signing key and a particular message $m$, generates an updated secret key which is able to sign all messages except for the punctured message $m$. In this paper, we further generalize the definition of the puncturable signature, particularly, the strings associated with the punctured signing key can be any part of signed messages (e.g. its prefix). In proof-of-stake protocols, the leader $U$ (elected for issuing block) signs the block $B_i$ with the puncturable signature by the secret key $sk_{U}$ at some time slot $sl_i$, where $sl_i$ is the part of block $B_i$, and then $U$ performs puncturing operation on message $sl_i$ which results in an updated $sk'_U$. More specifically, if no empty block is allowed, the punctured message can be $H(B_{i-1})$ instead of $sl_i$, where $H(B_{i-1})$ is the part of block $B_i$ and $B_{i-1}$ is the previous block of $B_i$. The security of puncturable signature guarantees that anyone cannot sign another data block $B'_i$ with the same $sl_i$ (or $H(B_{i-1})$) even though $sk'_U$ is exposed, and thus LRSL attack can be avoided.

A natural way to remedy LRSL attack in proof-of-stake blockchain protocols is to use the forward secure signature~\cite{bellare1999forward}, which guarantees that the previously issued signatures remain valid even if the current secret signing key is compromised. However, the computation performance of forward secure signature depends on either the time periods set in advance or the time periods elapsed so far logarithmically (even linearly), which brings undesirable consumption and becomes
a fatal issue for blockchain applications. Moreover, most signers have no chance to do
any signing within one period but they have to update the signing
key as long as the current period ends, which makes the
update operation a vain effort in the proof-of-stake blockchain. In fact, the forward secure signature can be treated as one special kind of the puncturable signature where the punctured message is the earlier period of time.

Puncturable signatures can also be used in many other scenarios such as asynchronous transaction data signing services. Transaction data signing is a process which guarantees the integrity and authenticity of the sensitive transaction data, such as payment instruction or transaction information of buying a real estate offering. In many cases, using ordinary digital signatures is not enough for these applications, as they often fail to ensure the integrity of past messages in the case when a user's key is compromised. This is particularly challenging in the non-interactive and asynchronous message system, where users may not be online simultaneously and messages may be delayed for substantial periods due to delivery failures and connectivity issues. Similar problem also exists in the theoretical part. For instance, in non-interactive multiparty computation (NI-MPC)~\cite{halevi2017non}, where a group of completely asynchronous parties can evaluate a function (e.g. for the purpose of voting) over their joint inputs by sending a signed message to an evaluator who computes the output. The adversary would control the final output if he can corrupt some parties within a period of time. In these examples, the transaction session ID can be used as a prefix, and after the honest user signs the transaction data (or message), the prefix is punctured so that no other signature exists for messages agreeing on the same prefix. Therefore, the integrity of transaction data (or message) is ensured by puncturable signatures.


\ignore{\{\color{red}In this paper, we also show puncturable signature can be used as an alternate scheme of the ordinary signature schemes for several proof-of-stake (PoS) based blockchain protocols to resist one kind of long-range attacks~\cite{bano2017sok} (also known as the history attacks) which we refer to as the long-range attack by secret key leakage (abbreviated as LRSL in this paper), where the adversary with the leaked secret signing keys of the leaders in previous periods can maliciously alter the blockchain history by creating a fork from an already generated block and thus can execute double spending attacks. Resistance to LRSL attacks is not only one of the security goals during security analysis of PoS-based blockchain protocols in theory but also a major issue of great significance in practice. Specifically, although PoS can remedy the limitations of proof-of-work (PoW) (i.e., huge waste of energy) by employing virtual mining resources, PoS-based blockchains only account for a tiny percentage of the market of existing digital currencies mainly due to the fact that most existing PoS protocols are vulnerable to several kinds of security threats such as the long-range attacks stemming from the existence of weak subjectivity and costless mining \cite{li2017securing}. To resist the long-range attack especially the LRSL attacks, several countermeasures have been proposed through different methods, for example, through punishment \cite{li2017securing}\cite{zamfir2015introducing}, through implementation \cite{daian2016snow} \cite{li2017securing}, and through key-evolving cryptography such as forward-secure signature schemes \cite{david2018ouroboros}\cite{badertscher2018ouroboros}. Later we will show the the puncturable signature scheme proposed in this paper, a special key-evolving signature scheme, outperforms better than the forward-secure signature schemes when being employed into PoS-based blockchain protocols to resist LRSL attacks.}

%

\subsection{Our Contributions}
In this work, we provide an in-depth study of the puncturable signature and its applications in the proof-of-stake blockchain protocols. Our overall goal is to design a puncturable signature
that allows for fine-grained revocation of signing
capability with minimum computation cost, and make it a suitable building
block to construct secure and practical proof-of-stake blockchain protocol.
More specifically, our technical contributions are threefold.
\begin{spacing}{1.2}
\end{spacing}
{\it Puncturable signature and its construction.} We introduce the notion of puncturable signature with extended puncturing functionality where the secret key can be updated by puncturing any particular part of message (for simplicity, we use the prefix of message in this paper) instead of puncturing the whole message. In the security model we propose, in addition to making adaptive signing and puncturing queries, adversary also has (one-time) oracle access to a featured {\sf Corruption} oracle, by which the adversary can obtain the current secret key if the challenging string is in the puncturing set $P$. Then we show a construction of puncturable signature based on the probabilistic Bloom filter data structure~\cite{bloom1970space} is secure under our security model. Our PS construction is inspired by an elegant work~\cite{derler2018bloom}, where the authors show how to construct puncturable encryption based on Bloom filter. However, different from the expanded ($k$ times) ciphertext size of underlying encryption scheme in \cite{derler2018bloom}, in our construction, the signature size is almost equal to that of the underlying signature scheme.

In comparison with two prior puncturable signature schemes \cite{bellare2016new}\cite{halevi2017non}, our construction achieves significant efficiency improvement in both signing and puncturing operations. More specifically, the construction in~\cite{bellare2016new} relies on indistinguishability obfuscation, which incurs prohibitive computational burden in practice, while the other one~\cite{halevi2017non} needs to update public key for every puncturing, which has some theoretical merits but hard to implement in real world deployment. On the contrary, in our construction, a puncturing operation only involves a small number of efficient computations (i.e., hashing), plus the deletion of certain parts of the secret key, which outperforms previous schemes by orders of magnitude. Indeed, the puncturable signature is not a simple inverse operation of the puncturable encryption, which is also the reason for no efficient puncturable signature scheme even though efficient puncturable encryption constructions have been proposed for a long time. The crucial difficulty in designing the puncturable signature scheme is how to bind the private key with punctured messages such that the updated private key cannot sign for punctured messages.
\ignore{
{\it Tag-based puncturable signature.} For our puncturable signature scheme based on Bloom filter, the signing algorithm may output $\bot$ for messages whose prefix is not punctured. This is caused by the false positive probability in Bloom filter, and the probability it happens is closely related to the size of secret key and the number of puncturing performed. Put simply, the lower the error probability, the larger the size of secret key and the smaller number of puncturing performed. Therefore, to maintain a balance between space efficiency and error probability, we introduce a new primitive, called tag-based puncturable signatures. In particular, in the lifetime of public key, an ordered tag is updated as long as puncturing operation times reach a pre-set limit, and correspondingly the Bloom filter is reset. We present a generic construction based on  Bloom filter from hierarchical identity based signature (HIBS) scheme, and prove our construction is secure against adaptive puncturing if the underlying HIBS is secure. The intuition behind the construction combines a binary tree approach with our construction of puncturable signatures, where each tag corresponds to a leaf of an ordered binary tree of depth $d$. Our tag-based construction and its security analysis are  independent of any particular instantiation of building blocks, HIBS and PS.}

\begin{figure*}[t]
\centering
\subfigure[Sign time]{\includegraphics[width=0.245\textwidth]{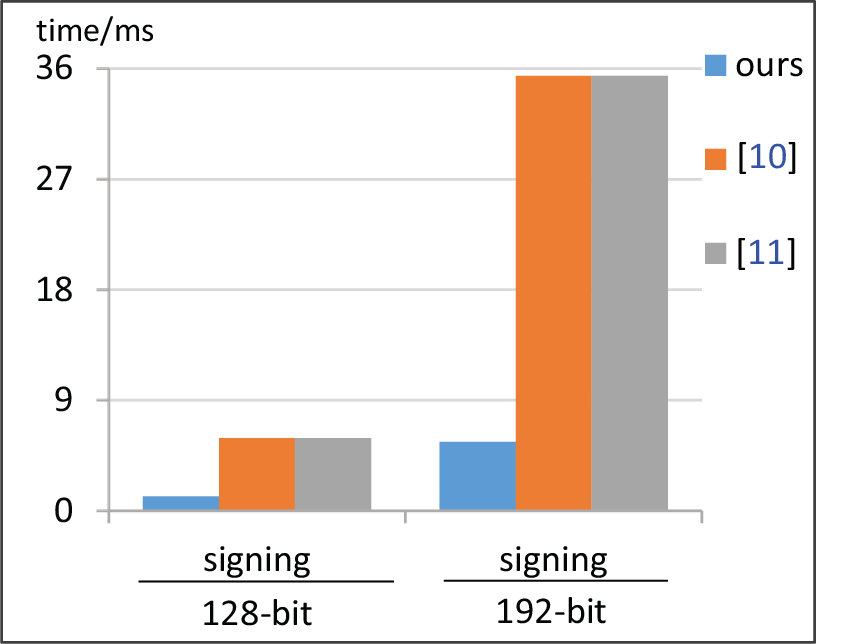}}
\subfigure[Verify time]{\includegraphics[width=0.245\textwidth]{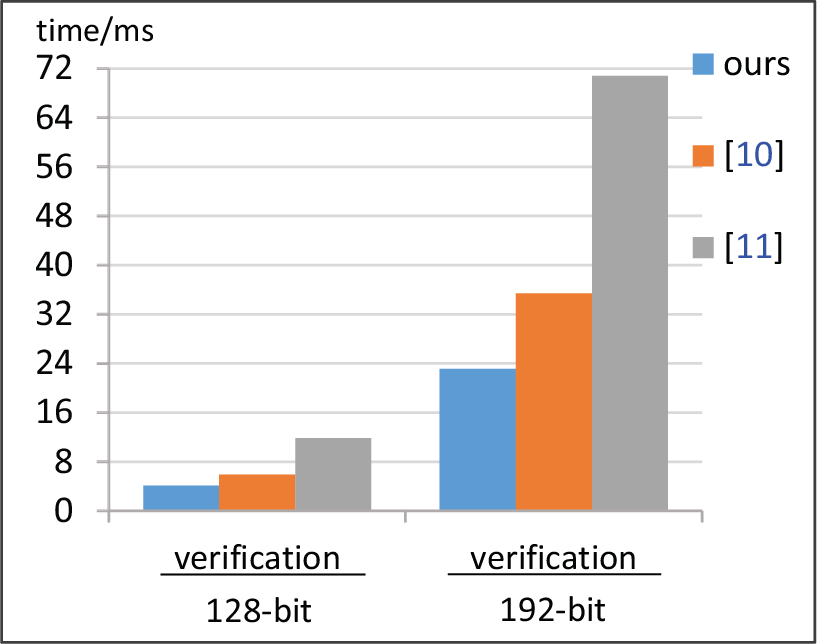}}
\subfigure[Update time]{\includegraphics[width=0.245\textwidth]{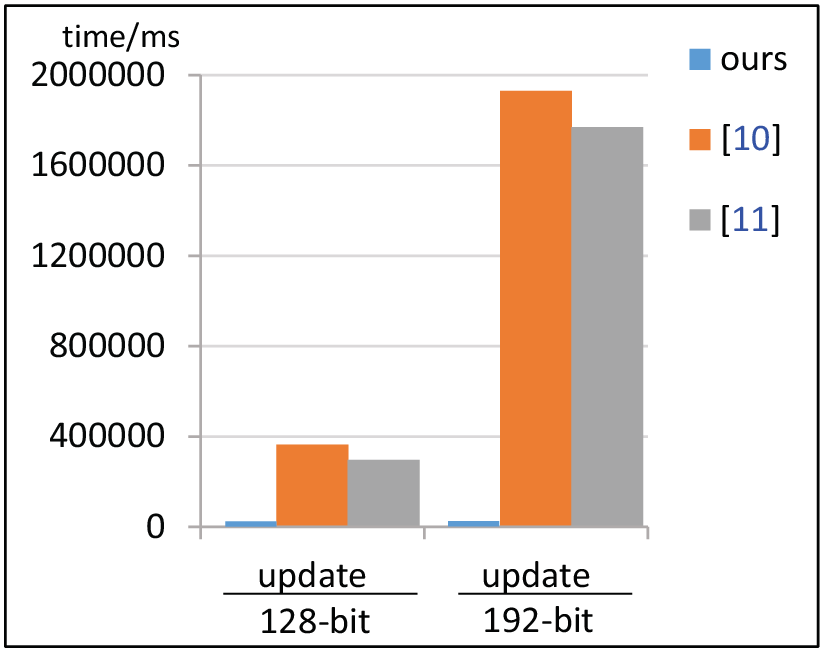}}
\subfigure[Signature size]{\includegraphics[width=0.245\textwidth]{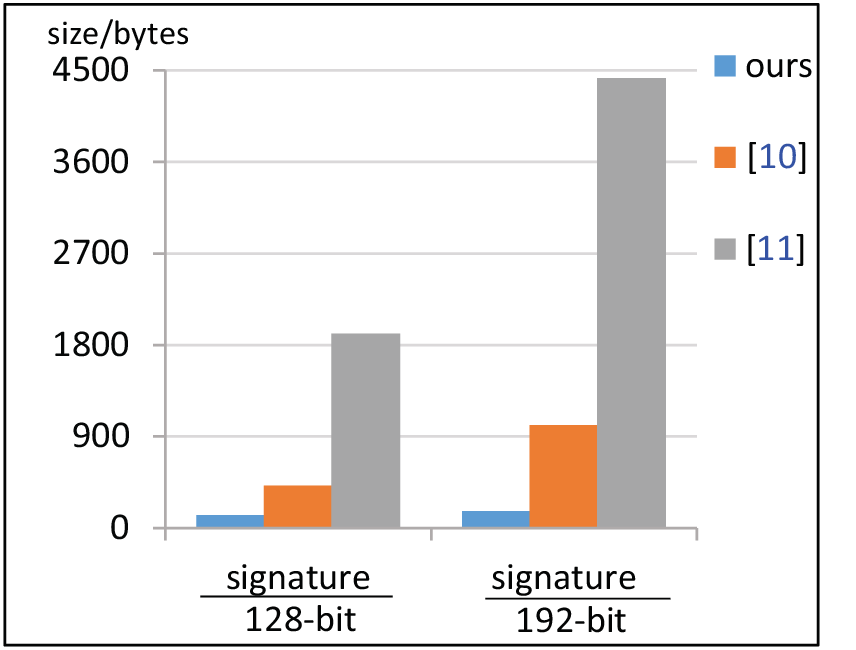}}
\caption*{\bf{Figure 1: Efficiency Comparison }}
\label{fig}
\end{figure*}
\begin{spacing}{1.2}
\end{spacing}
{\it Applications in Proof-of-stake blockchain protocols.} We use the puncturable signature to construct practical proof-of-stake
blockchain protocols that are resilient to LRSL attack. Ouroboros Paros \cite{david2018ouroboros}, a proof-of-stake blockchain protocol, has a real-world implementation in Cardano platform. We present an ideal functionality $\mathcal{F}_{\textsf{PS}}$ of the puncturable signature scheme, and replace $\mathcal{F}_{\textsf{KES}}$ of the forward secure signature scheme in Ouroboros Paros~\cite{david2018ouroboros} protocol with $\mathcal{F}_{\textsf{PS}}$. Then we show that the properties (common prefix, chain quality and chain growth) of Ouroboros Paros protocol remain true in the replaced setting.  Finally, we prove our puncturable signature scheme can securely realize $\mathcal{F}_{\textsf{PS}}$ in the universal composability (UC) security framework. Without loss of generality, we also discuss how to apply our construction to other proof-of-stake blockchain protocols to resist LRSL attack.

{\it Efficiency comparison with forward secure signatures.}
As mentioned above, despite that forward secure signatures can be used to resist LRSL attack, most of the existing schemes have poor performance on key update as well as other operations, often depending on the time period number linearly, which is unsuitable for blockchain application. We conduct experiments evaluating the overhead of deploying our puncturable signature construction and existing forward security signature schemes \cite{itkis2001forward}\cite{malkin2002efficient} at both 128-bit and 192-bit security levels. Figure 1 illustrates the efficiency comparison and the results show that our scheme performs substantially better on signature size, signing and verification efficiency, significantly on key update efficiency, which reduces both communication and computation complexity. In fact, we can replace the ordinary signature with our puncturable signature construction in any other proof-of-stake protocols such as Ouroboros~\cite{kiayias2017ouroboros} and Snow White~\cite{daian2016snow} protocols. Due to the fact that our construction can retain the efficiency of the underlying scheme on signing and verifying, with additional $k$ hash computations, the improved protocols can provide resilience to LRSL attack at almost no additional computing cost.

\subsection{Related Work}

{\it Puncturable signature.} A puncturable signature scheme allows to update its signing key $\sk$ for an arbitrary message $m$ such that the resulting punctured signing key can produce signatures for all messages except for $m$. It is introduced by Bellare et al.~\cite{bellare2016new} as a tool to provide several negative results on differing-inputs obfuscation. However, their construction is based
on indistinguishability obfuscation ~\cite{garg2013candidate} and one-way function, thus, does not yield practical schemes. Moreover, it requires that the punctured signing key is associated with the full signed message. In contrast our construction is based on the $\tau$-SDH assumption, and the string associated with the punctured signing key can be any particular part of the message to be signed (e.g. the prefix of the message), which is more flexible and applicable. Halevi et al.~\cite{halevi2017non} also propose a puncturable signature scheme which is puncturable at any prefix of the signed message. However, their puncturable operation needs to update public keys repeatedly. In practice, it is  inefficient to verify the updated public keys continuously and it is also difficult to let each user in the system maintain other users' public keys updated.
\begin{spacing}{1.1}
\end{spacing}
{\it Delegation related signature.} Policy-based signature, introduced by Bellare et al.~\cite{PKC:BelFuc14}, allows a signer to only sign a message satisfying some policy specified by an authority. Another related primitive, called functional signature, is introduced in the work~\cite{PKC:BoyGolIva14}. In functional signature, only the master signing key can be used to sign any message, while the authority provides the secondary signing keys for functions $f$ that allow the signing of any message in the range of $f$. Delegatable functional signature is introduced in the work~\cite{PKC:BacMeiSch16} and supports the delegation of signing rights with a policy to the designated third party. Append-only signatures (AOS) \cite{kiltz2005append-only} is also a related primitive, in particular, given an AOS signature on a message $m$, one can only compute an AOS signature on any message $m'$ with $m$ as the prefix. Different from above primitives, puncturable signature provides a puncture functionality that may repeatedly update the secret key to {\it revoke} the signing capability for selected messages in a dynamic manning.

\begin{spacing}{1.1}
\end{spacing}
{\it Forward secure signature.} A forward secure signature scheme guarantees the adversary with the compromised secret key cannot forge signatures relative to previous time periods. It is proposed by Anderson~\cite{anderson1997two} and then formalized by Bellare et al.~\cite{bellare1999forward}. Prior forward secure signatures are constructed by either using arbitrary signature schemes in a black-box manner~\cite{malkin2002efficient}\cite{krawczyk2000simple} or modifying the existing signature schemes~\cite{bellare1999forward}\cite{itkis2001forward}\cite{abdalla2000new}. All these forward secure schemes except for~\cite{malkin2002efficient}, the number of time periods $T$ (arbitrarily large) must be set in advance, such that the performance depends on $T$ logarithmically or even linearly. Nevertheless, the performance in~\cite{malkin2002efficient} still depends on the time periods elapsed so far.
\begin{spacing}{1.1}
\end{spacing}
{\it Proof-of-stake blockchain protocols.} Proof-of-stake protocols were first initiated in online forums and subsequently a number of proof-of-stake protocols were proposed and implemented by the academic community. In order to provide forward security (and also achieve resilience against LRSL attack and other long-range attacks), Ouroboros Paros~\cite{david2018ouroboros} and Ouroboros Genesis~\cite{badertscher2018ouroboros} formalize and realize in the universal composition setting a forward secure digital signature scheme, Algorand~\cite{gilad2017algorand} considers it as one of future work and implements ephemeral key pairs in its updated full version~\cite{chen2017algorand}, whereas Snow White~\cite{daian2016snow} and Ouroboros~\cite{kiayias2017ouroboros} adopt a weak adaptive corruption model and cannot avoid LRSL attack. In addition, several countermeasures have been proposed, such as the trusted execution environments~\cite{li2017securing}, checkpointing mechanism~\cite{nxt}, and punishment mechanism revealing the real identity~\cite{li2017securing}, revealing the signing key~\cite{derler2018generic} or slashing the stake \cite{slashing} of the malicious stakeholder.

\section{Preliminaries}
{\it Notation.} Let $\secparam$ denote the security parameter, $\lfloor x \rfloor$ denote the greatest integer less than or equal to $x$, $[n]$ denote the set of the first $n$ positive integers, and $\ppt$ denote probabilistic polynomial time.
For an array $T \in \{0, 1\}^n$, we let $T[i]$ denote the $i$-th bit of the array, if $i \leq n$.

We say a function $\negl(\cdot): \N \rightarrow (0,1)$ is negligible, if for every constant $c \in \N$, $\negl(n) < n^{-c}$ for sufficiently large $n$.

\subsection{Bloom Filter}
A Bloom filter~\cite{bloom1970space} is a probabilistic data structure designed to solve the approximate set membership problem. Specifically, it can be used to test whether an element is a member of the elements set $\mathcal{S}$ from a large universe $\mathcal{U}$ rapidly and space-efficiently, where the false-positive error is allowed. Put simply, for any $s \in \mathcal{S}$
the Bloom filter always outputs 1 (``yes''), and for any $s \notin \mathcal{S}$ it outputs 1 only with some  small probability.

\begin{definition}[Bloom Filter\cite{derler2018bloom}]\label{defn:bloomf}
A Bloom filter $\BF$ for set $\mathcal{U}$ consists of a tuple of PPT algorithms $(\gen, \update, \check)$.
\end{definition}
\begin{itemize}[leftmargin=*]
 \item $\gen(\ell, k)$: On input two integers $\ell, k \in \mathbb{N}$, the algorithm first samples $k$ universal hash functions $H_1, \ldots, H_k$ which generate a uniform distribution over the range $\{1,...,l\}$, then initializes the $\ell$-bit array $T = 0^\ell$ and sets $H = \{H_i\}_{i\in [k]}$, and finally returns $(H, T)$.
 \item $\update(H, T, u)$: To add an element $u \in \mathcal{U}$ to $\mathcal{S}$, the algorithm updates $T[H_i(u)] = 1$ for all $i\in [k]$, leaves other bits of $T$ unchanged, and returns $T$.
 \item $\check(H, T, u)$: To check whether an element $u$ is in $\mathcal{S}$, the algorithm returns 1 (``yes") if $\wedge_{i \in [k]} T[H_i(u)] = 1$, otherwise returns 0 (``no").
\end{itemize}

\noindent{\it Properties of Bloom filter.} The main properties of Bloom filter used in this paper can be summarized as follows:
\begin{description}[leftmargin=*]
\item \textbf{Correctness}: A Bloom filter always outputs 1 for elements that have already been added to the set $\mathcal{S}$. More precisely, if $s \in \mathcal{U}$ has been added to $\mathcal{S}$ and the array $T$ has been updated by running $\update(H, T, s)$, then we have $\prob[\check(H, T, s)=1] = 1.$

\begin{spacing}{0.3}
\end{spacing}

\item \textbf{False-positive errors}: A Bloom filter may yield a false-positive error, where an element $s^*$ is erroneously judged to be in $\mathcal{S}$ by the $\check$ algorithm even though $s^*$ has not yet been added to the set $\mathcal{S}$. More precisely, the error probability is bounded by $\prob[\check(H, T, s^*) = 1] \approx (1 - e^{k n / \ell})^k$ for any $s^* \notin \mathcal{S}$.

  \begin{spacing}{0.3}
\end{spacing}

\item \textbf{Space advantages}:  The size of representation $T$ is a constant number (i.e., $l$ bits) independent of the size of $\mathcal{S}$. However, as the size of $\mathcal{S}$ increases, false-positive probability of Bloom filter would increase.

\end{description}
\begin{spacing}{1.2}
\end{spacing}
{\it Discussion on the choice of parameters.} In bloom filter, assuming the optimal number of hash function $k$ to achieve the smallest false-positive probability $pr$, we obtain a size of the bloom filter given by $\ell = -\frac{n\ln pr}{(\ln 2)^2}$, and the optimal $k$ is given by $k = \lceil \frac{\ell}{n}\ln 2 \rceil$. For example, when $pr = 10^{-3}$ and $n = 2^{20}$, we have $\ell \approx 2$ MB and $k = 10$.

\ignore{Looking ahead the puncturable signature in Section 3, one selected curve would yield a signature of size about 2$|\mathbb{G}_1|$ bits. The public key and secret key are of size about $|\mathbb{G}_1|$ bits and $|m| \cdot |\mathbb{G}_1| + |m|$ bits respectively, where $|a|$ denotes the binary length of a string $a$. Note that, the secret key size may be very large as discussed in~\cite{derler2018bloom}, however, we need to emphasize that the secret key generation can be done off-line, and the initial secret key has its maximum size, since every puncture (i.e. corresponding to one signature operation) reduces that size. Moreover, the key update (puncture) process is very efficient and just takes a few ($k$) hash computations and several deletion operations. Finally, through the HIBS-based puncturable signature in Section~\ref{sec:hibs}, we can further largely reduce the secret size by reducing the maximum number of puncturing at the cost of frequent key delegations.}

\subsection{Bilinear Groups and $\tau$-SDH Assumption}
We say that $\mathcal{G}$ is a bilinear group generator if given the security parameter $\secparam$, it outputs a tuple $\params = (p, e, \psi, \G_1, \G_2, \G_T,$ $ P_1, P_2)$, where $\G_1$, $\G_2$ and $\G_T$ are three groups having prime order $p$, $P_i$ is the generator of $\G_i$ for $i \in \{1,2\}$, and  $e: \G_1 \times \G_2 \rightarrow \G_T$ is a non-degenerate map satisfying:
\begin{itemize}[leftmargin=*]
 \item Bilinearity: For any $(P, Q) \in \G_1 \times \G_2$ and $a, b \in \Z^*_p$, $e(aP, bQ) = e(P, Q)^{ab}$ holds.
 \item Non-degeneracy: $e(P, Q) = 1$ for any $Q \in \G_2$ iff $P = \mathcal{O}$.
 \item Computability: There exists an efficient algorithm to compute $e(P, Q)$ for any $(P,Q) \in \G_1 \times \G_2$.
 \item  There exists an publicly and efficiently computable isomorphism $\psi: \G_2\rightarrow\G_1$ such that $\psi(P_2)=P_1$.
\end{itemize}

The security of our scheme is based on $\tau$-strong Diffie-Hellman ($\tau$-SDH) assumption, which was previously formalized in \cite{boneh2004short} and \cite{mitsunari2002a}.

\begin{definition}[$\tau$-Strong Diffie-Hellman Assumption ($\tau$-SDH)]\label{defn:dh}
 Let $\params = (p, e, \psi, \G_1, \G_2, \G_T, P_1, P_2) \leftarrow \mathcal{G}(1^\lambda)$ and $(P_1, P_2, \alpha{P_2}, \alpha^2{P_2}, ... ,\alpha^{\tau}{P_2})$ be a $\tau + 2$ tuple for $\alpha \in \Z^*_p$.
We say $\tau$-SDH assumption holds if for any $\ppt$ adversary $\A$, $\prob[(h,\frac{1}{h+\alpha}P_1) \leftarrow \A(P_1, P_2, \alpha{P_2}, \alpha^2{P_2}, ... ,\alpha^{\tau}{P_2})] \leq \negl(\secparam)$ with $h \in \Z^*_p$.
\end{definition}

\ignore{According the analysis result in \cite{boneh2004short}, at least $\sqrt{r/\tau}$ generic group operations are required to solve the $\tau$-SDH problem such that $\tau < o(\sqrt[r]{3})$ in the generic group model.}
\ignore{
\subsection{Hierarchical Identity-Based Signature}

We recall the syntax and security definition of hierarchical identity-based signature (HIBS) \cite{chow2004secure}\cite{gentry2002hierarchical}.

\begin{definition}[Hierarchical Identity-based Signature(HIBS)]
A $t$-level hierarchical identity-based signature scheme with identity space $\mathcal{D} = \mathcal{D}_1 \times \cdots \times \mathcal{D}_t$ consists of the following algorithms:
\end{definition}
\begin{itemize}[leftmargin=*]
 \item $\setup(1^\lambda)$: On input a security parameter $\lambda$, the algorithm outputs the master public key ${\sf mpk}$ and the root secret key ${\sk_\varepsilon}$.
 \item $\delegate(\sk_\tau, d)$: On input the secret key $\sk_\tau$ ($\tau \in \mathcal{D}_1 \times \cdots \times \mathcal{D}_{i-1}$) and $d \in \mathcal{D}_i$, the algorithm outputs a secret key $\sk_{\tau|d}$.
 \item $\sign(\sk_\tau, m)$: On input the secret key $\sk_\tau$ and a message $m$, the signing algorithm outputs a signature $\sigma$.
 \item $\verify({\sf mpk}, \tau, m, \sigma)$: On input the identity $\tau$, a signature $\sigma$ and message $m$, the verification algorithm outputs 1 if $\sigma$ is a valid signature of message $m$ signed by $\tau$. Otherwise, it outputs 0.
\end{itemize}

\begin{definition}[Correctness]\label{defn:hibscorr}
For any message $m$ and any $({\sf mpk}, \sk_\varepsilon)$ $\leftarrow$ $\setup(1^\secparam)$, we have $\verify({\sf mpk}, \tau, m,$ $ \sign(\sk_\tau, m))$ = 1.
\end{definition}

{\it Security Definition.} For the security definition of HIBS, we use the following experiment to describe it. Formally, for any $\ppt$ adversary $\A$, we consider the experiment $\expt^{\sf hibs}_\A(1^\secparam)$ between adversary $\A$ and challenger $\mathcal{C}$:
\begin{enumerate}[leftmargin=*]
 \item \textbf{Setup}: $\mathcal{C}$ computes $(\sf mpk, \sk_\varepsilon) \leftarrow \setup(1^\secparam)$ and sends $\sf mpk$ to adversary $\A$. $\mathcal{C}$ also initializes two empty sets $Q_{\sf sign}$ and $Q_{\sf key}$.
 \item \textbf{Queries}: Proceeding adaptively, adversary $\A$ can submit the following two kinds of queries:
 \begin{itemize}[leftmargin=*]
  \item {\sf{Signing queries}}: On input identity $\tau$ and message $m$ from adversary $\A$, $\mathcal{C}$ computes $\sigma \leftarrow \sign(\sk_{\tau}, m)$ and sends back $\sigma$. $\mathcal{C}$ also puts $(\tau, m)$ into set $Q_{\sf sign}$.
  \item {\sf{Key queries}}: On input identity $\tau$ from adversary $\A$, $\mathcal{C}$ returns a secret key $\sk_{\tau}$ by computing $\delegate(\sk_\epsilon, \tau)$. $\mathcal{C}$ also puts $\tau$ into set $Q_{\sf key}$.
 \end{itemize}
 \item \textbf{Forgery}: Adversary $\A$ outputs a forgery $(\tau^*, m^*, \sigma)$.
\end{enumerate}
We say that the forgery wins experiment $\expt^{\sf hibs}_\A(1^\secparam)$ if there does not exist $\tau \in Q_{\sf key}$, such that $\tau$ is $\tau^*$ or prefix of $\tau^*$, and
$$(\tau^*, m^*) \notin Q_{\sf sign} \wedge\verify({\sf mpk}, \tau^*, m^*, \sigma) = 1$$

\begin{definition} \label{defn:hibs}
We say the HIBS scheme is unforgeable, if for any $\ppt$ adversary $\A$, the probability of winning experiment $\expt^{\sf hibs}_\A(1^\secparam)$ is $\negl(\secparam)$, where the probability is over the randomness of the challenger and adversary.
\end{definition}}


\section{Puncturable Signatures}
In this section, we formalize the syntax and security definition of puncturable signature, and then we propose a puncturable signature scheme and prove its security under the $\tau$-SDH assumption.

\subsection{Syntax and Security Definition}
\ignore{Let space $M$ be $\{0, 1\}^\secparam$, and the message space is $\mathcal{M} = M^*$. A puncturable signature scheme $\Sigma$ consists of a tuple of $\ppt$ algorithms $\Sigma$ = $($$\setup,$ $\punc,$ $\sign,$ $\verify$$)$ with descriptions as follows:}
Let the message space be $\mathcal{M}$. A puncturable signature scheme $\Sigma$ consists of a tuple of $\ppt$ algorithms $\Sigma = (\setup, \punc,$ $\sign, \verify)$ with descriptions as follows:
\begin{itemize}[leftmargin=*]
 \item $\setup(1^\secparam, \ell , k)$: On input the security parameter $\secparam$, parameters $\ell$ and $k$ for the Bloom filter, the setup algorithm outputs the public key $\vk$ as well as the secret key $\sk$.
 \item $\punc(\sk, str)$: On input the secret key $\sk$ and a string $str \in \mathcal{M}$, the puncturing algorithm outputs the updated secret key $\sk'$ which is also called the punctured secret key. We also say that $str$ has been punctured.
 \item $\sign(\sk, m)$: On input the secret key $\sk$ and a message $m$, it outputs a signature $\sigma$.
 \item $\verify(\vk, m, \sigma)$: On input the public key $\vk$, a signature $\sigma$ and message $m$, the verification algorithm outputs 1 if $\sigma$ is a valid signature for $m$. Otherwise, it outputs 0.
\end{itemize}

{\it Correctness of puncturable signatures.} Intuitively, the correctness requires that (1) signing is always successful with the initial, non-punctured secret key, (2) signing fails when attempting to sign a message with a prefix that has been punctured, and (3) the probability that signing fails is bounded by some non-negligible function, if the prefix of the message to be signed has not been punctured.

\begin{definition}[Correctness] \label{defn:puncor}
For any message $m$ with prefix $m'$, any $(\sk_{\sf init}, \vk) \leftarrow \setup(1^\secparam)$, and any sequence of invocations of $\sk \leftarrow \punc(\sk, .)$, we have
\begin{enumerate}[leftmargin=*]
\item $\verify(\vk, m, \sign(\sk_{\sf init}, m)) = 1$, where $\sk_{\sf init}$ is the initial, non-punctured secret key.
\item If $m'$ has been punctured, then $\verify(\vk, m ,\sign(\sk', m)) = 0$, where $\sk' \leftarrow \punc(\sk, m')$.
\item Otherwise, it holds that $\Pr[\verify(\vk, m, \sign(\sk, m))$ $\neq 1] \leq$ $\mu(\ell, k)$, where $\mu(\cdot)$ is a (possibly non-negligible) bound.
\end{enumerate}
\end{definition}

\ignore{
\begin{definition}[{\color{red}Correctness}] \label{defn:puncor}
For any $((\sk, \emptyset), \vk)$$\leftarrow$$\setup(1^\secparam)$, and any sequence of invocations of $(\sk', P')$$\leftarrow$$\punc((\sk, P), .)$, we have
\begin{itemize}[leftmargin=*]
 \item $\verify(\vk, m, \sign((\sk_{\sf init}, \emptyset), m)) = 1$, where $\sk_{\sf init}$ is the initial, non-punctured secret key,
 \item  For any $m' \in P'$, then $\verify(\vk, m, \sign((\sk', P'), m)) = 0$ for any message $m$ with prefix $m'$,
 \item For any $m' \notin P'$, we have $\Pr[\verify(\vk, m, \sign((\sk', P'), m))$ $= 0] \leq$ $\mu(\secparam)$ for any message $m$ with prefix $m'$, where $\mu(\cdot)$ is some (possibly non-negligible) bound.
\end{itemize}
\end{definition}

{\it Correctness of puncturable signature.} We start by defining the basic correctness of a puncturable signature scheme. Intuitively, it requires that (1) signing algorithm yields a failure output $\bot$ when attempting to sign a message which has a prefix in the puncturing set $P$, and (2) otherwise, the signature is always valid, even if the secret key gets updated when prefix of that message is punctured later.

\begin{definition}[Correctness] \label{defn:puncor}
For any message $m$ with prefix $m'$, any $((\sk, \emptyset), \vk) \leftarrow \setup(1^\secparam)$, and any sequence of invocations of $(\sk', P') \leftarrow \punc((\sk, P), .)$, we have
\begin{itemize}[leftmargin=*]
 \item If $m' \in P'$, then $\sign((\sk', P'), m) = \bot$,
 \item Otherwise, we have $\verify(\vk, m, \sign((\sk', P'), m)) = 1$,
\end{itemize}
\ignore{\begin{enumerate}
\item For any $m = m'\cdots$, if $m' \in P'$, then $\sign((\sk', P'), m) = \bot$.
\item For any $m = m'\cdots$, if $m' \notin P'$ when $m$ is signed by $\sign((\sk', P'), m)$ with $\sigma$ as the result, then $\verify(\vk, \sigma, m) = 1$.
\end{enumerate}}
\end{definition}

For puncturable signature based on Bloom filter, we define an extended variant of correctness which additionally requires that (1) signing is always successful with the initial, non-punctured secret key, and (2) the probability that signing fails is bounded by some non-negligible function, when an unpunctured secret key at the prefix of signed message is used. We extend the correctness definition as follows:

\begin{definition}[Correctness] \label{defn:puncor}
For any message $m$ with prefix $m'$, any $((\sk_{\sf init}, \emptyset), \vk) \leftarrow \setup(1^\secparam)$, and any sequence of invocations of $(\sk', P') \leftarrow \punc((\sk_{\sf init}, P), .)$, we have
\begin{enumerate}[leftmargin=*]
\item $\verify(\vk, m, \sign((\sk_{\sf init}, \emptyset), m)) = 1$, where $\sk_{\sf init}$ is the initial, non-punctured secret key.
\item If $m' \in P'$, then $\sign((\sk', P'), m) = \bot$.
\item Otherwise, it holds that $\Pr[\verify(\vk, m, \sign((\sk', P'), m))$ $\neq 1] \leq$ $\mu(\secparam)$, where $\mu(\cdot)$ is some (possibly non-negligible) bound.
\end{enumerate}
\end{definition}
}

\begin{remark}
\ignore{\emph{We note that the puncturing functionality defined above is only for message-prefix. Since the message space $\mathcal{M}$ is concatenation of equal length bit strings, we can extend the puncturing functionality by augment the puncturing string by an index $i$, i.e. $(i, m)$, so the signing algorithm fails for message $m'$, if there exists $(i, m) \in P$, such that $m'|_{i\secparam}^{(i + 1)\secparam} = m$. For simplicity, we still use prefix-puncturing through this paper, but both the definitions and our constructions can be easily extended to support the extended puncturing functionality.}}

\emph{We notice that the puncturing functionality defined above is for message-prefix, whose length can be determined in specific implementation (e.g., the slot parameter in proof-of-stake blockchain). In specific applications, the message $m$ to be signed can be split into $n$ ($n \geq 1$) parts denoted by $m = m_1||...||m_i||...||m_n$, where different parts may have different lengths and different semantics, for example, $m_1$ denotes the time stamp and the remaining denotes the message specifics. We can extend the puncturing functionality by
puncturing strings at arbitrarily pre-defined position (even the whole message), e.g. $i$-th part, which means the signing algorithm fails for message $m$ = $m_1\|...\|m_i\|...\|m_n$ if $m_i$ has been punctured. For simplicity, we still use prefix-puncturing in this paper, but both the definitions and our constructions can be easily extended to support the general puncturing functionality.}

 \ignore{{\color{red}and in the application to PoS blockchain protocols, the puncturing string is the slot parameter at the postfix of messages.}}

\end{remark}

\ignore{The above correctness definition guarantees that the signature on $m$ with a previously punctured out prefix $m'$ yields an error symbol, and moreover, if that $m'$ is not punctured before $m$ is signed, then the result signature is always valid even if $m'$ can be punctured later.}

\ignore{
\paragraph{Security Definition.} For the security definition of puncturable signature $\Sigma$, we use the following experiment to describe it. Formally, for any $\ppt$ adversary $\A$, we consider the experiment $\expt_\A(1^\secparam)$ between adversary $\A$ and challenger:
\begin{enumerate}
 \item \textbf{Setup}: The challenger computes $(\vk, \sk) \leftarrow \setup(1^\secparam)$ and sends $\vk$ to adversary $\A$. 
     The challenger also initializes two empty sets $Q_{\sf sig} = \emptyset$ and $Q_{\sf punc} = \emptyset$.

 \item \textbf{Query Phase}: Proceeding adaptively, adversary $A$ can submit the following two kinds of queries:
 \begin{itemize}
  \item \textbf{Signature query}: On input message $m$ from adversary $\A$, the challenger computes $\sigma \leftarrow \sign((\sk, \emptyset), m)$ and updates $Q_{\sf sig} = Q_{\sf sig} \cup \{m\}$. Then challenger sends back $\sigma$.
  \item \textbf{Puncture query}: On input message $m$ from adversary $\A$, the challenger computes $(\sk', \{m\}) \leftarrow \punc((\sk, \emptyset), m)$, and updates $Q_{\sf punc} = Q_{\sf punc} \cup \{m\}$. Then challenger sends back $\sk'$.
 \end{itemize}

\item \textbf{Forgery}: Adversary $\A$ outputs a forgery pair $(m^*, \sigma^*)$.
\end{enumerate}
We say that the forgery wins the experiment $\expt_\A(1^\secparam)$ if
$$m^* \notin Q_{\sf sig}\wedge (\exists m \in Q_{\sf punc}, m^* = m \cdots) \wedge \verify(\vk, m^*, \sigma^*) = 1$$
\ignore{
\begin{itemize}
 \item $m = m^* \cdots \wedge m \notin Q_{\sf sig} \wedge \verify(\vk, m, \sigma) = 1$.
 \item $m \neq m^* \cdots \wedge m \notin Q_{\sf sig}$, and there exists $m' \in Q_{\sf punc}$, such that $m = m' \cdots$, and $ \verify(\vk, m, \sigma) = 1$.
\end{itemize}
}
\begin{definition}[Unforgeability with adaptive puncturing] \label{defn:unforg}
We say the puncturable signature scheme $\Sigma$ is unforgeable with adaptive puncturing, if for any $\ppt$ adversary $\A$, the probability of winning experiment $\expt^{\sf ps}_\A(1^\secparam)$ is $\negl(\secparam)$, where the probability is over the randomness of the challenger and adversary.
\end{definition}
}
{\it Security Definition.} Intuitively, the security definition of puncturable signature scheme requires that the adversary cannot forge signatures on messages having been punctured even though the punctured secret key is compromised. Formally, for any $\ppt$ adversary $\A$, we consider the experiment $\expt^{\sf ps}_\A(1^\secparam)$ between $\A$ and the challenger $\mathcal{C}$:
\begin{enumerate}[leftmargin=*]
 \item \textbf{Setup}: $\mathcal{C}$ computes $(\vk, \sk) \leftarrow \setup(1^\secparam)$ and sends $\vk$ to the adversary $\A$. 
     $\mathcal{C}$ initializes two empty sets $Q_{\sf sig} = \emptyset$ and $P = \emptyset$.

 \item \textbf{Query Phase}: Proceeding adaptively, the adversary $\A$ can submit the following two kinds of queries:
 \begin{itemize}[leftmargin=*]
  \item \textbf{Signature query}: On input a message $m$ from the adversary $\A$, $\mathcal{C}$ computes $\sigma \leftarrow \sign(\sk, m)$ and updates $Q_{\sf sig} = Q_{\sf sig} \cup \{m\}$. Then $\mathcal{C}$ sends back $\sigma$.
  \item \textbf{Puncture query}: On input a string $str$, $\mathcal{C}$ updates $\sk$ by running $\punc(\sk, str)$, and updates $P$=$P$ $\cup$ $\{str\}$.
 \end{itemize}
\item \textbf{Challenge Phase}: $\A$ sends the challenging string $m'$ to $\mathcal{C}$, and $\A$ can still submit signature and puncture queries as described in the \textbf{Query phase}.

\item \textbf{Corruption query}: The challenger returns $\sk$ if $m' \in P$ and $\bot$ otherwise.

\item \textbf{Forgery}: $\A$ outputs a forgery pair $(m, \sigma)$.
\end{enumerate}
We say that adversary $\A$ wins the experiment $\expt^{\sf ps}_\A(1^\secparam)$ if $m'$ is the prefix of $m$ and
$m \notin Q_{\sf sig} \wedge \verify(\vk, m, \sigma) = 1.$

\ignore{
\begin{itemize}
 \item $m = m^* \cdots \wedge m \notin Q_{\sf sig} \wedge \verify(\vk, m, \sigma) = 1$.
 \item $m \neq m^* \cdots \wedge m \notin Q_{\sf sig}$, and there exists $m' \in Q_{\sf punc}$, such that $m = m' \cdots$, and $ \verify(\vk, m, \sigma) = 1$.
\end{itemize}
}

\begin{definition}[Unforgeability with adaptive puncturing] \label{defn:unforg}
We say the puncturable signature scheme $\sf ps$ is unforgeable with adaptive puncturing, if for any $\ppt$ adversary $\A$, the probability of winning experiment $\expt^{\sf ps}_\A(1^\secparam)$ is $\negl(\secparam)$, where the probability is over the randomness of the challenger and adversary.
\end{definition}

\subsection{Our Construction}
We present a puncturable signature construction based on the Chinese IBS, an identity-based signature scheme standardized in ISO/IEC 14888-3~\cite{iso14883-3}.

The key idea of our construction is to derive secret keys for all Bloom filter bits $i \in [l]$ using IBS schemes, and then bind the prefix string $m'$ with $k$ positions where the secret keys are used to sign messages with prefix $m'$. In addition, puncturing at $m'$ implies the deletion of keys in the corresponding positions.

Let $(p, e, \psi, \G_1, \G_2, \G_T, P_1, P_2) \leftarrow \mathcal{G}(1^\lambda)$, and $\BF= (\BF.\gen,$ $\BF.\update, \BF.\check)$ be a Bloom filter. Choose a random generator $P_2 \in \G_2$, and set $P_1=\psi(P_2) \in \G_1$. Let $h_1: \mathbb{N}\rightarrow \mathbb{Z}^*_p$ and $h_2 : \{0, 1\}^* \times \mathbb{G}_T \rightarrow \mathbb{Z}^*_p$ be cryptographic hash functions, which we model as random oracles in the security proof. The public parameters are ${\sf params}:= (p, e, \psi, \G_1, \G_2, \G_T, P_1, P_2, h_1, h_2)$ and all the algorithms described below implicitly take $\sf params$ as input. The construction of the puncturable signature scheme $\sf ps = (\setup, \punc,$ $\sign, \verify)$ is the following:
\begin{itemize}[leftmargin=*]
 \item $\setup(1^\secparam, \ell, k)$: This algorithm first generates a Bloom filter instance by running $(\{H_j\}_{j\in [k]}, T) \leftarrow \BF.\gen(\ell, k)$. Then it chooses $s \rsample\mathbb{Z}^*_p$ and outputs
      $$\sk= (T, \{sk_i\}_{i\in [\ell]}),\quad  \vk=(P_{pub}, g, \{H_j\}_{j\in [k]}),$$
where $sk_i=\frac{s}{s+h_1(i)}P_1$, $P_{pub}=sP_2$, and $g = e(P_1, P_{pub})$.

 \item $\punc(\sk, str)$: Given a secret key $\sk = (T, \{\sk_i\}_{i\in [\ell]})$ where $\sk_i = \frac{s}{s+h_1(i)}P_1$ or $\bot$, this algorithm first computes $T'$ = $\BF.\update$ ($\{H_j\}_{j\in [k]}$, $T$, $str$). Then, for each $i\in [\ell]$ define
     $$
\sk'_i = \left\{
             \begin{array}{ll}
             \sk_i, & \text{if} \ T'[i] = 0\\
             \bot, & \text{otherwise}\\
             \end{array}
\right.
$$
where $T'[i]$ denotes the $i$-th bit of $T'$. Finally, the algorithm returns $\sk' = (T', \{\sk'_i\}_{i\in [\ell]})$.

\item $\sign(\sk, m)$: Given a secret key $\sk = (T, \{\sk_i\}_{i\in [\ell]})$ and a message $m$ with the prefix $m'$, the algorithm first checks whether $\BF$.$\check$ ($\{H_j\}$$_{j\in [k]}$, $T$, $m'$$) = 1$ and outputs $\bot$ in this case. Otherwise, note that $\BF$.$\check$($\{H_j\}_{j\in [k]}$, $T$, $m'$) = 0 means there exists at least one index $i_{j} \in \{i_1,\cdots,i_k\}$ such that $\sk_{i_{j}}\neq \bot$, where $i_j = H_j(m')$ for $j\in [k]$. Then it picks a random index $i_{j^*}$ such that
    $\sk_{i_{j^*}} = \frac{s}{s+h_1(i_{j^*})}P_1$. Choose $x \rsample \mathbb{Z}^*_p$ and compute $r = g^x$, then set
    $$h=h_2(m,r),S=(x - h)\sk_{i_{j^*}}.$$
    The output signature on $m$ is $\sigma = (h,S,i_{j^*})$.

 \item $\verify(\vk, m, \sigma)$: Given a public key $\vk = (P_{pub}, g, \{H_j\}_{j\in [k]})$, a message $m$ with the prefix $m'$, and a signature $\sigma = (h,S,i_{j^*})$, the algorithm checks whether
     $$i_{j^*} \in S_{m'} \wedge h = h_2\big(m,e(S,h_1(i_{j^*})P_2+P_{pub})g^{h}\big),$$
     where $S_{m'} = \{H_j(m'): j \in [k]\}$. If it holds, the algorithm outputs 1, and 0 otherwise.

\end{itemize}

\begin{lemma}\label{defn:lemma correctness1}
 Our basic construction described above satisfies correctness (c.f. Definition~\ref{defn:puncor}).
\end{lemma}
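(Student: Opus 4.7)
The plan is to verify each of the three clauses of Definition~\ref{defn:puncor} in turn, reducing the first two to algebraic identities in the bilinear group and the Bloom filter, and the third to the false-positive bound of $\BF$.

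First I would dispose of clause (1). Because $\sk_{\sf init}$ corresponds to $T = 0^\ell$, the check $\BF.\check(\{H_j\},T,m')$ returns $0$ for every prefix $m'$, so the signing algorithm never aborts and the selected index $i_{j^*}$ satisfies $\sk_{i_{j^*}} = \tfrac{s}{s+h_1(i_{j^*})}P_1$. The main calculation is then
\[
 e\bigl(S,\, h_1(i_{j^*})P_2 + P_{pub}\bigr)
 = e\!\left((x-h)\tfrac{s}{s+h_1(i_{j^*})}P_1,\,(h_1(i_{j^*})+s)P_2\right)
 = e(P_1,P_2)^{(x-h)s} = g^{x-h},
\]
so $e(S,h_1(i_{j^*})P_2 + P_{pub})\cdot g^{h} = g^{x} = r$. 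Hence $h_2(m,\cdot)$ recomputes the same $h$ and $\verify$ accepts. Since $i_{j^*} = H_{j^*}(m') \in S_{m'}$ by construction, the index check in $\verify$ also passes.

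Second, for clause (2), I would observe that after $\punc(\sk,m')$ the bits $T'[H_j(m')]$ are all $1$ for $j\in[k]$ by definition of $\BF.\update$, and the corresponding entries $\sk_{H_j(m')}$ have been replaced by $\bot$. Consequently $\BF.\check(\{H_j\},T',m')=1$ and $\sign(\sk',m)$ immediately returns $\bot$ for any $m$ with prefix $m'$; by convention $\verify(\vk,m,\bot)=0$. This step is entirely bookkeeping and propagates through any subsequent invocations of $\punc$, since neither operation can re-enable a bit that is already set to $1$ nor restore a deleted $\sk_i$.

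Third, for clause (3), I would argue that a signing failure on a message whose prefix $m'$ has not been punctured occurs exactly when $\BF.\check(\{H_j\},T,m') = 1$, i.e.\ when $m'$ is a false positive of the current Bloom filter state. By the false-positive property of $\BF$ recalled in the preliminaries, this happens with probability at most
\[
 \mu(\ell,k) \;\approx\; \bigl(1-e^{-kn/\ell}\bigr)^{k},
\]
where $n$ is the number of prefixes that have been punctured so far. Whenever $\BF.\check$ returns $0$ on $m'$, at least one index $H_j(m')$ still satisfies $T[H_j(m')]=0$, so the corresponding $\sk_{H_j(m')}$ is intact; signing then proceeds exactly as in clause (1), and the same pairing identity shows that the produced signature verifies. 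Combining the three cases yields the statement. The only non-routine step is the pairing identity in clause (1); clauses (2) and (3) are immediate consequences of the $\BF$ correctness and false-positive bound.
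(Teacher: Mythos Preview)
Your proposal is correct and follows essentially the same route as the paper: verify the pairing identity $e(S,h_1(i_{j^*})P_2+P_{pub})\,g^{h}=g^{x}=r$ for clause~(1), invoke Bloom-filter correctness for clause~(2), and invoke the false-positive bound for clause~(3). Your write-up is slightly more explicit than the paper's (you spell out that $T=0^\ell$ initially, that the index check $i_{j^*}\in S_{m'}$ passes, and that later punctures cannot undo earlier ones), but the argument is the same.
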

\begin{proof}
If the secret key is initial and non-punctured, we have
\begin{align*}
&e(S,h_1(i_{j^*})P_2+P_{pub})g^h \\
 &= e((x-h).\frac{s}{h_1(i_{j^*})+s}P_1,(h_1(i_{j^*})+s)P_2).e(P_1,P_2)^{hs}\\
  &=  e(P_1,P_2)^{xs-hs}e(P_1,P_2)^{hs} = g^x = r
  \end{align*}
and then $h=h_2\big(m,r)=h_2(m,e(S,h_1(i_{j^*})P_2+P_{pub})g^h\big)$. Therefore, the first requirement of Definition~\ref{defn:puncor} holds. If $m'$ is punctured, by the correctness of Bloom filter, we have $\BF.\check(H, T, m')$ = 1, which means all the secret keys used to sign messages with $m'$ as the prefix have been deleted. Therefore, the signing of the message $m$ with the prefix $m'$ fails and the second requirement of Definition~\ref{defn:puncor} holds. If $m'$ is not punctured, the correctness error of our construction occurs only when $\BF.\check(H, T, m') = 1$, which is essentially identical to the false-positive probability of the Bloom filter, and the third requirement of Definition~\ref{defn:puncor} holds.
\end{proof}

\begin{remark}\label{rem:bound}
\emph{
In this section, we assume that the false-positive probability from Bloom filter is acceptable, which means the number of puncturings supported by our construction is below a pre-set parameter, depending on the application scenarios and the upper bound $n$ of the Bloom filter. In the security proof below, we assume that the number of puncturing queries is also bounded by the pre-set parameter. }
\end{remark}

\begin{theorem}\label{thm:ps}
Assuming that an algorithm $\mathcal{A}$ wins in the $\expt^{\sf ps}_\A(1^\secparam)$ experiment (c.f. Definition~\ref{defn:unforg}) to our construction ${\sf ps}$, with the advantage $\epsilon_0 \geq 10k(q_S+1)(q_S+q_{h_2})/\big(p(1 - (1 - 1/l)^k)\big)$ within running time $t_0$, $\tau$-SDH assumption can be broken for $\tau = q_{h_1}$ within running time $t_2 \leq 120686q_{h_2}t_0/({\epsilon_0}(1-\tau/p))$, where $q_{h_1}$, $q_{h_2}$ and $q_S$ are the maximum query times of the hash function $h_1$, $h_2$ and signing, respectively.
\end{theorem}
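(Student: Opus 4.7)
The plan is to reduce a successful forger $\A$ to a $\tau$-SDH solver by combining the Boneh--Boyen polynomial-exponent embedding (well suited to the $\frac{s}{s+h_1(i)}P_1$ shape of the per-slot keys) with the Pointcheval--Stern forking lemma applied to the Schnorr-like signature component $(h, S)$. Given a $\tau$-SDH instance $(P_1, P_2, \alpha P_2, \ldots, \alpha^\tau P_2)$ with $\tau = q_{h_1}$, the simulator $\B$ implicitly sets $s := \alpha$, publishes $P_{pub} := \alpha P_2$ and $g := e(P_1, P_{pub})$, draws a guess $i^* \rsample [\ell]$, initializes the Bloom filter, and treats $h_1, h_2$ as random oracles. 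Each $h_1(i)$ query is answered by a fresh $w_i \rsample \Z^*_p$; $\B$ then forms $f(x) := \prod_{i \neq i^*}(x + w_i)$ and, by polynomial division over the SDH monomials, computes $\sk_j = \tfrac{\alpha}{\alpha + w_j}P_1$ for every $j \neq i^*$, while $\sk_{i^*}$ stays hidden. The distinctness of the $w_i$ (needed for $f$ to be well-formed) costs the $(1 - \tau/p)$ factor in the running time bound.

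For the query phase, \punc queries only require updating the Bloom filter and deleting the corresponding $\sk_j$, which $\B$ can do. A \sign query on message $m$ with prefix $m'$ is served by picking $i_{j^*}$ uniformly among the undeleted indices in $S_{m'} = \{H_j(m') : j \in [k]\}$; if $i_{j^*} = i^*$ the simulator aborts, otherwise it samples $x \rsample \Z^*_p$, sets $r := g^x$, programs $h_2(m, r)$ to a random $h \in \Z^*_p$, and returns $\sigma := (h, (x-h)\sk_{i_{j^*}}, i_{j^*})$; collisions with prior $h_2$ entries are absorbed into the standard $(q_S + q_{h_2})/p$ bad event. At the challenge phase $\B$ aborts unless $i^* \in S_{m'}$, an event of probability $1 - (1 - 1/\ell)^k$; conditioned on this, the \corr oracle can safely reveal the current $\sk$ because the $i^*$-th slot is precisely one of those that have just been punctured, so no inconsistency is exposed to $\A$.

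Finally, given a valid forgery $(m, h, S, i_{j^*})$, I would invoke the Pointcheval--Stern forking lemma on the $h_2$-query that produced $h$: $\B$ rewinds $\A$, reprograms the same query to a fresh $h' \neq h$, and with probability matching the $10(q_S+1)(q_S+q_{h_2})/p$ Pointcheval--Stern threshold (and $120686\,q_{h_2}/\epsilon_0$ expected rewinding cost) obtains a second forgery $(h', S', i_{j^*})$ with the same commitment $r = g^x$. From $S - S' = (h' - h)\sk_{i_{j^*}}$ and $h' \neq h$, $\B$ recovers $\sk_{i_{j^*}} = \tfrac{\alpha}{\alpha + w_{i_{j^*}}}P_1$, and then extracts $\tfrac{1}{\alpha + w_{i_{j^*}}}P_1 = \tfrac{1}{w_{i_{j^*}}}\bigl(P_1 - \sk_{i_{j^*}}\bigr)$, which together with $w_{i_{j^*}}$ is a $\tau$-SDH witness. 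For this step to help, $\B$ needs $i_{j^*} = i^*$: this is the crux of the accounting, because $\A$ is free to choose \emph{any} of the $k$ positions in $S_{m'}$, contributing the extra factor $k$, while the target-guess success contributes the $1 - (1-1/\ell)^k$ factor; multiplying these with the forking-lemma and hash-collision losses yields exactly the advantage and running-time bounds quoted in the theorem.

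The main obstacle will be showing that the three reasons for aborting---signing queries hitting $i^*$, the challenge string missing $i^*$, and the forged index differing from $i^*$---remain jointly independent enough that the probabilities multiply as claimed, rather than degrading the reduction. The delicate point is that $\A$'s view up to the challenge phase is statistically identical to the real experiment (since $\B$ can always sign at indices other than $i^*$, and the distribution of $i_{j^*}$ is uniform over the undeleted indices of $S_{m'}$, matching the real scheme), so the abort events can be analyzed independently of $\A$'s behavior; only then does the bound $\epsilon_0 \geq 10k(q_S+1)(q_S+q_{h_2})/\bigl(p(1 - (1-1/\ell)^k)\bigr)$ cleanly imply that $\B$'s SDH advantage exceeds $1/2$.
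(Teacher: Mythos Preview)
Your high-level plan---Boneh--Boyen polynomial embedding for the per-slot keys, forking on $h_2$, and accounting for the two losses $1-(1-1/\ell)^k$ and $1/k$---matches the paper's strategy (which the paper splits into an intermediate ``fixed-position'' experiment, Lemmas~\ref{defn:lemma 1} and~\ref{defn:lemma 2}). There are, however, two concrete problems.

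\textbf{The signing-query abort is a real gap.} You propose to sign honestly with $\sk_{i_{j^*}}$ and \emph{abort} whenever the uniformly chosen index equals $i^*$. But the theorem's advantage bound contains no term for this abort, and there is no reason it should be negligible: nothing prevents $q_S$ from being comparable to (or much larger than) $\ell$, and across $q_S$ adaptive queries the cumulative abort probability is uncontrolled. Your last paragraph identifies this as ``the main obstacle'' and hopes the three abort probabilities just multiply, but they do not---the first one simply should not be there. The paper eliminates it by using the standard Schnorr backpatching for \emph{every} signature query: sample $S \rsample \G_1$ and $h \rsample \Z_p^*$ first, set $r := e(S,\,h_1(i_{j^*})\hat P_2 + \hat P_{pub})\cdot g^{h}$, and program $h_2(m,r) := h$. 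This produces a correctly distributed $(h,S,i_{j^*})$ at any index, including $i^*$, with no secret key needed; the only residual error is the $h_2$-collision term already absorbed into the Pointcheval--Stern constant.

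\textbf{The generator change and the extraction are inconsistent.} You have $\B$ publish $P_{pub}=\alpha P_2$ over the \emph{original} generators and then ``by polynomial division'' compute $\sk_j=\tfrac{\alpha}{\alpha+w_j}P_1$. That element is itself an SDH witness and cannot be obtained from the instance. The Boneh--Boyen step requires re-basing: set $\hat P_2:=f(\alpha)P_2$ with $f(y)=\prod_{i\neq i^*}(y+w_i)$ (degree $\tau-1$, so computable from the instance), $\hat P_1:=\psi(\hat P_2)$, $\hat P_{pub}:=\alpha\hat P_2$; then $\sk_j=\tfrac{\alpha}{\alpha+w_j}\hat P_1=\alpha f_j(\alpha)P_1$ is a polynomial expression in $\alpha$ of degree $\leq \tau-1$ and can be evaluated. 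Consequently, after forking you recover $\tfrac{1}{w^*+\alpha}\hat P_1=\tfrac{f(\alpha)}{w^*+\alpha}P_1$, not $\tfrac{1}{w^*+\alpha}P_1$, so your one-line formula $\tfrac{1}{w_{i^*}}(P_1-\sk_{i^*})$ does not apply. A second long division $f(y)=\gamma(y)(y+w^*)+\gamma_{-1}$ is needed to strip the $f(\alpha)$ factor, and the $(1-\tau/p)$ term in the running-time bound comes from the event $\gamma_{-1}\neq 0$, i.e.\ $w^*\notin\{w_1,\ldots,w_{\tau-1}\}$, not from the $w_i$ being pairwise distinct as you wrote.
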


%
\begin{proof}\renewcommand{\qedsymbol}{$\blacksquare$}
In order to prove the security of our scheme, we consider a particular adversary $\mathcal{B}$ with a fixed position against our signature scheme in a variant of the above experiment $\expt^{\sf ps}_\A(1^\secparam)$, denoted by $\expt^{\sf fps}_\A(1^\secparam)$. Specifically, in the \textbf{Setup}, the challenger returns system parameters together with a fixed position $i^* \in [l]$; the following \textbf{Query Phase} remains unchanged and the \textbf{Challenge Phase} can be omitted; in the \textbf{Corruption} query, the challenger only returns the current secret key by excluding the key at position $i^*$. We say $\mathcal{B}$ wins the experiment $\expt^{\sf fps}_\A(1^\secparam)$, if $\mathcal{B}$ outputs the message $m$ together with a valid signature ($h,S,i^*$) on $m$.

We sketch the proof in two steps as in \cite{barreto2005efficient}\cite{PKC:ChoChe03}. First, we prove there exists an algorithm $\mathcal{B}$ that wins in the $\expt^{\sf fps}_\A(1^\secparam)$ experiment with a non-negligible advantage, if the adversary $\mathcal{A}$ has a non-negligible advantage against our signature scheme ${\sf ps}$ in the $\expt^{\sf ps}_\A(1^\secparam)$ experiment (c.f. Lemma~\ref{defn:lemma 1}). Then, assuming the existence of $\mathcal{B}$, we can construct an algorithm $\mathcal{C}$ that breaks the $\tau$-SDH assumption (c.f. Lemma~\ref{defn:lemma 2}).

\begin{lemma}\label{defn:lemma 1}
Assuming that an algorithm $\mathcal{A}$ wins in the $\expt^{\sf ps}_\A(1^\secparam)$ experiment (c.f. Definition~\ref{defn:unforg}) to our construction ${\sf ps}$, with a probability $\epsilon_0$ within running time $t_0$, there exists an algorithm $\mathcal{B}$ that wins in the $\expt^{\sf fps}_\A(1^\secparam)$ experiment as described above to ${\sf ps}$ which has a probability $\epsilon_1 \geq  \epsilon_0(1 - (1 - 1/l)^k)/k$ within running time $t_1 \leq t_0$.
\end{lemma}

\begin{proof}
Suppose there exists such an adversary $\mathcal{A}$, and we construct a simulator $\mathcal{B}$ that simulates an attack environment and uses $\mathcal{A}$'s forgery to win in its own $\expt^{\sf fps}_\A(1^\secparam)$ experiment. The simulator $\mathcal{B}$ can be described as follows:

 \begin{itemize}[leftmargin=*]
\item{\bf Invocation.} $\mathcal{B}$ is invoked on a given position $i^* \in [l]$.

\item{\bf Queries.} $\B$ answers adaptive queries from $\A$ as follows:

 \begin{itemize}[leftmargin=*]
 \item $\mathcal{B}$ makes the $\sf Setup$ query and forwards all the returned parameters to $\mathcal{A}$ for $\mathcal{A}$'s {\sf Setup} query.

\item Before $\mathcal{A}$ outputs the challenging string, $\mathcal{B}$ just forwards the queries of $\mathcal{A}$, including $\sign$, $\punc$, $h_1$ and $h_2$, to its experiment and returns the result to $\mathcal{A}$.

\item When $\mathcal{A}$ outputs the challenging string denoted by $m'$ after the series of queries, $\mathcal{B}$ checks whether $i^*$ $\in$ \{$H_j(m'): j \in [k]$\} and aborts if this does not hold. Otherwise, $\mathcal{B}$ provides the simulation for $\mathcal{A}$ as follows. For the queries $h_1$, $h_2$, $\punc$ and $\sign$, $\mathcal{B}$ just passes them to its challenger and returns the result as before. While for $\sf Corruption$ query, $\mathcal{B}$ firstly checks whether $m' \in P$ and returns $\emptyset$ if this does not hold. Otherwise, $\mathcal{B}$ makes $\sf Corruption$ query in its experiment, and returns the response $\sk$ to $\mathcal{A}$. 
 \end{itemize}
 \end{itemize}
\ignore{If the simulation of $\mathcal{B}$ does not abort, it is easy to verified that $\mathcal{B}$ provides a perfect simulation for $\mathcal{A}$ according to the above discussion.}

Eventually, $\mathcal{A}$ outputs a valid signature $(m, \sigma = (h,S,j^*))$, where $m'$ is the prefix of $m$. If $j^* = i^*$, then $\mathcal{B}$ sets $(m, \sigma)$ as its own output and apparently $\mathcal{B}$ also wins in its $\expt^{\sf fps}_\A(1^\secparam)$ experiment.

In the simulation described above, there are two events that cause $\B$ to abort: (1) $i^* \notin \{H_j(m'): j \in [k]\}$ for the challenging string $m'$; (2) $i^* \neq j^*$ for the forged signature $\sigma = (h,S,j^*)$.

\ignore{Assume that the hash functions in Bloom filter selects each position in array with equal probability. The probability that a certain position is not set to 1 by any of the $k$ hash functions is $(1 - 1 / \ell)^k$. Therefore, the probability that $\B$ does not abort in challenge phase is $P_1 = 1 - (1 - 1 / \ell)^k$.}

Recall that the $k$ hash functions in Bloom filter are sampled universally and independently, and thus each position in the array is selected with an equal probability. Besides, $i^*$ is invisible and looks random to $\mathcal{A}$, then the selection of $m'$ is independent of $i^*$. Therefore the probability that $i^* \notin \{H_j(m'): j \in [k]\}$ is $(1 - 1 / \ell)^k$. Similarly, the second event $i^* \neq j^*$  happens with probability  $1-1 / k$. Combing these, with probability $\epsilon_1 \geq  \epsilon_0(1 - (1 - 1/l)^k)/k$, $\B$ completes the whole simulation without aborting and wins in the $\expt^{\sf fps}_\A(1^\secparam)$ experiment.
\end{proof}
\begin{lemma}\label{defn:lemma 2}
Assuming that an algorithm $\mathcal{B}$ wins in the $\expt^{\sf fps}_\A(1^\secparam)$ experiment to our construction ${\sf ps}$, with an advantage $\epsilon_1 \geq 10(q_S+1)(q_S+q_{h_2})/p$ within running time $t_1$, there exists an algorithm $\mathcal{C}$ that breaks the $\tau$-SDH assumption for $\tau = q_{h_1}$ within running time $t_2 \leq 120686q_{h_2}t_1/({\epsilon_1}(1-{\tau}/{p}))$, where $q_{h_1}$, $q_{h_2}$ and $q_S$ are the maximum query times of the hash function $h_1$, $h_2$ and signing, respectively.
\end{lemma}
\begin{proof}
Suppose there exists such an adversary $\B$, and we construct a simulator $\mathcal{C}$ that simulates an attack environment and uses $\mathcal{B}$'s forgery to break the $\tau$-SDH assumption. The simulator $\mathcal{C}$ can be described as follows:

\begin{itemize}[leftmargin=*]
 \item {\bf Invocation.} Given a random instance $($$P_1$, $P_2$, $\alpha{P_2}$, $\alpha^2{P_2}$,...,$\alpha^{\tau}{P_2}$$)$ of the $\tau$-SDH assumption as input, $\mathcal{C}$ aims to find a pair ($h,\frac{1}{h+\alpha}P_1$) for some $h\in Z^*_p$.

 \item {\bf Setup.} $\mathcal{C}$ randomly samples $\tau-1$ elements $w_1$,$w_2$,$...$,$w_{\tau-1}$ $\xleftarrow{R}$ $Z^*_p$, where $w_i$ ($i \in [\tau -1]$) will be used as the responses to $\B$'s $h_1$ queries. $\mathcal{C}$ expands the polynomial $f(y) = \prod_{i=1}^{\tau-1}(y+w_i)$ to obtain the the coefficients $c_0,c_1,...,c_{\tau-1} \in Z^*_p$ such that $f(y) = \sum_{i=0}^{\tau-1}c_iy^i$ .

  \ \ \  Then $\mathcal{C}$ sets $\hat{P_2}=\sum_{i=0}^{\tau-1}c_i\alpha^iP_2 = f(\alpha)P_2 \in \G_2$ and $\hat{P_1} = \psi (\hat{P_2}) \in \G_1$ to be new generators of $\G_2$ and $\G_1$, respectively. The master public key is set to $\hat{P}_{pub} = \sum_{i=1}^{\tau}c_{i-1}(\alpha^iP_2)$ such that $\hat{P}_{pub}=\alpha \hat{P_2}$. Note that the master secret key is the unknown $\alpha$.

  \ \ \ To provide the secret keys corresponding to positions having been queried to $h_1$, $\mathcal{C}$ expands $f_i(y)=f(y)/(y+w_i)=\sum_{i=0}^{\tau-2}d_iy^i$ and
$$
            \sum_{i=0}^{\tau-2}d_i\psi(\alpha^{i+1}P_2) = \alpha f_i(\alpha)P_1 = \frac{\alpha f(\alpha)P_1}{\alpha+w_i} = \frac{\alpha}{\alpha+w_i}\hat{P_1}\ \ (1)
$$

 Therefore, the $\tau-1$ pairs ($w_i,\frac{\alpha}{\alpha + w_i}\hat{P_1}$) for $i \in [\tau -1]$ can be computed using the left member of equation (1).

  \ \ \ Then $\mathcal{C}$ provides the parameters $(p$, $e$, $\psi$, $\G_1$, $\G_2$, $\G_T$, $\hat{P}_1$, $\hat{P}_2$, $h_1$, $h_2)$ to $\B$. In addition, $\mathcal{C}$ also generates a Bloom filter as $(\{H_j\}_{j\in [k]}, T) \leftarrow \BF.\gen(\ell, k)$, and outputs the public key $\vk=(\hat{P}_{pub}, g, \{H_j\}_{j\in [k]})$ and challenging position $i^*$ to $\B$, where $g=e(\hat{P}_1,\hat{P}_{pub})$.
Next, $\mathcal{C}$ is ready to answer $\B$'s queries during the simulation.  For simplicity, we assume as in ~\cite{PKC:ChoChe03} that for any $i \in [l]$, $\B$ queries $h_1(i)$ at most once and any query involving $i$ is proceeded by the RO query $h_1(i)$, which means that $\B$ has to query $h_1(i)$ before he can obtain the signature ($h,S,i$) from \textbf{Signature} query and the secret key $sk_i$ from \textbf{Corruption} query by using a simple wrapper of $\B$.


  \item {\bf Hash function queries.} $\mathcal{C}$ initializes a counter $t$ to 1.
 \begin{itemize}[leftmargin=*]
  \item $h_1: [l] \rightarrow \mathbb{Z}^*_p$: On input $i \in [l]$, $\mathcal{C}$ returns a random $w = w^*\xleftarrow{R}\mathbb{Z}^*_p$ if $i=i^*$. Otherwise, $\mathcal{C}$ returns $w = w_t$ and sets $t = t+1$. In both cases, $\mathcal{C}$ stores ($i,w$) in a list $L_1$.

  \item $h_2: \{0, 1\}^* \times \G_T \rightarrow \Z^*_p:$ On input $(m, r)$, $\mathcal{C}$ outputs $h$ if $(m, r, h)$ is in the list $L_2$ (initialized to be empty). Otherwise, it outputs a random element $h$ and stores ($m,r,h$) in $L_2$.
 \end{itemize}
      Note that, according to the query of $h_1$ and the computation in the setup phase, $\mathcal{C}$ knows the secret key $sk_i = \frac{\alpha}{\alpha + w_i}\hat{P_1}$ for $i \neq i^*$.

 \item {\bf Query Phase.} $\mathcal{C}$ answers adaptive signing and puncturing queries from $\B$ as follows:
 \begin{itemize}[leftmargin=*]
  \item {\bf Signature query}: On input a message $m$ with prefix $m'$, $\mathcal{C}$ first checks $\BF.\check(\{H_j\}_{j\in [k]}, T, m') = 1$ and outputs $\bot$ in this case. Otherwise, there exists at least one index $i_j \in \{i_1, \ldots, i_k\}$ such that $\sk_{i_j} \neq \bot$, where $i_j = H_j(m')$, for $j \in [k]$. $\mathcal{C}$ picks a random $j \in \{i_1, \ldots, i_k\}$, $S \xleftarrow{R}  G$ and $h \xleftarrow{R}  \Z^*_p$, computes and sets $r = e(S,h_1(j)\hat{P}_2+\hat{P}_{pub}).e(\hat{P}_1,\hat{P}_{pub})^h$, and backpatches to define $h_2(m, r) = h$. Finally, $\mathcal{C}$ stores $(m, r, h)$ in $L_2$, and returns the signature $\sigma = (h, S, j)$ ($\mathcal{C}$ aborts in the unlikely collision event that $h_2(m, r)$ is already defined by other query results of $\sign$ or $h_2$, the probability of which is negligible since $r$ is random~\cite{JC:PoiSte00}).

  \item {\bf Puncture query}: On input a string $str$, $\mathcal{C}$ first updates $T' = \BF.\update(\{H_j\}_{j\in [k]}, T, str)$. Then for each $i \in [\ell]$, update
$$
\sk'_i = \left\{
             \begin{array}{ll}
             \sk_i, & \text{if} \ T'[i] = 0\\
             \bot, & \text{otherwise}\\
             \end{array}
\right.
$$
The updated signing key is $\sk' = (T', \{\sk'_i\}_{i\in [\ell]})$.
 \end{itemize}

 \ignore{\item \bf{Challenge Phase}: On input challenge message prefix $m^*$, $\mathcal{C}$ aborts the simulation if $i^* \in S_{m^*}$, where for any $i \in S_{m^*}$, we have $T[i] = 0$ for $T = \BF.\update(\{H_j\}_{j\in [k]}, T, m^*)$. Otherwise, puncture $m^*$ as described above.}

\item \textbf{Corruption query}: For {\sf Corruption} query, $\mathcal{C}$ recovers the matching pair ($i,w$) from $L_1$ and returns the previously computed $\frac{\alpha}{\alpha + w}\hat{P}_1$.

\item {\bf Forgery.} If adversary $\B$ forges a valid tuple $(m, r, h, S, i^*)$ in a time $t_1$ with probability $\epsilon_1 \geq 10(q_S+1)(q_S+q_{h_2})/p$, where the message $m$ has prefix $m'$, according to the forking lemma~\cite{JC:PoiSte00}, $\mathcal{C}$ can replay adversary $\B$ with different choices of random elements for hash function $h_2$ and obtain two valid tuples $(m, r, h', S_1, i^*)$ and $(m, r, h'', S_2, i^*)$, with $h'$$\neq$$h''$ in expected time $t_2$$\leq$$120686q_{h_2}t_1/{\epsilon_1}$.
\end{itemize}

Now we can apply the standard argument for outputs of the forking lemma as follows: $\mathcal{C}$ recovers the pair ($i^*,w^*$) from $L_1$, and note that $w^* \neq w_1,...,w_{\tau-1}$ with probability at least $1-{\tau}/{p}$. Since both forgeries are valid signatures, we can obtain the following relations:
$$e(S_1,Q_{i^*}).e(\hat{P}_1,\hat{P}_{pub})^{h'}=e(S_2,Q_{i^*}).e(\hat{P}_1,\hat{P}_{pub})^{h''},$$
where $Q_{i^*}=h_1(i^*)\hat{P}_2+\hat{P}_{pub}=(w^*+\alpha)\hat{P}_2$. Then we have
$$e((h''-h')^{-1}\cdot(S_1-S_2),Q_{i^*})=e(\hat{P}_1,\hat{P}_{pub}),$$
and hence $T^*=(\hat{P}_1-(h''-h')^{-1}.(S_1-S_2))/w^*=(\hat{P}_1-\frac{\alpha}{w^*+\alpha}\hat{P}_1)/w^*=\frac{1}{w^*+\alpha}\hat{P}_1=\frac{f(\alpha)}{w^*+\alpha}P_1$. From $T^*$, $\mathcal{C}$ can proceed as in~\cite{boneh2004short} to extract $\sigma^*=\frac{1}{w^*+\alpha}P_1$: $\mathcal{C}$ first writes the polynomial $f$ as $f(y)=\gamma(y)(y+w^*) + \gamma_{-1}$  by using the long division method, where $\gamma(y)=\sum_{i=0}^{\tau-2}\gamma_iy^i$ and $\gamma_{-1} \in \Z^*_p$, and eventually computes
$$\sigma^*=\frac{1}{\gamma_{-1}}\big[T^*-\sum_{i=0}^{\tau-2}\gamma_i\psi(\alpha^iP_2)\big]=\frac{1}{w^*+\alpha}P_1$$
and returns $(w^*, \sigma^*)$ as the solution to the $\tau$-SDH instance.
\ignore{From the description of reduction, it is straightforward to verify that the signing, puncturing and corruption queries produce valid outputs if $\mathcal{C}$ does not abort the simulation. Furthermore, since the outputs of signing and hash function queries, and the secret key returned by the corruption oracle are indistinguishable from that of the original scheme, adversary $\mathcal{C}$ learns nothing from query results. In addition, the collision event occurs only with negligible advantage due to $r$ is randomly generated as computed explicitly in ~\cite{JC:PoiSte00}.}
\end{proof}

\noindent The combination of the above lemmas yields Theorem 1.
\end{proof}
\ignore{
\subsection{Security Proof}

In order to prove the security of our scheme, we consider a particular adversary $\mathcal{B}$ with fixed position against our signature scheme in a variant of the above experiment $\expt^{\sf ps}_\A(1^\secparam)$, denoted by $\expt^{\sf fps}_\A(1^\secparam)$. Specifically, in the \textbf{Setup}, the challenger returns system parameters together with a fixed position $i^* \in [l]$; the following \textbf{Query Phase} remains unchanged and the \textbf{Challenge Phase} can be omitted; in the \textbf{Corruption} query, the challenger only returns the current secret key by excluding the key at position $i^*$. We say $\mathcal{B}$ wins the experiment $\expt^{\sf fps}_\A(1^\secparam)$, if $\mathcal{B}$ outputs some message $m$ together with a valid signature ($h,S,i^*$) on $m$.

We sketch the proof in two steps. First, we prove there exists an algorithm $\mathcal{B}$ that wins in the $\expt^{\sf fps}_\A(1^\secparam)$ experiment with non-negligible advantage, if adversary $\mathcal{A}$ has non-negligible advantage against our signature scheme ${\sf ps}$ in the $\expt^{\sf ps}_\A(1^\secparam)$ experiment (c.f. Lemma~\ref{defn:lemma 1}). Then, assuming the existence of $\mathcal{B}$, we can construct an algorithm $\mathcal{C}$ that breaks the $\tau$-SDH assumption (c.f. Lemma~\ref{defn:lemma 2}).

\begin{lemma}\label{defn:lemma 1}
Assuming that an algorithm $\mathcal{A}$ wins in the $\expt^{\sf ps}_\A(1^\secparam)$ experiment (c.f. Definition~\ref{defn:unforg}) to our construction ${\sf ps}$, with probability $\epsilon_0$ within running time $t_0$, there exists an algorithm $\mathcal{B}$ that wins in the $\expt^{\sf fps}_\A(1^\secparam)$ experiment as described above to ${\sf ps}$ which has probability $\epsilon_1 \geq  \epsilon_0(1 - (1 - 1/l)^k)/k$ within a running time $t_1 \leq t_0$.
\end{lemma}

\begin{proof}
Suppose there exists such an adversary $\mathcal{A}$, and we construct a simulator $\mathcal{B}$ that simulates an attack environment and uses $\mathcal{A}$'s forgery to win in its own $\expt^{\sf fps}_\A(1^\secparam)$ experiment. The simulator $\mathcal{B}$ can be described as follows：

 \begin{itemize}[leftmargin=*]
\item{\bf Invocation.} $\mathcal{B}$ is invoked on a given position $i^* \in [l]$.

\item{\bf Queries.} $\B$ answers adaptive queries from $\A$ as follows:

 \begin{itemize}[leftmargin=*]
 \item $\mathcal{B}$ makes $\setup$ query and forwards all the returned parameters to $\mathcal{A}$ for $\mathcal{A}$'s {\sf Setup} query.

\item Before $\mathcal{A}$ outputs the challenge string, $\mathcal{B}$ just forwards the queries of $\mathcal{A}$, including $\sign$, $\punc$, $h_1$ and $h_2$, to its experiment and returns the result to $\mathcal{A}$.

\item When $\mathcal{A}$ outputs the challenge string denoted by $m'$ after the series of queries, $\mathcal{B}$ checks whether $i^*$ $\in$ \{$H_j(m'): j \in [k]$\} and aborts if this does not hold. Otherwise, $\mathcal{B}$ provides the simulation for $\mathcal{A}$ as follows. For the queries $h_1$, $h_2$, $\punc$ and $\sign$, $\mathcal{B}$ just passes them to its challenger and returns the result as before. While for $\sf Corruption$ query, $\mathcal{B}$ firstly checks whether $m' \in P$ and returns $\emptyset$ if this does not hold. Otherwise, $\mathcal{B}$ makes $\sf Corruption$ query in its experiment, and returns the response $\sk$ to $\mathcal{A}$. 
 \end{itemize}
 \end{itemize}
\ignore{If the simulation of $\mathcal{B}$ does not abort, it is easy to verified that $\mathcal{B}$ provides a perfect simulation for $\mathcal{A}$ according to the above discussion.}

Eventually, $\mathcal{A}$ outputs a valid signature $(m, \sigma = (h,S,j^*))$, where $m'$ is the prefix of $m$. If $j^* = i^*$, then $\mathcal{B}$ sets $(m, \sigma)$ as its own output and apparently $\mathcal{B}$ also wins in its $\expt^{\sf fps}_\A(1^\secparam)$ experiment.

In the simulation described above, there are two events that causes $\B$ to abort: (1) $i^* \notin \{H_j(m^*): j \in [k]\}$ for the challenge string $m^*$; (2) $i^* \neq j^*$ for the forged signature $\sigma = (h,S,j^*)$.

\ignore{Assume that the hash functions in Bloom filter selects each position in array with equal probability. The probability that a certain position is not set to 1 by any of the $k$ hash functions is $(1 - 1 / \ell)^k$. Therefore, the probability that $\B$ does not abort in challenge phase is $P_1 = 1 - (1 - 1 / \ell)^k$.}

Recall that the $k$ hash functions in Bloom filter are sampled universally and independently, and thus each position in array is selected with equal probability. Besides, $i^*$ is invisible and looks random to $\mathcal{A}$, then the selection of $m'$ is independent of $i^*$. Therefore the probability that $i^* \notin \{H_j(m'): j \in [k]\}$ is $(1 - 1 / \ell)^k$. Similarly, the second event $i^* \neq j^*$  happens with probability  $1-1 / k$. Combing these, with probability $\epsilon_1 \geq  \epsilon_0(1 - (1 - 1/l)^k)/k$, $\B$ completes the whole simulation without aborting and wins in the $\expt^{\sf fps}_\A(1^\secparam)$ experiment.
\end{proof}
\begin{lemma}\label{defn:lemma 2}
Assuming that an algorithm $\mathcal{B}$ wins in the $\expt^{\sf fps}_\A(1^\secparam)$ experiment to our construction ${\sf ps}$, with advantage $\epsilon_1 \geq 10(q_S+1)(q_S+q_{h_2})/p$ within running time $t_1$, there exists an algorithm $\mathcal{C}$ that breaks $\tau$-SDH assumption for $\tau = q_{h_1}$ within running time $t_2 \leq 120686q_{h_2}t_1/({\epsilon_1}(1-{\tau}/{p}))$, where $q_{h_1}$, $q_{h_2}$ and $q_S$ are maximum query times of hash function $h_1$, $h_2$ and signing respectively.
\end{lemma}
\begin{proof}
Suppose there exists such an adversary $\B$, and we construct a simulator $\mathcal{C}$ that simulates an attack environment and uses $\mathcal{B}$'s forgery to break the $\tau$-SDH assumption. The simulator $\mathcal{C}$ can be described as follows:

\begin{itemize}[leftmargin=*]
 \item {\bf Invocation.} $\mathcal{C}$ takes as input a random instance $($$P_1$, $P_2$, $\alpha{P_2}$,$\alpha^2{P_2}$,...,$\alpha^{\tau}{P_2}$$)$ and aims to find a pair ($h,\frac{1}{h+\alpha}P_1$) for some $h\in Z^*_p$.

 \item {\bf Setup.} $\mathcal{C}$ samples $\tau-1$ elements $w_1$,$w_2$,$...$,$w_{\tau-1}$ $\xleftarrow{R}$ $Z^*_p$, where the $w_i$ ($i \in [\tau -1]$) will be used as the response to $\B$'s $h_1$ queries. $\mathcal{C}$ expands the polynomial $f(y) = \prod_{i=1}^{\tau-1}(y+w_i)$ to obtain $c_0,c_1,...,c_{\tau-1} \in Z^*_p$ so that $f(y) = \prod_{i=0}^{\tau-1}c_iy^i$ .

            Then $\mathcal{C}$ sets $\hat{P_2}=\sum_{i=0}^{\tau-1}c_i(\alpha^i)P_2 = f(\alpha)P_2 \in \G_2$ and $\hat{P_1} = \psi (\hat{P_2}) \in \G_1$ to be new generators of $\G_2$ and $\G_1$. Then the master public key is set to $\hat{P}_{pub} = \sum_{i=1}^{\tau}c_{i-1}(\alpha^iP_2)$ such that $\hat{P}_{pub}=\alpha \hat{P_2}$, thus the master secret key is the unknown $\alpha$.

             To provide the secret keys corresponding to the positions having been queried to $h_1$, $\mathcal{C}$ expands $f_i(y)=f(y)/(y+w_i)=\sum_{i=0}^{\tau-2}d_iy^i$ and
            \begin{equation}
            \sum_{i=0}^{\tau-2}d_i\psi(\alpha^{i+1}P_2) = \alpha f_i(\alpha)P_1 = \frac{\alpha f(\alpha)P_1}{\alpha+w_i} = \frac{\alpha}{\alpha+w_i}\hat{P_1}
            \end{equation}
            Thus, the $\tau-1$ pairs ($w_i,\frac{\alpha}{\alpha + w_i}\hat{P_1}$) can be computed using the left member of equation (1).

Then $\mathcal{C}$ provides the parameters $(p$, $e$, $\psi$, $\G_1$, $\G_2$, $\G_T$, $\hat{P}_1$, $\hat{P}_2$, $h_1$, $h_2)$ to $\B$. In addition, $\mathcal{C}$ also generates a Bloom filter as $(\{H_j\}_{j\in [k]}, T) \leftarrow \BF.\gen(\ell, k)$ and set $P = \emptyset$, then outputs the public key $\vk=(\hat{P}_{pub}, g, \{H_j\}_{j\in [k]})$ (now $g=e(\hat{P}_1,\hat{P}_{pub})$) and challenge position $i^*$ to $\B$.
Then $\mathcal{C}$ is ready to answer $\B$'s queries during the simulation.  {\color{red}For simplicity}, we may assume as in ~\cite{PKC:ChoChe03} for any $i \in [l]$, $\B$ queries $h_1(i)$ at most once and any query involving $i$ is proceeded by the RO query $h_1(i)$, which means that $\B$ has to query $h_1(i)$ before he can obtain a signature ($h,S,i$) from \textbf{Signature} query and obtain the secret key $sk_i$ from \textbf{Corruption} query by using a simple wrapper of $\B$.


  \item {\bf Hash function queries.} $\mathcal{C}$ initializes a counter $t$ to 1.
 \begin{itemize}[leftmargin=*]
  \item $h_1: [l] \rightarrow \mathbb{Z}^*_p$: On input $i \in [l]$, $\mathcal{C}$ returns a random $w = w^*\xleftarrow{R}\mathbb{Z}^*_p$ if $i=i^*$. Otherwise, $\mathcal{C}$ returns $w = w_t$ and increments $t$. In both cases, $\mathcal{C}$ stores ($i,w$) in a list $L_1$.

  \item $h_2: \{0, 1\}^* \times \G_T \rightarrow \Z^*_p:$ On input $(m, r)$, $\mathcal{C}$ outputs $h$ if $(m, r, h)$ is in the list $L_2$ (initialized to be empty). Otherwise, it outputs a random element $h$ and stores ($m,r,h$) in $L_2$.
 \end{itemize}
      Note that, according to the query of $h_1$ and the computation in the setup phase, $\mathcal{C}$ knows the secret key $sk_i = \frac{\alpha}{\alpha + w_i}\hat{P_1}$ for $i \neq i^*$.

 \item {\bf Query Phase.} $\mathcal{C}$ answers adaptive signing and puncturing queries from $\B$ as follows:
 \begin{itemize}[leftmargin=*]
  \item {\bf Signature query}: On input a message $m$ with prefix $m'$, $\mathcal{C}$ first checks $\BF.\check(\{H_j\}_{j\in [k]}, T, m') = 1$ and outputs $\bot$ in this case. Otherwise, there exists at least one index $i_j \in \{i_1, \ldots, i_k\}$ such that $\sk_{i_j} \neq \bot$, where $i_j = H_j(m')$, for $j \in [k]$. $\mathcal{C}$ picks a random $j \in \{i_1, \ldots, i_k\}$, $S \xleftarrow{R}  G$ and $h \xleftarrow{R}  \Z^*_p$, computes and sets $r = e(S,h_1(j)\hat{P}_2+\hat{P}_{pub})e(\hat{P}_1,\hat{P}_{pub})^h$, and backpatches to define $h_2(m, r) = h$. Finally, $\mathcal{C}$ stores $(m, r, h)$ in $L_2$, and returns the signature $\sigma = (h, S, j)$ ($\mathcal{C}$ aborts in the unlikely collision event that $h_2(m, r)$ is already defined by other query results of $\sign$ or $h_2$, the probability of which is negligible since $r$ is random~\cite{JC:PoiSte00}).

  \item {\bf Puncture query}: On input a string $str$, $\mathcal{C}$ first updates $T' = \BF.\update(\{H_j\}_{j\in [k]}, T, str)$. Then for each $i \in [\ell]$, update
$$
\sk'_i = \left\{
             \begin{array}{ll}
             \sk_i, & \text{if} \ T'[i] = 0\\
             \bot, & \text{otherwise}\\
             \end{array}
\right.
$$
The updated signing key is $\sk' = (T', \{\sk'_i\}_{i\in [\ell]}), P' = P \cup \{str\}$.
 \end{itemize}

 \ignore{\item \bf{Challenge Phase}: On input challenge message prefix $m^*$, $\mathcal{C}$ aborts the simulation if $i^* \in S_{m^*}$, where for any $i \in S_{m^*}$, we have $T[i] = 0$ for $T = \BF.\update(\{H_j\}_{j\in [k]}, T, m^*)$. Otherwise, puncture $m^*$ as described above.}

\item \textbf{Corruption query}: For {\sf Corruption} query, $\mathcal{C}$ recovers the matching pair ($i,w$) from $L_1$ and returns the previously computed $\frac{\alpha}{\alpha + w}\hat{P}_1$.

\item {\bf Forgery.} If adversary $\B$ forges a valid tuple $(m, r, h, S, i^*)$ in a time $t_1$ with probability $\epsilon_1 \geq 10(q_S+1)(q_S+q_{h_2})/p$, where the message $m$ has prefix $m'$, according to the forking lemma~\cite{JC:PoiSte00}, $\mathcal{C}$ can replay adversary $\B$ with different choices of random elements for hash function $h_2$ and obtain two valid tuples $(m, r, h', S_1, i^*)$ and $(m, r, h'', S_2, i^*)$, with $h'\neq h''$ in expected time $t_2 \leq 120686q_{h_2}t_1/{\epsilon_1}$.
\end{itemize}

Now a standard argument for outputs of the forking lemma can be applied as follows: $\mathcal{C}$ recovers the pair ($i^*,w^*$) from $L_1$, and note that $w^* \neq w_1,...,w_{\tau-1}$ with probability at least $1-{\tau}/{p}$. Since both forgeries satisfies the verification equation, we can obtain the following relations:
$$e(S_1,Q_{i^*}).e(\hat{P}_1,\hat{P}_{pub})^{h'}=e(S_2,Q_{i^*}).e(\hat{P}_1,\hat{P}_{pub})^{h''},$$
where $Q_{i^*}=h_1(i^*)\hat{P}_2+\hat{P}_{pub}=(w^*+\alpha)\hat{P}_2$. Then we have

$$e((h''-h')^{-1}\cdot(S_1-S_2),Q_{i^*})=e(\hat{P}_1,\hat{P}_{pub}),$$
and hence $T^*=(\hat{P}_1-(h''-h')^{-1}.(S_1-S_2))/w^*=(\hat{P}_1-\frac{\alpha}{w^*+\alpha}\hat{P}_1)/w^*=\frac{1}{w^*+\alpha}\hat{P}_1=\frac{f(\alpha)}{w^*+\alpha}P_1$. From $T^*$, $\mathcal{C}$ can proceed as in~\cite{boneh2004short} to extract $\sigma^*=\frac{1}{w^*+\alpha}P_1$: $\mathcal{C}$ first writes the polynomial $f$ as $f(y)=\gamma(y)(y+w^*) + \gamma_{-1}$ for some polynomial $\gamma(y)=\sum_{i=0}^{\tau-2}\gamma_iy^i$ and some $\gamma_{-1} \in \Z^*_p$ by using long division method, and eventually computes
$$\sigma^*=\frac{1}{\gamma_{-1}}\big[T^*-\sum_{i=0}^{\tau-2}\gamma_i\psi(\alpha^iP_2)\big]=\frac{1}{w^*+\alpha}P_1$$
and returns $(w^*, \sigma^*)$ as the solution to the $\tau$-SDH instance.
\ignore{From the description of reduction, it is straightforward to verify that the signing, puncturing and corruption queries produce valid outputs if $\mathcal{C}$ does not abort the simulation. Furthermore, since the outputs of signing and hash function queries, and the secret key returned by the corruption oracle are indistinguishable from that of the original scheme, adversary $\mathcal{C}$ learns nothing from query results. In addition, the collision event occurs only with negligible advantage due to $r$ is randomly generated as computed explicitly in ~\cite{JC:PoiSte00}.}
\end{proof}

The combination of the above lemmas yields the following theorem.

\begin{theorem}
Assuming that an algorithm $\mathcal{A}$ wins in the $\expt^{\sf ps}_\A(1^\secparam)$ experiment (c.f. Definition~\ref{defn:unforg}) to our construction ${\sf ps}$, with advantage $\epsilon_1 \geq 10k(q_S+1)(q_S+q_{h_2})/\big(p(1 - (1 - 1/l)^k)\big)$ within running time $t_1$, $\tau$-SDH assumption can be broken for $\tau = q_{h_1}$ within running time $t_2 \leq 120686q_{h_2}t_1/({\epsilon_1}(1-\tau/p))$, where $q_{h_1}$, $q_{h_2}$ and $q_S$ are maximum query times of hash function $h_1$, $h_2$ and signing respectively.
\end{theorem}
}

\ignore{\subsection{\color{red}Old Security Proof Nov 2018}
\ignore{
The security theorem for our PS construction can be stated as follows:
\begin{theorem}
Assuming that an algorithm $\mathcal{A}$ wins in the $\expt^{\sf ps}_\A(1^\secparam)$ experiment (c.f. Definition~\ref{defn:unforg}) to our construction ${\sf ps}$, with advantage $\epsilon_1 \geq 10(q_S+1)(q_S+q_{h_2})/p$ within running time $t_1$, there exists an algorithm $\mathcal{B}$ that solves the Gap DH problem with probability
$\epsilon_2 \geq (1 - (1 - 1/\ell)^k)(1 - \frac{1}{k})$ within running time $t_2 \leq 120686q_{h_2}t_1/{\epsilon_1}$, where $q_{h_2}$ and $q_S$ are maximum query times of hash function $h_2$ and signature respectively.
\end{theorem}
Due to space limit, we present the full proof in Appendix~\ref{sec:psproof}. The strategy of our proof is: if there exists an adversary $\A$ attack the security of PS construction, then we can construct a reduction $\B$ that breaks the underlying assumption. Roughly speaking, reduction $\B$ goes like this: $\B$ first picks a random index in the array of Bloom filter, and embeds the assumption (DDH) there. Then, the signing, puncturing and hash functions queries can be answered by using the assumption or properties of bilinear groups. If the forged signature hits the pre-selected index, which happens with probability as calculated in Lemma~\ref{lem:abortpro}, then $\B$ can use the forged signature to break the underlying assumption through some standard techniques involving forking lemma~\cite{JC:PoiSte00}.
}

In this part, we show that our construction of puncturable signatures is secure assuming the hardness of Gap DH assumption.
\ignore{\begin{theorem}
Assuming the hardness of Gap DH assumption (c.f. Definition~\ref{defn:dh}), our construction described above is secure (c.f. Definition~\ref{defn:unforg}) in the random oracle model.
\end{theorem}}
\begin{theorem}
Assuming that an algorithm $\mathcal{A}$ wins in the $\expt^{\sf ps}_\A(1^\secparam)$ experiment (c.f. Definition~\ref{defn:unforg}) to our construction ${\sf ps}$, with advantage $\epsilon_1 \geq 10(q_S+1)(q_S+q_{h_2})/p$ within running time $t_1$, there exists an algorithm $\mathcal{B}$ that solves the Gap DH problem with probability
$\epsilon_2 \geq (1 - (1 - 1/\ell)^k)(1 - \frac{1}{k})$ within running time $t_2 \leq 120686q_{h_2}t_1/{\epsilon_1}$, where $q_{h_2}$ and $q_S$ are maximum query times of hash function $h_2$ and signature respectively.
\end{theorem}

\begin{proof}
Suppose there exists such an adversary $\A$. We construct a simulator $\B$ that simulates an attack environment and uses the forgery to create a solution to the CDH assumption. The simulator $\B$ can be described as follows:
\begin{description}[leftmargin=*]
 \item \textbf{Invocation.} $\B$ is invoked on a random instance $(G, aG, bG)$ of the CDH assumption, and is required to return a solution.
 \item \textbf{Setup.} $\B$ gives to the adversary $\A$ a simulated public key constructed as follows:
 \begin{enumerate}
  \item Generate a Bloom filter as $(\{H_j\}_{j\in [k]}, T) \leftarrow \BF.\gen(\ell, k)$, and initialize a set $\mathcal{H}$ for hash function $h_2$.
  \item Pick a random integer $i^* \in [\ell]$, and for any $j \in [\ell] - \{i^*\}$, set $\sk_j = s_j aG$, where $s_j \rsample \Z_p$. The signing key is $\sk = (T, \{\sk_i\}_{i \in [\ell]}), P = \emptyset$.
  \item Output the public key $\vk=(aG, \{H_j\}_{j\in [k]})$.
 \end{enumerate}
  \item \textbf{Hash function queries.} $\B$ defines the random oracle $h_1$ and then answers adversary's hash function queries as:
 \begin{itemize}
  \item $h_1: \mathbb{N} \rightarrow \G_1$: On input $i \in \mathbb{N}$, $\B$ outputs
 $$
h_1(i) = \left\{
             \begin{array}{ll}
             s_iG, & i \in [\ell] \wedge i \neq i^*\\
             bG, & i = i^*\\
             t_iG, & \text{otherwise}
             \end{array}
\right.
$$
where $t_i$ is sampled randomly from $\Z_q$. (Note that $\sk_{i^*} = abP$)
  \item $h_2: \{0, 1\}^* \times \G_1 \rightarrow \Z_p:$ On input $(m, U)$ from adversary $\A$, $\B$ outputs $h$ if $(m, U)$ is in set $\mathcal{H}$. Otherwise, it outputs a random element $h$.
 \end{itemize}
Noth that, according to the definition of $h_1$, $\B$ knows the secret key $\sk_j$  by computing $\sk_j = ah_1(j) = s_j(aG)$ for any $j \in [\ell] - \{i^*\}$.
 \item \textbf{Query Phase I.} $\B$ answers adaptive signing and puncturing queries from $\A$ as follows:
 \begin{itemize}
  \item \textbf{Signing query}: On input a message $m$ with prefix $m'$, $\B$ first checks $\BF.\check(\{H_j\}_{j\in [k]}, T, m') = 1$ and outputs $\bot$ in this case. Otherwise, there exists at least one index $i_j \in S = \{i_1, \ldots, i_k\}$ such that $\sk_{i_j} \neq \bot$, where $i_j = H_j(m')$, for $j \in [k]$. Pick a random $j \in S$, and compute $U = y P - h h_1(j), V = y (aG)$, where $y$ is sampled randomly from $\Z_q$, and set $h_2(m, U) = h$ for the randomly chosen $h$. Store $(m, U, h)$ in set $\mathcal{H}$, and output signature $\sigma = (U, V, j)$. ($\mathcal{B}$ aborts in the unlikely collision event that $h_2(m,U)$ is already defined before by other query results of $\sign$ or $h_2$).

  \item \textbf{Puncturing query}: On input a message prefix $m$, $\B$ first updates $T = \BF.\update(\{H_j\}_{j\in [k]}, T, m)$. Then for each $i \in [\ell]$, update
$$
\sk_i = \left\{
             \begin{array}{ll}
             \sk_i, & \text{if} \ T[i] = 0\\
             \bot, & \text{otherwise}\\
             \end{array}
\right.
$$
The updated signing key is $\sk = (T, \{\sk_i\}_{i\in [\ell]}), P = P \cup \{m\}$.
 \end{itemize}

 \ignore{\item \textbf{Challenge Phase}: On input challenge message prefix $m^*$, $\B$ aborts the simulation if $i^* \in S_{m^*}$, where for any $i \in S_{m^*}$, we have $T[i] = 0$ for $T = \BF.\update(\{H_j\}_{j\in [k]}, T, m^*)$. Otherwise, puncture $m^*$ as described above.}
 \ignore{
 If $T[i^*] = 0$, then abort this simulation. Otherwise, send the signing key $\sk = (T, \{\sk_i\}_{i\in [\ell]})$, and $P$ to adversary $\A$, where
$$
\sk_i = \left\{
             \begin{array}{ll}
             \sk_i, & \text{if} \ T[i] = 0 \\
             \bot, & \text{otherwise}\\
             \end{array}
\right.
$$
}

\item \textbf{Challenge Phase}: On input challenge message prefix $m^*$, $\B$ aborts the simulation if $i^* \notin S_{m^*}$, where $S_{m^*} = \{H_j(m^*): j \in [k]\}$.

\item \textbf{Query Phase II.} $\B$ answers adaptive signing and puncturing queries from $\A$ in the same manner as Phase I. For the corruption query, on input challenge string $m^*$ from adversary $\A$, if $m^* \in P $, then $\B$ returns $\sk$.  Otherwise, $\B$ outputs $\bot$.

\ignore{\item \textbf{Forgery.} On the forgery tuple $(m, \sigma = (U, V, j^*))$ from adversary $\A$, where the message $m$ has prefix $m^*$, $\B$ aborts the simulation if $j^* \neq i^*$. Otherwise, parse the signature as
 $$U = rh_1(i^*) = brP, \quad V = a(r + h_2(m, U))bP$$
 Next $\B$ replays adversary $\A$ with different choices of random elements for $h_2$, as done in the forking lemma~\cite{JC:PoiSte00}, we obtain signatures $( m, U, h, V )$ and $(m,U,h',V')$ which are expected to be valid ones with respect to hash functions $h_2$ and $h'_2$ having different values $h \neq h'$ on $(m, U )$, respectively.
\end{description}
From the description of reduction, it is straightforward to verify that the signing, puncturing and corruption queries produce valid outputs if $\B$ does not abort the simulation. Furthermore, since the outputs of signing and hash function queries, and the secret key returned by the corruption oracle are indistinguishable from that of the original scheme, adversary $\A$ learns nothing from query results. In addition, the collision event occurs only with negligible advantage due to $U$ is randomly generated as computed explicitly in ~\cite{JC:PoiSte00}.

Now a standard argument for outputs of the forking lemma can be applied as follows: since both are valid signatures, $(P, aP , U + hh_1(i^*), V )$ and $(P, aP , U + h'h_1(i^*), V')$ are valid Diffie-Hellman tuples. In other words, $V = (r + h_2(m, U))abg$ and $V' = (r + h'_2(m, U))abP$. Subtracting the equations, $V - V' = (h - h')abP$ and $abP = (h - h')^{-1}(V - V')$ as desired.}

\item \textbf{Forgery.} On the forgery tuple $(m, \sigma = (U, V, j^*))$ from adversary $\A$, where the message $m$ has prefix $m^*$, $\B$ aborts the simulation if $j^* \neq i^*$. Otherwise, if the forgery is valid, $\B$ replays adversary $\A$ with different choices of random elements for $h_2$, as done in the forking lemma~\cite{JC:PoiSte00}, we obtain signatures $( m, U, h, V, i^*)$ and $(m,U,h',V',i^*)$ which are expected to be valid ones with respect to hash functions $h_2$ and $h'_2$ having different values $h \neq h'$ on $(m, U )$, respectively.
\end{description}
From the description of reduction, it is straightforward to verify that the signing, puncturing and corruption queries produce valid outputs if $\B$ does not abort the simulation. Furthermore, since the outputs of signing and hash function queries, and the secret key returned by the corruption oracle are indistinguishable from that of the original scheme, adversary $\A$ learns nothing from query results. In addition, the collision event occurs only with negligible advantage due to $U$ is randomly generated as computed explicitly in ~\cite{JC:PoiSte00}.

Now a standard argument for outputs of the forking lemma can be applied as follows: since both are valid signatures, $(G, aG , U + hh_1(i^*), V )$ and $(G, aG , U + h'h_1(i^*), V')$ are valid Diffie-Hellman tuples. In other words, $V = a(U + hbG)$ and $V' = a(U + h'bG)$. Subtracting the equations, $V - V' = (h - h')abG$ and $abG = (h - h')^{-1}(V - V')$ as desired.

\begin{lemma}
The simulator $\B$ completes the whole simulation phase without aborting,
with probability,
$$\prob[\B \text{ completes}] \geq (1 - (1 - \frac{1}{\ell})^k)(1 - \frac{1}{k})$$
\end{lemma}
\begin{proof}
In the simulation described above, there are two events that causes $\B$ to abort: (1) In challenge query, $i^* \notin S_{m^*}$; (2) In forged signature $\sigma = (U, V, j^*)$, $j^* \neq i^*$.

\ignore{Assume that the hash functions in Bloom filter selects each position in array with equal probability. The probability that a certain position is not set to 1 by any of the $k$ hash functions is $(1 - 1 / \ell)^k$. Therefore, the probability that $\B$ does not abort in challenge phase is $P_1 = 1 - (1 - 1 / \ell)^k$.}

Assume that the hash functions in Bloom filter selects each position in array with equal probability. The probability that $i^* \notin S_{m^*}$ is $(1 - 1 / \ell)^k$. Therefore, the probability that $\B$ does not abort in challenge phase is $P_1 = 1 - (1 - 1 / \ell)^k$.

The second aborting event is the index in forged signature is not equal to $i^*$, which happens with probability  $1 / k$. Therefore, the probability that $\B$ completes the whole simulation without aborting is
$$\prob[\B \text{ completes}] \geq P_1P_2 = (1 - (1 - 1/l)^k)(1 - \frac{1}{k})$$
\end{proof}
\noindent It follows the bounds of above lemma, under the restriction of bounded puncturing queries as we stated in Remark~\ref{rem:bound}, if $\A$ existentially forges a signature with probability $\epsilon_1 \geq 10(q_S+1)(q_S+q_{h_2})/p$ within time $t_1$, then $\B$ solves the Gap DH instance with probability $\epsilon_2 \geq \prob[\B \text{ completes}]$ within running time $t_2 \leq 120686q_{h_2}t/{\epsilon_1}$, where the probability $\epsilon_1$ and the time $t_2$ are computed explicitly in~\cite{JC:PoiSte00}, for the event that the oracle replay produces signature forgeries $( m, U, h, V, i^*)$ and $( m, U, h', V', i^*)$ which are expected to be valid ones with respect to hash functions $h_2$ and $h'_2$.
\end{proof}}

\ignore{
\subsection{Old Security Proof 2018May: sketch}
The proof of the above scheme is split into two steps. Firstly, we consider one particular attack against our signature scheme, where the adversary $\mathcal{B}$ is
challenged on a given string $m^*$ at the beginning of the security experiment, and moreover $\mathcal{B}$ can decide to corrupt the $k'$ secret keys of the $k$ secret keys which can be used to sign any message with $m^*$ as the prefix (apparently $k' < k$),  and $\mathcal{B}$ can only be considered successful iff he eventually outputs some message $m$ with the given $m^*$ as the prefix (together with a valid signature) on $m$ at the uncorrupted position. If no polynomial time adversary $\mathcal{B}$ has non-negligible advantage in this devised security game, we say the signature scheme is secure under existential forgery on messages with given-prefix and partial secrets corruption. Then for any adversary $\mathcal{A}$ with non-negligible advantage against the unforgeability with adaptive puncturing we can construct another adversary $\mathcal{B}$ that breaks the unforgeability with adaptive puncturing for given-prefix and partial secrets corruption. Secondly, assuming the existence of $\mathcal{B}$, we can construct an algorithm $\mathcal{C}$ that solves CDHP hard problem.

\begin{lemma}\label{defn:lemma 1}
If there is a forger $\mathcal{A}$ for an adaptively chosen message attack having advantage $\epsilon_0$
against our scheme when running in a time $t_0$ and making $q_G$ queries to the random oracle $G$, then there exist an algorithm $\mathcal{B}$ for an adaptively chosen message attack with given-prefix and partial secrets corruption which has advantage $\epsilon_1 \geq  \epsilon_0(1-Pr[abort]$) within a running time $t_1 \geq t_0$. Moreover, $\mathcal{B}$ asks the same number signature queries, $G$ queries, puncture queries and extract queries.
\end{lemma}

\begin{proof}
Algorithm $\mathcal{B}$ receives as input a given string $m^*$, denote by $sk_{m^*}$ the set of the secret keys in the $k$ positions regarding to $m^*$ namely \{$H_j(m^*): j \in [k]$\}, and $sk_{m^*}[i]$ the $i$-th element. $\mathcal{B}$ randomly selects $k'$ ($k' < k$) positions denoted by \{$i_1,...,i_{k'}$\} from \{$H_j(m^*): j \in [k]$\}, then corrupts at the remaining positions and obtains the corresponding secret keys $sk_{m^*}$/\{$sk_{m^*}[i_1],...,sk_{m^*}[i_{k'}]\}$. It then runs algorithm $\mathcal{A}$ as a
subroutine by simulating the $\expt_\A(1^\secparam)$ experiment, including random oracles $G$ and $H$, as follows.

Before $\mathcal{A}$ outputs the challenge string, $\mathcal{B}$ sends the queries of $\mathcal{A}$, including $\sign$, $\punc$, $H$ and $G$, to its challenger and returns the result to $\mathcal{A}$.

When $\mathcal{A}$ outputs the challenge string denote by $m$ after the series of queries, $\mathcal{B}$ checks if \{$i_1,...,i_{k'}$\} $\in$ \{$H_j(m): j \in [k]$\}, if no, just aborts. Otherwise, $\mathcal{B}$ provides the simulation for $\mathcal{A}$ as follows. For the queries $H$, $G$, $\punc$ and $\sign$, $\mathcal{B}$ just passes them to its challenger and returns the result as before. While for $\sf Corr$ queries, $\mathcal{B}$ firstly checks whether $m$ has been punctured and returns $\emptyset$ if this does not hold. Otherwise, $\mathcal{B}$ makes query $\punc(m^*)$ and then $Corr$ in its experiment, then with the key set $\mathbb{K}$ consisting of the challenger's response and the known $sk_{m^*}$/\{$sk_{i_1},...,sk_{i_{k'}}\}$  $\mathcal{B}$ can perfectly simulate $\mathcal{A}$'s $Corr$ query, since $\mathbb{K}$ contains the secret key at all the positions except for \{$H_j(m): j \in [k]$\}.

If the simulation of $\mathcal{B}$ does not abort, $\mathcal{B}$ can provide a perfect simulation for $\mathcal{A}$ according to the above discussion.

Eventually, when $\mathcal{A}$ outputs a valid signature ($m', U, V, i_{j^*}, \sigma$) for $m' = m...$, if $i_{j^*} \in \{i_1,...,i_{k'}\}$, then $\mathcal{B}$ sets ($m', U, V, i_{j^*}, \sigma$) as its own output, and it is easy to conclude that in this case $\mathcal{B}$ also wins in its security game. Otherwise, $\mathcal{B}$ aborts.

\ignore{这里，签名时候，应为随机的选择一个（可行的）索引比较合适，因为我们需要$\mathcal{A}$ 随机的选择索引进行签名，进而使得$\mathcal{B}$有一个k'/k的成功概率，目前是选择最小的。}

{\bf Pr[abort]: Probability of abort event.}

\end{proof}

\begin{lemma}\label{defn:lemma 2}
 Let us assume that there is an algorithm $\mathcal{B}$ for an adaptively chosen message attack with given-prefix and partial secrets corruption to our scheme which queries $H$, $G$, $\sign$ and $\punc$ at most $q_H$, $q_G$, $q_S$ and $q_P$ times, respectively. Assume that, with in a time $t$, $\mathcal{B}$ has advantage $\epsilon_1 \geq 10(q_S+1)(q_S+q_H)/p$, then there exists an algorithm $\mathcal{C}$ that is able to solve the CDHP hard problem with probability $\epsilon_1 \geq \frac{1}{9}$ within running time $t_2 \leq 120686q_Ht_1/{\epsilon_1}$.
\end{lemma}

\begin{proof}

Algorithm $\mathcal{C}$ receives as input a randomly chosen CDHP challenge ($g,g^a,g^b$) and aims to find $g^{ab}$ where $g$ is a generator of $G_1$. It runs $\mathcal{B}$ as a subroutine by simulating the security experiment, including random oracles $G$ and $H$, as follows.

First, $\mathcal{C}$ randomly chooses a challenge string $m^*$ and sends to $\mathcal{B}$, then $\mathcal{B}$ selects $k'$ ($k' < k$) positions  \{$i_1,...,i_{k'}$\} from \{$H_j(m^*): j \in [k]$\} and corrupts at the remaining positions to obtain the corresponding secret key $sk_{m^*}$/\{$sk_{m^*}[i_1],...,sk_{m^*}[i_{k'}]\}$ just as in Lemma \ref{defn:lemma 1}.  $\mathcal{C}$ sets the process of $G$-queries as follows to simulate this early corruption:

For $G$-queries on $l$, $\mathcal{C}$ responds with $G(l)$, if $G(l)$ has already been defined, otherwise, $\mathcal{C}$ chooses a random $r_l \stackrel{\mathbb{S}}{\leftarrow} \mathbb{Z}_p$ and returns:

    \begin{enumerate}
    \item If $l \in \{H_j(m^*)$: $j \in [k]$\} and $l \in \{i_1,...,i_{k'}$\}, then $G(l) = (g^b)^{r_l}$,
     \item Otherwise, $G(l) = g^{r_l}$.
     \end{enumerate}
Note that by the above implementation of random oracle $G$ by $\mathcal{C}$,  $\mathcal{C}$ can answer the early corruption by computing and returning $G(l)^a = {(g^a)}^{r_l}$ for $l \in \{H_j(m^*)$: $j \in [k]$\} and $l \notin \{i_1,...,i_{k'}$\}.

Then it defines $P = \emptyset$, runs $(H, T) \stackrel{\mathcal{R}}{\leftarrow} \textsf{BFCheck}(l, k)$, and defines the public key $vk=(g^a, H)$. Note that this public key is identically distributed to a public key output by$\setup(1^\secparam, l, k)$.

$\mathcal{C}$ is then ready to answer $\mathcal{B}$'s queries along the course of the game as follows.
\begin{itemize}
\item  $\punc((\sk, P), .)$ queries on input $m$: $\mathcal{C}$ updates $T$ by running $T = BFUpdate(H,T,m)$ and $P' = P \cup \{m\}$. Later, $\mathcal{C}$ will use the updated state $T$ to provide consistent responses to $\mathcal{B}$'s queries, $\sf Corr$ and $\sign$.

\item $G$-queries on $l$: $\mathcal{C}$ responds with $G(l)$ as defined above.

\item $\sf Corr$ query: Note that the definition of $G$ allows $\mathcal{C}$ to simulate the $\sf Corr$ oracle as follows. When $\mathcal{B}$ queries $\sf Corr$, $\mathcal{C}$ checks whether $m^* \in P$, and returns $\bot$ if this does not hold. Otherwise, since $m^*$ has been punctured, then $T[l] = 1$ and $sk[l] = \bot$ for $l \in H_j(m^*)$($j \in [k]$). Then, for any other positions $l \notin \{H_j(m^*)$: $j \in [k]$\}, $\mathcal{C}$ is able to compute and return the corresponding secret key $sk[l]$  by $sk[l] = G(l)^a = (g^a)^{r_l}$ when $T[l]\neq 1$.

\item $H$-queries on ($m,U$): $\mathcal{C}$ responds with $H(m,U)$, if $H(m,U)$ exists, otherwise, $\mathcal{C}$ chooses a random $h$ and returns.

\item $\sign((\sk, P), .)$ queries on $m = m'...$: if $m' \neq m^*$, then $\mathcal{C}$ can simulates the signature perfectly by using the corresponding secret key $sk[l]$ for $l \in H_j(m')$($j \in [k]$).

    Otherwise, if $m' = m^*$, then $\mathcal{C}$ has to compute return a valid signature on $m = m^*...$ without the corresponding secret key. Specifically, let  $i^*$ be the smallest index in $\{H_j(m^*)$: ($j \in [k]$)\} such that $sk_{i^*} \neq \bot$, if $i^*$ does not exist, then $\mathcal{C}$ returns $\bot$. Otherwise, $\mathcal{C}$ randomly generates $z, h \in \mathbb{Z}_p$ and outputs a tuple ($i^*, m, U, h, V$), where $U = g^{z}/G(i^*)^{h}$, $V= {g^a}^{z}$. Then, $\mathcal{C}$ backpatches to define the value $h(m, U)=h$. ($\mathcal{C}$ aborts in the unlikely
event that $h(m,r)$ is already defined before by other query results of $\sign$ or $h$).
\end{itemize}

From the above discussion we conclude that $\mathcal{C}$ provides a perfect simulation for $\mathcal{B}$, since $\sf Corr$ and $\sign$ produce valid secret keys and signatures (easy to verify), and moreover, the random distribution generated in the simulation is indistinguishable from that in the original algorithm. In addition, the collision event occurs only with negligible advantage due to $U$ is randomly generated as computed explicitly in [].

We have explained how to simulate $\mathcal{B}$'s security environment, now we are ready to apply the forking lemma as follows: if $\mathcal{B}$ outputs a valid message-signature pair, then we can build
another algorithm $\mathcal{B}$'s that replays $\mathcal{B}$ with the same random tape but different choices of $h$ to obtain two valid signatures ($i^*, m, U, h, V$) and ($i^*, m, U, h', V'$) with $h \neq h'$, in expected time $t_2 \leq 120686q_Ht_1/{\epsilon_1}$. Since both signatures are valid, ($g, g^a, U \cdot G(i^*)^h, V$) and ($g, g^a, U \cdot G(i^*)^{h'}, V'$) are valid Diffie-Hellman tuples. Specifically, $V = U^a.G(i^*)^{a.h(m,U)} = U^a.g^{abhr_{i^*}}$ and
$V' = U^a.G(i^*)^{a.h'(m,U)} = U^a.g^{abh'r_{i^*}}$. Then $\frac{V}{V'} = g^{r_{i^*}.(h-h').ab}$, then $g^{ab} = {\frac{V}{V'}}^{\frac{1}{r_{i^*}.(h-h')}}$.
\end{proof}}

\section{Puncturable Signature in Proof-of-Stake Blockchain}

Before describing the application of the puncturable signature scheme in proof-of-stake blockchain, we recall some basic definitions~\cite{david2018ouroboros}\cite{kiayias2017ouroboros} of proof-of-stake blockchain and secure properties~\cite{kiayias2015speed}\cite{pass2017analysis} of blockchain. We assume that there are $n$ stakeholders $U_1, \ldots, U_n$, and each stakeholder $U_i$ possesses $s_i$ stake and a public/secret key pair $(\vk_i, \sk_i)$. Without loss of generality, we assume that all system users know the public keys $\vk_1, \ldots, \vk_n$. The protocol execution is divided in time units (also called slots).

\begin{definition}[Epoch, Block, State, Genesis Block, Blockchain~\cite{david2018ouroboros}] \label{defn:block}
An epoch is a collection of $R$ sequential slots $S=\{sl_1,...,sl_R\}$, where $R$ is a system parameter and the stake distribution for selecting slot leaders remains unchanged during one epoch.

A block generated at the slot $sl_j$ ($j \in [R]$) is a tuple of the form $B_j = (sl_j,st_j,d_j,\sigma_j)$\footnote{Recall that the puncturable signature proposed in this paper supports puncturing at any position. For ease of presentation, slot number $sl_j$ is defined as the prefix of the block, however, it maybe has different locations in specific PoS protocols.}, where $st_j \in \{0,1\}^\lambda$ denotes the state of the previous block $B_{j-1}$ such that $st_j = H(B_{j-1})$, $d_j \in \{0,1\}^*$ denotes the transaction data, and $\sigma_j$ denotes a signature on ($sl_j,st_j,d_j$) computed under the signing key of the stakeholder $U_i$ (i.e., the leader of the slot $sl_j$ who is eligible to generate the block).

The genesis block is denoted by $B_0 = (\mathbb{R}_0, \rho)$, where $\mathbb{R}_0 = \big((\vk_1,s_1),(\vk_2,s_2),...,(\vk_n,s_n)\big)$ contains the corresponding public keys and stakes of the stakeholders, and $\rho$ denotes the auxiliary information used as the seed for the process of the slot leader election.

A blockchain $\mathcal{C}$ is initialised as the genesis block $B_0$ followed by a sequence of blocks $B_1, \cdots, B_n$, where the block $B_j$ ($j \in [n]$) is generated in the slot $sl_j$ in an ascending order and the length of $\mathcal{C}$ (denoted by len$(\mathcal{C})$) is $n$. We call the rightmost block $B_n$ the head of $\mathcal{C}$, denoted by head$(\mathcal{C})$. \ignore{ We treat the empty string $\varepsilon$ as a legal chain and by convention set head$(\varepsilon) = \varepsilon$.}
\end{definition}

\begin{definition}[Properties of Blockchain~\cite{david2018ouroboros}] \label{defn:propertiesblockchain}
A blockchain protocol should satisfy the following three properties.
\begin{itemize}[leftmargin=*]
\item \textbf{Common Prefix.} $\mathcal{C}_i^{\lceil k}$ is the prefix of $\mathcal{C}_j$ if $\mathcal{C}_i$ and $\mathcal{C}_j$ are the chains maintained by two honest stakeholders at the slot $sl_i$ and  $sl_j$ ($sl_i < sl_j$) respectively, where $\mathcal{C}_i^{\lceil k}$ denotes the chain after pruning the last $k$ blocks from $\mathcal{C}_i$ and $k \in \mathbb{N}$ is the common prefix parameter.

\item \textbf{Chain Quality.} For any $l$ consecutive blocks in the chain of an honest stakeholder, the fraction of blocks contributed by the adversary is at most $1-\mu$. $\mu \in [0,1]$ is also called the chain quality coefficient.

\item \textbf{Chain Growth.} For two chains $\mathcal{C}_i$ and $\mathcal{C}_j$ possessed by two honest stakeholders at the slot $sl_i$ and $sl_j$ respectively, where $sl_j - sl_i \geq s$, we have len$(\mathcal{C}_j)$ $-$ len$(\mathcal{C}_i)$ $\geq$ $\tau \cdot s$. $\tau$ is also called the speed coefficient.
\end{itemize}
\end{definition}

\subsection{Application in Ouroboros Paros Protocol}

\begin{table*}[t]
\center
\begin{tabular}{|p{17.5cm}|}
\hline
\begin{center}
{\sf Functionality} $\mathcal{F}_{\textsf{PS}}$
\end{center}

\begin{description}[leftmargin=*]
\item $\mathcal{F}_{\textsf{PS}}$ runs between a specific signer $U_S$ and other stakeholders $U_1,...,U_n$ as follows:

\item[Key Generation.] When receiving $(\textsf{KeyGen}, sid, U_S)$ from some $U_S$, it first checks whether there exists some $sid'$ such that $sid = (U_S, sid')$ and aborts if not. Otherwise, it forwards $(\textsf {KeyGen}, sid, U_S)$ to the adversary $\mathcal{A}$. Upon receiving the response $(\textsf {PublicKey}, sid, U_S, vk)$ from $\mathcal{A}$, it outputs the public key $(\textsf {PublicKey}, sid, vk)$ to $U_S$, keeps the entry $(sid, U_S, vk)$, and initializes the set $P = \emptyset$.

\item[Sign and Puncture.] When receiving $(\textsf{PSign}, sid, U_S, m = m'...)$ from $U_S$, it first checks whether there exists a recorded entry $(sid, U_S, vk)$  for some $sid$ and $m' \notin P$, and aborts if not. Otherwise, it forwards $(\textsf {Sign}, sid, U_S, m)$ to the adversary $\mathcal{A}$. When receiving the output $(\textsf {Signature}, sid, U_S, m, \sigma)$ from $\mathcal{A}$, it checks whether there exists a recorded entry $(m, \sigma, vk, 0)$ and aborts with an error output if not. Otherwise, it outputs the signature $(\textsf{Signature}, sid, m, \sigma)$ to $U_S$, keeps the entry $(m, \sigma, vk, 1)$ recorded, and sets $P = P \cup \{m'\}$.

\item[Signature Verification.] When receiving $(\textsf{Verify}, sid, m=m'..., \sigma, vk')$ from some party $U_i$ ($i\in[n]$), it proceeds as follows:

\begin{enumerate}
\item If $vk' = vk$ and there exists a recorded entry $(m, \sigma, vk, 1)$ for $vk$, it sets $l = 1$. (The completeness property is guaranteed by this condition: If the public key has been registered and the signature on the msessage $m$ is generated legitimately, then the verification succeeds.)

\item Else, if $vk' = vk$, there is no recorded entry $(m, \sigma', vk', 1)$ for any $\sigma'$, and the signer is uncorrupted, then set $l = 0$ and keep the entry $(m, \sigma, vk', 0)$ recorded. (The unforgeability property is guaranteed by this condition: If the public key has been registered, the message $m$ has never been signed by the signer, and the signer is uncorrupted, then the verification fails.)

\item Else, if there exists a recorded entry $(m, \sigma, vk', l')$, then set $l = l'$. (The consistency property is guaranteed by this condition: the answers are identical for the repeated verification requests.)

\item Else, if $m' \in P$, then let $l = 0$ and keep the entry $(m, \sigma, v, 0)$ recorded. Otherwise, forward $(\textsf{Verify}, sid, m, \sigma, v')$ to $\mathcal{A}$. Upon receiving the response $(\textsf{Verified}, sid, m, \omega)$ from $\mathcal{A}$, set $l = \omega$ and keep the entry $(m, \sigma, v', \omega)$ recorded. (This condition ensures that the adversary $\mathcal{A}$ can only forge signatures of corrupted stakeholders on messages with the unpunctured prefix.)
\end{enumerate}
Finally, it forwards the output $(\textsf{Verified}, sid, m, l)$ to $U_i$.
\end{description}\\
\hline
\end{tabular}
\vspace{0.2cm}\\
{\bf{Figure 2: Functionality $\mathcal{F}_{\textsf{PS}}$ }}\\
\end{table*}

\begin{table*}[t]
\center
\begin{tabular}{|p{17.5cm}|}
\hline
\begin{center}
{\sf Protocol} $\pi'_{\textsf{SPoS}}$
\end{center}

The PoS protocol $\pi'_{\text{SPoS}}$ is run by stakeholders $U_i$ ($i \in [n]$) over a sequence of slots $S = \{sl_1, \cdots, sl_R\}$, and the sub-functionalities  $\mathcal{F}_{\textsf{INIT}}$, $\mathcal{F}_{\textsf{VRF}}$, $\mathcal{F}_{\textsf{PS}}$ and  $\mathcal{F}_{\textsf{DSIG}}$ are employed. Define $T_i =  2^{\ell_{\textsf{VRF}}}(1-(1-f)^{\alpha_i})$ as the threshold for a stakeholder $U_i$, where $\alpha_i$ is the relative stake owned by the stakeholder $U_i$, $\ell_{\textsf{VRF}}$ denotes the length of $\mathcal{F}_{\textsf{VRF}}$ output, and $f$ denotes the active coefficient. $\pi'_{\textsf{SPoS}}$ performs the following steps:

\begin{description}[leftmargin=*]
\item [Initialization.] Upon receiving the request (${\textsf {KeyGen}}, sid, U_i$) from the stakeholder $U_i$, $\mathcal{F}_{\textsf{PS}}$, $\mathcal{F}_{\textsf{DSIG}}$ and $\mathcal{F}_{\textsf{VRF}}$ return the response ({\sf{PublicKey}}, $sid, v_i^{\text{ps}}$), ({\sf{PublicKey}}, $sid, v_i^{\text{dsig}}$) and ({\sf{PublicKey}}, $sid, v_i^{\text{vrf}}$), respectively.

    In the case of the first round, $U_i$ forwards (${\textsf{ver\_keys}}$, $sid, U_i, v_i^{\text{ps}}, v_i^{\text{dsig}}, v_i^{\text{vrf}}$) to $\mathcal{F}_{\textsf{INIT}}$ for initial stake distribution. In any case, $U_i$ finally returns ($U_i, v_i^{\text{ps}}, v_i^{\text{dsig}}, v_i^{\text{vrf}}$) to $\mathcal{Z}$ and then terminates the round.
    In the next round, $U_i$ sends (${\textsf{genblock\_req}}$, $sid, U_i$) to $\mathcal{F}_{\textsf{INIT}}$ and receives ({\sf{genblock}}, $sid, \mathbb{B}_0, \eta$) as the response.

     If $U_i$ has been registered in the first round, the local chain of $U_i$ is set as $\mathcal{C} = B_0$ and its initial state is set as $st = H(B_0)$. Otherwise, the local chain $\mathcal{C}$ is provided by the environment and its initial state is set as $st = H({\text{head}}(\mathcal{C}))$.
\item [Chain Extension.] Each online stakeholder $U_i$ (for each slot $sl_j$) proceeds as follows:
\begin{enumerate}
\item $U_i$ collects the transaction data $d \in \{0, 1\}^*$.

\item $U_i$ collects all valid chains via broadcast and put them into a set $\mathbb{C}$. Then for each chain $\mathcal{C}' \in \mathbb{C}$ and each block $B' = $($st', d', sl', B'_{\pi}, \sigma_{j'}) \in \mathcal{C}'$, $U_i$ checks whether the stakeholder creating $B'$ belongs to the slot leader set for the slot $sl'$ (specifically, $U_i$ parses $B'_\pi$ as ($U_s, y', \pi'$) for some $s$, and verifies that ({\sf{Verified}}, $sid, \eta\| sl'$, $y', \pi', 1$) is returned by $\mathcal{F}_{\textsf{VRF}}$ for the request ({\sf{Verify}}, $sid, \eta\| sl'$, $y', \pi', v_s^{{\text{vrf}}}$) and moreover $y' < T_s$), and whether the request ({\sf{Verify}}, $sid, (sl', st', d', B'_{\pi}), \sigma_{j'}, v_s^{{\text{ps}}}$) to $\mathcal{F}_{\textsf{PS}}$ is responded by ({\sf{Verified}}, $sid, (sl', st', d', B'_{\pi}), 1$).
    $U_i$ sets the new local chain $\mathcal{C}'$ as the longest chain from $\mathbb{C}\cup\mathcal{C}$, and the state as $st = H({\text{head}}(\mathcal{C}'))$.

\item Upon receiving the request ({\sf{EvalProve}}, $sid, \eta\| sl_j$) from $U_i$, $\mathcal{F}_{\textsf{VRF}}$ returns ({\sf{Evaluated}}, $sid, y, \pi$). $U_i$ checks $y < T_i$ to judge whether it belongs to the slot leader set for the slot $sl_j$.
     If yes, a new block $B = (sl_j, st, d, B_\pi, \sigma)$ is produced by $U_i$, where $st$ is its current state, $d \in \{0, 1\}^*$ is the transaction data, $B_\pi = (U_i, y, \pi)$, and $\sigma$ is a signature obtained by receiving the response ({\sf{Signature}}, $sid, (sl_j, st, d, B_\pi), \sigma$) of $\mathcal{F}_{\textsf{PS}}$ for the request ({\sf{PSign}}, $sid, U_i, (sl_j, st, d, B_\pi)$). $U_i$ updates the new local chain to be $\mathcal{C}' = \mathcal{C}|B$, and sets the state $st = H({\text{head}}(\mathcal{C}'))$. Finally, $U_i$ broadcasts the new chain $\mathcal{C}'$ if he has created a new block in this step.
\end{enumerate}
\item [Transaction Generation.] When receiving the request (${\textsf{sign\_tx}}$, $sid', tx$) from the environment, $U_i$ forwards the request ({\sf{Sign}}, $sid, U_i, tx$) to $\mathcal{F}_{\textsf{DSIG}}$, receives the returned ({\sf{Signature}}, $sid, tx, \sigma$), and finally sends the message (${\textsf{signed\_tx}}$, $sid', tx, \sigma$) to the environment.
\end{description}\\
\hline
\end{tabular}
\vspace{0.2cm}\\
{\bf Figure 3: Protocol $\pi'_{\textsf{SPoS}}$ }\\
\end{table*}

Recall that in Ouroboros Praos protocl~\cite{david2018ouroboros}, the adversary is allowed to adaptively corrupt stakeholders under the honest majority of stake assumption, which is achieved by formalizing and realizing the forward secure signatures and the verifiable random function (VRF) under the UC framework.

In their security analysis~\cite{david2018ouroboros}, the basic PoS protocol for the static stake case is formalized into a functionality $\mathcal{F}_{\textsf{SPoS}}$ run by the stakeholders in the $\mathcal{F}_{\textsf{INIT}}$-hybrid model, where the stakeholders can interact with the ideal functionalities $\mathcal{F}_{\textsf{INIT}}$, $\mathcal{F}_{\textsf{VRF}}$, $\mathcal{F}_{\textsf{KES}}$ and $\mathcal{F}_{\textsf{DSIG}}$.  Specifically, the functionality $\mathcal{F}_{\textsf{INIT}}$ determines the genesis block $B_0$, where $\rho$ is set to be the random nonce $\eta$, and the public key for the stakeholder $U_i$ is $vk_i = \{v_i^{\text{vrf}}, v_i^{\text{kes}}, v_i^{\text{dsig}}\}$ for the functionality $\mathcal{F}_{\textsf{VRF}}$, $\mathcal{F}_{\textsf{KES}}$ and $\mathcal{F}_{\textsf{DSIG}}$ respectively. In addition, $\mathcal{F}_{\textsf{VRF}}$ models a special VRF used to select the leaders eligible to issue new blocks, $\mathcal{F}_{\textsf{KES}}$ models forward secure (key-evolving) signature schemes used by the leaders to sign the new block, and $\mathcal{F}_{\textsf{DSIG}}$ models EUF-CMA secure signature schemes used to sign the transactions. In addition, the block in Ouroboros Praos includes an additional entry $B_{\pi}$, which is the output of VRF and used as the block proof and the validity check. It is proved $\mathcal{F}_{\textsf{SPoS}}$ can achieve common prefix, chain growth and chain quality properties, and these properties remain unchanged when all of the underlying functionalities are replaced by their real world implementations (also called real experiment). Finally, the protocol is extended to the dynamic case allowing for changes of the stake distribution over time. To avoid unnecessary repetitions, we omit details of the above functionalities and proofs~\cite{david2018ouroboros}.


The puncturable signature can resist LRSL attack due to the fact that the leader $U$ in slot $sl_j$ would update the secret signing key $sk$ after the block is proposed, and with the updated signing key the adversary cannot forge a signature at $sl_j$ in the name of $U$ and thus cannot re-write a new block at the position $sl_j$. We now present an ideal functionality $\mathcal{F}_{\textsf{PS}}$ of puncturable signature scheme (described in Figure 2), replace $\mathcal{F}_{\textsf{KES}}$ using $\mathcal{F}_{\textsf{PS}}$ in the revised static proof-of-stake protocol $\pi'_{\text{SPoS}}$ (described in Figure 3), prove that $\pi'_{\text{SPoS}}$ still satisfies all properties of blockchain (i.e., common prefix, chain quality and chain growth)\footnote{Also note that as in \cite{david2018ouroboros}, we also assume in this paper that honest stakeholders would perform erasures securely to achieve key update, which is reasonable to capture protocol security against adaptive adversaries as argued in \cite{lindell2009adaptively}.}, and also show that $\mathcal{F}_{\textsf{PS}}$ can be realized by basic puncturable signature construction in Section 3.2.

\ignore{In order to provide security against long-range attack and allow the adversary to corrupt any stakeholders adaptively as long as the stakeholder distribution maintains an honest majority of stake, Ouroboros Paros~\cite{david2018ouroboros} introduces the ideal functionality $\mathcal{F}_{\textsf{KES}}$ of key-evolving signature scheme with forward security and constructs the proof-of-stake blockchain protocol where blocks are signed with a forward secure signature scheme modeled by $\mathcal{F}_{\textsf{KES}}$. They also provide combinatorial analysis of the proposed proof-of-stake protocol $\pi_{\text{DPoS}}$ using the idealized functionalities including $\mathcal{F}_{\textsf{KES}}$.}


In a high level, the ideal functionality $\mathcal{F}_{\textsf{PS}}$ (Figure 2) allows an adversary compromising the signing keys of the signer to forge signatures only for messages with the unpunctured prefix. Starting from the regular signature functionality in~\cite{canetti2004univerally}, $\mathcal{F}_{\textsf{PS}}$ is extended by packing the signing operation with a puncture operation. For the verification operation,  $\mathcal{F}_{\textsf{PS}}$ allows the adversary to control the response only for the signature on the messages with the unpunctured prefix.

\begin{theorem}\label{thm:KES-PS}
The revised proof-of-stake blockchain protocol $\pi'_{\text{SPoS}}$ described in Figure 3 still satisfies common prefix, chain quality and chain growth.
\end{theorem}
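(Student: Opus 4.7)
The plan is to proceed by a direct reduction to the combinatorial analysis of Ouroboros Praos \cite{david2018ouroboros}, arguing that $\mathcal{F}_{\textsf{PS}}$ provides all the signature-related guarantees that $\mathcal{F}_{\textsf{KES}}$ does for this protocol. First I would inspect the proof of common prefix, chain quality, and chain growth in \cite{david2018ouroboros} and isolate exactly which property of $\mathcal{F}_{\textsf{KES}}$ is invoked. Inspection shows that the underlying combinatorial argument is purely about the leader-election process driven by $\mathcal{F}_{\textsf{VRF}}$ and the network model; the signature functionality is used only to rule out the event that the adversary produces a verifying block $(sl_j, st, d, B_\pi, \sigma)$ attributed to an honest party $U_i$ without $U_i$ having actually signed $(sl_j, st, d, B_\pi)$ for that slot. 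Once this ``no-retroactive-forgery'' event is negligible, the same Markov-chain/martingale arguments yielding the three properties go through verbatim.

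Next I would verify that $\mathcal{F}_{\textsf{PS}}$ gives us this same non-forgery guarantee when plugged into $\pi'_{\textsf{SPoS}}$. The crucial observation is the usage pattern in Figure 3: whenever an honest $U_i$ legitimately produces a block at slot $sl_j$, it calls $\mathcal{F}_{\textsf{PS}}$ with $(\textsf{PSign}, sid, U_i, (sl_j, st, d, B_\pi))$, and by the $\textsf{Sign and Puncture}$ interface this atomically signs the message and adds $sl_j$ (the prefix) to the puncturing set $P$. Moreover, because the $\textsf{PSign}$ step aborts when $sl_j \in P$, every honest party signs at most one block per slot under its own key. Consequently, after the block for $sl_j$ is produced, the prefix $sl_j$ is permanently in $P$, so any later corruption of $U_i$ cannot let the adversary generate a different valid signature $\sigma'$ on some $(sl_j, st', d', B'_\pi)$ under $U_i$'s puncturable key: by the fourth bullet of $\textsf{Signature Verification}$, verification of such $\sigma'$ returns $l = 0$ whenever the prefix $sl_j$ lies in $P$, independently of the adversary.

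From here the reduction is mechanical. I would define a sequence of hybrids starting with an execution of the original Ouroboros Praos protocol $\pi_{\textsf{SPoS}}$ in the $(\mathcal{F}_{\textsf{INIT}}, \mathcal{F}_{\textsf{VRF}}, \mathcal{F}_{\textsf{KES}}, \mathcal{F}_{\textsf{DSIG}})$-hybrid world and ending with $\pi'_{\textsf{SPoS}}$ in the $(\mathcal{F}_{\textsf{INIT}}, \mathcal{F}_{\textsf{VRF}}, \mathcal{F}_{\textsf{PS}}, \mathcal{F}_{\textsf{DSIG}})$-hybrid world. The only interface used differently by the two hybrids is the one signing block payloads; the transcripts that reach the environment (chain contents, VRF outputs, transaction signatures, stake distribution, leader-election) are identically distributed conditioned on the non-forgery event above. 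Since the combinatorial properties in \cite{david2018ouroboros} are phrased as events over exactly these transcripts and are already shown to hold except with negligible probability when non-forgery holds, they transfer to $\pi'_{\textsf{SPoS}}$ with the same bounds (up to a negligible additive loss reflecting the forgery probability admitted by $\mathcal{F}_{\textsf{PS}}$).

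The main obstacle I anticipate is a subtle mismatch between the temporal guarantee given by $\mathcal{F}_{\textsf{KES}}$ (the adversary cannot forge on periods strictly before the current one, even after corruption) and the prefix-based guarantee given by $\mathcal{F}_{\textsf{PS}}$ (the adversary cannot forge on any prefix already in $P$). One needs to check that signing and puncturing in $\pi'_{\textsf{SPoS}}$ actually happen in the order that keeps $P$ correctly populated at every point in time, so that every slot in which an honest party produced a block is present in $P$ before any potential corruption. This is ensured by the atomicity of $\textsf{PSign}$ in Figure 2 and by the assumption (inherited from \cite{david2018ouroboros,lindell2009adaptively}) that honest parties perform secure erasures immediately. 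Under this assumption, the prefix-based notion is in fact strictly stronger than what the Praos analysis requires per slot, and the theorem follows.
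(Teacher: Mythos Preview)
Your proposal is correct and follows the same reduction to the Praos analysis as the paper. The only presentational difference is that the paper, instead of an abstract hybrid/transcript-indistinguishability argument, directly constructs a simulating adversary $\mathcal{A}'$ against $\pi_{\textsf{SPoS}}$ with $\mathcal{F}_{\textsf{KES}}$ that internally tracks both the KES counter $\mathsf{k}_{\mathsf{ctr}}$ and the puncture set $P$, and observes (exactly your ``temporal vs.\ prefix'' point) that in the protocol's usage these two update in lockstep---each signed slot $sl$ adds $sl$ to $P$ and increments $\mathsf{k}_{\mathsf{ctr}}$---so the two unforgeability conditions coincide.
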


\ignore{
We present the full proof in Appendix \ref{appendix:functionreplace}. The strategy of our proof is: given the event of violating one of common prefix, chain quality and chain growth in an execution of $\pi'_{\text{SPoS}}$ with access to $\mathcal{F}_{\textsf{PS}}$ by adversary $\A$ and environment $\mathcal{Z}$, we can construct another adversary $\A'$  so that the corresponding event happens with the same probability in an execution of $\pi_{\text{SPoS}}$ with access to $\mathcal{F}_{\textsf{KES}}$ (c.f. Appendix \ref{appendix:KES function}) by adversary $\A'$ and environment $\mathcal{Z}$, where $\pi_{\text{SPoS}}$ is original protocol ~\cite{david2018ouroboros}. If the environment $\mathcal{Z}$ can distinguish a real execution with $\mathcal{A}$ and $\pi'_{\text{SPoS}}$ (accessing $\mathcal{F}_{\textsf{PS}}$) from an ideal execution, then $\mathcal{Z}$ can also distinguish a real execution with $\A'$ and $\pi_{\text{SPoS}}$  (accessing $\mathcal{F}_{\textsf{KES}}$) from an ideal execution.}

\begin{proof}\renewcommand{\qedsymbol}{$\blacksquare$}
 Given the event of violating one of common prefix, chain quality and chain growth in an execution of $\pi'_{\text{SPoS}}$ with access to $\mathcal{F}_{\textsf{PS}}$ by adversary $\A$ and environment $\mathcal{Z}$, we can construct an adversary $\A'$ so that the corresponding event happens with the same probability in an execution of $\pi_{\text{SPoS}}$ with access to $\mathcal{F}_{\textsf{KES}}$ by adversary $\A'$ and environment $\mathcal{Z}$, where $\pi_{\text{SPoS}}$ is original protocol~\cite{david2018ouroboros}. Specifically, the adversary $\A'$ simulates $\A$ as follows:
\begin{itemize}[leftmargin=*]
 \item Upon receiving $(\textsf {KeyGen}, sid, U_S)$ from $\mathcal{F}_{\textsf{PS}}$, $\A'$ runs as in the case of $\mathcal{F}_{\textsf{KES}}$ for key generation, sets counter $\mathsf{k}_\mathsf{ctr} = 1$ and $P = \emptyset$, and sends $(\textsf {PublicKey}, sid, U_S, v)$ to $\mathcal{F}_{\textsf{PS}}$\footnote{ In $\mathcal{F}_{\textsf{KES}}$\cite{david2018ouroboros}, $\mathsf{k}_\mathsf{ctr}$ denotes the current time period. Specifically, after signing any message in the period $\mathsf{k}_\mathsf{ctr}$, the signer would generate a new secret key for the next time period  $\mathsf{k}_\mathsf{ctr}+1$. } .
 \item Upon receiving $(\textsf {Sign}, sid, U_S, m = m'\cdots)$ from $\mathcal{F}_{\textsf{PS}}$, $\A'$ ignores the request if $m' \in P$. Otherwise, it sets $j = \mathsf{k}_\mathsf{ctr}$ and computes the signature $\sigma$ as in the case of $\mathcal{F}_{\textsf{KES}}$. Then $\A'$ updates the corresponding secret key, sets counter $\mathsf{k}_\mathsf{ctr} = j + 1$ and $P = P \cup {m'}$, and sends $(\textsf {Signature}, sid, U_S, m, \sigma)$ to $\mathcal{F}_{\textsf{PS}}$.
 \item Upon receiving $(\textsf{Verify}, sid, m, \sigma, v')$ from $\mathcal{F}_{\textsf{PS}}$, $\A'$ verifies the signature as in the case of $\mathcal{F}_{\textsf{KES}}$, and sends $(\textsf{Verified}, sid, m, \phi)$ to $\mathcal{F}_{\textsf{PS}}$.
\end{itemize}

Note that in an execution of $\pi'_{\text{SPoS}}$ with access to $\mathcal{F}_{\textsf{PS}}$, $m'$ in $\mathcal{F}_{\textsf{PS}}$ equals $sl$ (i.e, the slot parameter of the last block) (c.f. Definition~\ref{defn:block}), while in the execution of $\pi_{\text{SPoS}}$ with access to $\mathcal{F}_{\textsf{KES}}$, the input to signature algorithm is $(\textsf{Usign}, sid, m=sl||..., sl)$, which means that the update of punctured set $P$ is consistent with that of counter $\mathsf{k}_\mathsf{ctr}$. In other words, when one signing happens on $m$ containing some prefix $sl$, $P$ adds $sl$ in $\mathcal{F}_{\textsf{PS}}$ while $\mathsf{k}_\mathsf{ctr}$ increases by 1 in $\mathcal{F}_{\textsf{KES}}$.

Therefore, $\mathcal{A}'$ can simulate the execution for $\A$. If the environment $\mathcal{Z}$ can distinguish a real execution with $\mathcal{A}$ and $\pi'_{\text{SPoS}}$ (accessing $\mathcal{F}_{\textsf{PS}}$) from an ideal execution  that provides the properties of common prefix, chain quality and chain growth, then $\mathcal{Z}$ can also distinguish a real execution with $\A'$ and $\pi_{\text{SPoS}}$ (accessing $\mathcal{F}_{\textsf{KES}}$) from an ideal execution, which means that any winning advantage of the adversary against common prefix, chain quality and chain growth in $\pi'_{\text{SPoS}}$ with access to $\mathcal{F}_{\textsf{PS}}$ immediately implies at least the same advantage in $\pi_{\text{SPoS}}$ with access to $\mathcal{F}_{\textsf{KES}}$.
\end{proof}

\begin{remark}
\emph{The dynamic stake case can be extended just as in \cite{david2018ouroboros}. Specifically, $\mathcal{F}_{\textsf{INIT}}$ is replaced with a resettable leaky beacon functionality, where the adversary is allowed to obtain the nonce value for the next epoch and reset the nonce several times, and other sub-functionalities remain unchanged.}
\end{remark}

\ignore{Due to space limit, we present the full proof in Appendix~\ref{app:sec5} (c.f. Theorem E.1). The strategy of our proof is: given the event of violating one of common prefix, chain quality and chain growth in an execution of $\pi'_{\text{DPoS}}$ with access to $\mathcal{F}_{\textsf{PS}}$ by adversary $\A$ and environment $\mathcal{Z}$, we can construct another adversary $\A'$ so that the corresponding event happens with the same probability in an execution of $\pi_{\text{DPoS}}$ with access to $\mathcal{F}_{\textsf{KES}}$ (c.f. Appendix~\ref{appendix:KES function}) by adversary $\A'$ and environment $\mathcal{Z}$, where $\pi_{\text{DPoS}}$ is original protocol~\cite{david2018ouroboros}. The environment $\mathcal{Z}$ cannot distinguish an execution with $\A$ and $\mathcal{F}_{\textsf{PS}}$ from an execution with $\A'$ and $\mathcal{F}_{\textsf{KES}}$.}

\ignore{
\begin{proof}
 Given the event of violating one of common prefix, chain quality and chain growth in an execution of $\pi'_{\text{DPoS}}$ with access to $\mathcal{F}_{\textsf{PS}}$ by adversary $\A$ and environment $\mathcal{Z}$, we can construct an adversary $\A'$ so that the corresponding event happens with the same probability in an execution of $\pi_{\text{DPoS}}$ with access to $\mathcal{F}_{\textsf{KES}}$ (c.f. Appendix\ref{appendix:KES function}) by adversary $\A'$ and environment $\mathcal{Z}$, where $\pi_{\text{DPoS}}$ is original protocol~\cite{david2018ouroboros}. Specifically, the adversary $\A'$ simulates $\A$ as follows:
\begin{itemize}[leftmargin=*]
 \item Upon receiving $(\textsf {KeyGen}, sid, U_S)$ from $\mathcal{F}_{\textsf{PS}}$, $\A'$ runs as in the case of $\mathcal{F}_{\textsf{KES}}$, sets counter $\mathsf{k}_\mathsf{ctr} = 1$ and $P = \emptyset$, and sends $(\textsf {PublicKey}, \\ sid, U_S, v)$ to $\mathcal{F}_{\textsf{PS}}$.
 \item Upon receiving $(\textsf {PSign}, sid, U_S, m = m'\cdots)$ from $\mathcal{F}_{\textsf{PS}}$, $\A'$ ignores the request if $m' \in P$. Otherwise, it sets $j = \mathsf{k}_\mathsf{ctr}$ and computes the signature $\sigma$ as in the case of $\mathcal{F}_{\textsf{KES}}$. Then $\A'$ updates the corresponding secret key, sets counter $\mathsf{k}_\mathsf{ctr} = j + 1$ and $P = P \cup {m'}$, and sends $(\textsf {Signature}, sid, U_S, m, \sigma)$ to $\mathcal{F}_{\textsf{PS}}$.
 \item Upon receiving $(\textsf{Verify}, sid, m, \sigma, v')$ from $\mathcal{F}_{\textsf{PS}}$, $\A'$ \ignore{sets $j = \mathsf{k}_\mathsf{ctr}$,}verifies the signature as in the case of $\mathcal{F}_{\textsf{KES}}$, and sends $(\textsf{Verified}, sid, m, \phi)$ to $\mathcal{F}_{\textsf{PS}}$.
\end{itemize}
 Note that in an execution of $\pi'_{\text{DPoS}}$ with access to $\mathcal{F}_{\textsf{PS}}$, $m'$ in $\mathcal{F}_{\textsf{PS}}$ equals $st$ (i.e., the hash of the last block) in $\pi_{\text{DPoS}}$ (c.f. Definition~\ref{defn:block} and~\ref{defn:blockchain}), while in the execution of $\pi_{\text{DPoS}}$ with access to $\mathcal{F}_{\textsf{KES}}$, the input to signature verification is $(\textsf{Verify}, sid, m, sl, \sigma, v')$ where $sl$ (also included in $m$) is the slot parameter associated with $st$, which means that the update of punctured set $P$ is consistent with that of counter $\mathsf{k}_\mathsf{ctr}$. In other words, when one signing happens on $m$ containing some $st$ and $sl$, $P$ adds $st$ in $\mathcal{F}_{\textsf{PS}}$ while $\mathsf{k}_\mathsf{ctr}$ increases by 1 in $\mathcal{F}_{\textsf{KES}}$. Therefore, $\mathcal{A}'$ can simulate the execution for $\mathcal{A}$, and the environment $\mathcal{Z}$ cannot distinguish an execution with $\A$ and $\mathcal{F}_{\textsf{PS}}$ from an execution with $\A'$ and $\mathcal{F}_{\textsf{KES}}$, which means that any winning advantage of the adversary against common prefix, chain quality and chain growth in $\pi'_{\text{DPoS}}$ with access to $\mathcal{F}_{\textsf{PS}}$ immediately implies at least the same advantage in $\pi_{\text{DPoS}}$ with access to $\mathcal{F}_{\textsf{KES}}$.
\end{proof}
}

{\it Realizing $\mathcal{F}_{\textsf{PS}}$.} Following the proof strategy of ~\cite{canetti2004univerally}, we show how to translate a puncturable signature scheme $\Sigma$ into a signature protocol $\pi_{\Sigma}$ in the present setting and then prove that $\pi_{\Sigma}$ can securely realize $\mathcal{F}_{\textsf{PS}}$. Specifically, $\pi_{\Sigma}$ protocol runs between a signer $U_{S}$ and other stakeholders $U_1,...,U_n$, and proceeds based on a puncturable signature scheme $\Sigma$=($\setup$, $\punc$,  $\sign$, $\verify$) as follows:

\begin{enumerate}[leftmargin=*]
\item {\bf Key Generation:} When running $\pi_{\Sigma}$ and receiving ($\textsf{KeyGen}, sid, U_S$), $U_S$ first checks whether there exists $sid'$ such that $sid = (U_S, sid')$. If yes, he runs $\setup(1^{\lambda})$ to generate the public/secret key pair $(vk, sk)$, keeps the signing key ($sid, U_S, sk$) for $U_S$, sets $P = \emptyset$, and finally outputs ($\textsf{PublicKey}, sid, vk$). Otherwise, it aborts.

\item {\bf Sign and Puncture:} When receiving ($\textsf{PSign}$, $sid$, $U_S$, $m=m'...$) such that $U_S$ owns the signing key ($sid$, $U_S$, $sk$) for $sid$, $U_S$ checks whether $m' \in P$. If not, $U_S$ runs $\sign(sk,m)$ to obtain the signature $\sigma$, runs $\punc(sk,m')$ to update the secret key, sets $P=P \cup{m'}$, and outputs ($\textsf{Signature}, sid, m, \sigma$).

\item {\bf Verify:} If stakeholder $U_i$ ($i\in[n]$) receives ($\textsf{Verify}, sid, m, \sigma, vk'$) as an input, he outputs ($\textsf{Verified}, sid, m, \verify(vk',m,\sigma)$).
\end{enumerate}

\begin{theorem}\label{thm:realizeps}
If a puncturable signature scheme $\Sigma = (\setup, \punc, \sign, \verify)$ satisfies the unforgeability with adaptive puncturing as in Definition \ref{defn:unforg}, then $\pi_{\Sigma}$ securely realizes $\mathcal{F}_{\textsf{PS}}$.
\end{theorem}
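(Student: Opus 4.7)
The plan is to follow the standard UC simulation paradigm. I will first construct an ideal-world simulator $\mathcal{S}$ that internally maintains an honest copy of the puncturable signature state, then argue that no environment $\mathcal{Z}$ can distinguish the ideal execution with $\mathcal{S}$ and $\mathcal{F}_{\textsf{PS}}$ from the real execution of $\pi_{\Sigma}$ with a dummy adversary $\mathcal{A}$, and finally reduce any such distinguishing advantage to the unforgeability game $\expt^{\sf ps}_\A(1^\secparam)$ from Definition~\ref{defn:unforg}.

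The simulator $\mathcal{S}$ will be nearly transparent: on a $(\textsf{KeyGen}, sid, U_S)$ handoff from $\mathcal{F}_{\textsf{PS}}$, it runs $(vk, sk)\leftarrow \setup(1^\lambda, \ell, k)$, stores $sk$ together with an internal puncture set $P_{\mathcal{S}} = \emptyset$, and returns $(\textsf{PublicKey}, sid, U_S, vk)$. On a $(\textsf{Sign}, sid, U_S, m)$ handoff with $m$ having prefix $m'$, it computes $\sigma\leftarrow \sign(sk, m)$, then updates $sk\leftarrow \punc(sk, m')$ and sets $P_{\mathcal{S}} = P_{\mathcal{S}}\cup\{m'\}$, returning $(\textsf{Signature}, sid, U_S, m, \sigma)$. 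On a $(\textsf{Verify}, sid, m, \sigma, v')$ handoff that reaches clause~4 of $\mathcal{F}_{\textsf{PS}}$, it simply returns $(\textsf{Verified}, sid, m, \verify(v', m, \sigma))$. If the environment corrupts $U_S$, $\mathcal{S}$ hands $\mathcal{Z}$ the current (punctured) $sk$, exactly as the real signer would have to do.

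Next I would compare the outputs of the two experiments step by step. The $\textsf{KeyGen}$ and $\textsf{Sign}$ messages produced by $\mathcal{S}$ are identically distributed to those of the real $\pi_{\Sigma}$ since $\mathcal{S}$ runs the honest algorithms on a freshly sampled key. The only place where the two worlds can diverge is in the verification step, because $\mathcal{F}_{\textsf{PS}}$ overrides the output of $\verify$ in clauses~1--3 of its $\textsf{Verify}$ interface. Clause~1 (forced $l=1$) is consistent with the real world by the correctness of $\Sigma$ (Lemma~\ref{defn:lemma correctness1}) applied both before and after punctures on prefixes distinct from that of $m$; clause~3 is consistent because $\verify$ is deterministic, so repeated queries on $(m,\sigma,v')$ yield the same bit; and clause~4 is answered by the simulator exactly as the real protocol would. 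The one non-trivial divergence is clause~2: $\mathcal{F}_{\textsf{PS}}$ forces $l=0$ whenever $v' = vk$, the signer is honest, and $(m,\sigma)$ has never been produced as a genuine signature, while in $\pi_\Sigma$ we have $l=\verify(vk,m,\sigma)$ which might equal $1$. Denote this event $\mathsf{Forge}$. Similarly, correctness failures inside clause~1 (bounded by $\mu(\ell,k)$, which we can absorb as an additional small term or neutralize by making $\mathcal{S}$ also record such anomalies) must be handled, but the dominant term is $\Pr[\mathsf{Forge}]$.

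The final step is a reduction showing $\Pr[\mathsf{Forge}]\leq \negl(\lambda)$. Given $\mathcal{Z}$ and $\mathcal{A}$ that cause $\mathsf{Forge}$, I will build $\mathcal{B}$ against $\expt^{\sf ps}_\mathcal{B}(1^\secparam)$: $\mathcal{B}$ receives $vk$ from its challenger, embeds it in the simulated $\mathcal{F}_{\textsf{PS}}$, and answers every $\textsf{Sign}/\textsf{Puncture}$ call by forwarding to its own signing and puncturing oracles (noting that $\pi_\Sigma$ couples one puncture on $m'$ with each sign on a message with prefix $m'$, which is exactly the interface used by $\mathcal{S}$ above). When $\mathsf{Forge}$ fires on some $(m,\sigma)$ with prefix $m'$, $\mathcal{B}$ declares $m'$ as its challenge string and outputs $(m,\sigma)$ as its forgery. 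Because $\mathsf{Forge}$ presupposes an uncorrupted signer, $\mathcal{B}$ never needs to invoke the corruption oracle; and because $m$ has never been handed to the signer via $\textsf{Sign}$ (that is precisely the precondition for clause~2 of $\mathcal{F}_{\textsf{PS}}$), we have $m\notin Q_{\sf sig}$ and $\verify(vk,m,\sigma)=1$, so $\mathcal{B}$ wins. The anticipated main obstacle is the bookkeeping around adaptive corruption of $U_S$: once the signer is corrupted, clause~2 no longer applies and $\mathcal{S}$ must surrender a secret key whose punctured state is consistent with everything $\mathcal{Z}$ has already observed; this is handled by having $\mathcal{S}$ maintain $sk$ in lock-step with $\mathcal{F}_{\textsf{PS}}$'s set $P$, and by observing that any forgery the environment might attempt on an already-corrupted signer is captured instead by clause~4, which $\mathcal{S}$ answers honestly and so cannot produce a spurious mismatch.
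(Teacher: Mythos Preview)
Your overall architecture matches the paper's: a Canetti--style simulator that runs the honest $\Sigma$ algorithms, followed by a reduction of any distinguishing gap to the game $\expt^{\sf ps}$. The simulator you describe is essentially identical to the paper's $\mathcal{S}$, and your clause-by-clause case analysis of the \textsf{Verify} interface is more explicit than what the paper writes.

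There is, however, a genuine gap in your treatment of clause~4. You assert that clause~4 ``is answered honestly by $\mathcal{S}$ and so cannot produce a spurious mismatch,'' but clause~4 has \emph{two} branches: when $m'\in P$ the functionality unilaterally sets $l=0$ without consulting the simulator; only in the ``otherwise'' branch does $\mathcal{S}$ get to supply $\verify(v',m,\sigma)$. Consequently, the following scenario is a second divergence that your $\mathsf{Forge}$ event (tied to clause~2 and an \emph{uncorrupted} signer) does not cover: the environment lets $U_S$ sign-and-puncture on prefix $m'$, then corrupts $U_S$ (receiving the already-punctured key), and finally submits a pair $(m,\sigma)$ with prefix $m'$ that passes $\verify(vk,m,\sigma)=1$. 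In the real protocol the verifier outputs~$1$; in the ideal world clause~2 is skipped (signer corrupted), clause~4 fires with $m'\in P$, and the output is~$0$.

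This second divergence is precisely why Definition~\ref{defn:unforg} equips the adversary with a \textsf{Corruption} oracle that releases $\sk$ once the challenge prefix lies in $P$. Your reduction must therefore \emph{use} that oracle: when $\mathcal{Z}$ corrupts $U_S$, $\mathcal{B}$ declares the (already punctured) prefix $m'$ as its challenge, invokes \textsf{Corruption} to obtain the punctured key, and hands it to $\mathcal{Z}$; the eventual $(m,\sigma)$ then wins $\expt^{\sf ps}_{\mathcal{B}}$. Your explicit claim that ``$\mathcal{B}$ never needs to invoke the corruption oracle'' is thus incorrect and leaves the post-corruption forgery case unbounded. The paper's proof handles this by defining its bad event $E$ with the weaker condition ``$U_S$ is not corrupted \emph{before} $m'$ is punctured'' (so corruption afterward is allowed) and by having the forger issue the corresponding \textsf{Corruption} query in the reduction.
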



\begin{proof}\renewcommand{\qedsymbol}{$\blacksquare$}
Assume that $\pi_{\Sigma}$ does not realize $\mathcal{F}_{\textsf{PS}}$, i.e. for any ideal-process simulator $\mathcal{S}$， there exists an environment $\mathcal{Z}$ that can distinguish a real setting with $\mathcal{A}$ and  $\pi_{\Sigma}$ from an ideal setting with $\mathcal{S}$ and $\mathcal{F}_{\textsf{PS}}$. Then following the proof approach of ~\cite{canetti2004univerally}, we use $\mathcal{Z}$ to construct a forger $G$ that wins with non-negligible probability in the experiment $\expt^{\sf ps}_G(1^\secparam)$ for the underlying puncturable signature scheme $\Sigma$ as defined in Definition \ref{defn:unforg}, which in turn violates the unforgeability with adaptive puncturing of $\Sigma$. Since $\mathcal{Z}$ can succeed for any simulator $\mathcal{S}$, it also succeeds for the following specific $\mathcal{S}$, where $\mathcal{S}$ runs a simulated copy of $\mathcal{A}$:

\begin{enumerate}[leftmargin=*]
\item $\mathcal{S}$ forwards any input from $\mathcal{Z}$ to $\mathcal{A}$, and also forwards any outputs from $\mathcal{A}$ to $\mathcal{Z}$.

\item Whenever receiving ($\textsf{KeyGen},sid,U_S$) from $\mathcal{F}_{\textsf{PS}}$, $\mathcal{S}$ checks whether $sid = (U_S, sid')$ for some $sid'$, and ignores this request if not. Otherwise, $\mathcal{S}$ runs $\setup(1^{\lambda})$ to generate the public/secret key pair $(vk,sk)$, keeps the signing key ($sid,U_S,sk$) recorded, sets $P = \emptyset$, and returns ($\textsf{PublicKey},sid,vk$) to $\mathcal{F}_{\textsf{PS}}$.

\item Whenever receiving ($\textsf{PSign},sid,U_S,m=m'...$) from $\mathcal{F}_{\textsf{PS}}$, $\mathcal{S}$ checks whether there exists a recorded signing key ($sid,U_S,sk$) and $m' \notin P$, and ignores this request if not. Otherwise, $\mathcal{S}$ runs $\sign(sk,m)$ to obtain $\sigma$, runs $\punc(sk,m')$ to obtain the update secret keys, sets $P = P \cup \{m'\}$, and returns ($\textsf{Signature}$, $sid$, $m$, $\sigma$) to $\mathcal{F}_{\textsf{PS}}$.

\item Whenever receiving ($\textsf{Verify},sid,m,\sigma,vk'$) from $\mathcal{F}_{\textsf{PS}}$,  $\mathcal{S}$ returns ($\textsf{Verified},sid,m,\verify(vk',m,\sigma)$) to $\mathcal{F}_{\textsf{PS}}$.

\item Whenever $\mathcal{A}$ corrupts a party $U_i$,  $\mathcal{S}$ corrupts $U_i$ in the ideal execution. If $U_i$ is the signer $U_S$, $\mathcal{S}$ returns the current signing keys $sk$ as well as the internal state (possibly empty) of the signing algorithm $\sign$ as the internal state of $U_i$.
\end{enumerate}

Recall that $\mathcal{Z}$ can distinguish the real execution with $\mathcal{A}$ and $\pi_{\Sigma}$ from an ideal execution with $\mathcal{S}$ and $\mathcal{F}_{\textsf{PS}}$, then we demonstrate that the underlying $\Sigma$ is forgeable by constructing a forger $G$ as follows. $G$ runs a internal instance of $\mathcal{Z}$, and simulates the interactions with $\mathcal{S}$ and $\mathcal{F}_{\textsf{PS}}$ for $\mathcal{Z}$, where $G$ acts as both $\mathcal{S}$ and $\mathcal{F}_{\textsf{PS}}$ to $\mathcal{Z}$. Moreover, in the simulating process, like the case for $\mathcal{S}$, $G$ will also run a simulated copy of $\mathcal{A}$.

When $G$ is activated, it returns the public key $vk$ from its own experiment to $\mathcal{A}$. When $\mathcal{Z}$ activates $U_S$ with input ($\sign,sid,U_S,m=m'...$), $G$ calls its signing oracle with $m$ to obtain a signature $\sigma$, calls its puncture oracle with $m'$ to update the secret keys, and then updates the puncturing set $P = P \cup \{m'\}$ and the set of queried messages $Q_{\sf sig} = Q_{\sf sig} \cup \{m\}$. When $\mathcal{Z}$ activates an uncorrupted party with input ($\textsf{Verify},sid,m=m'...,\sigma,vk'$), $G$ checks whether $m \in Q_{\sf sig}$, the signer is uncorrupted before $m'$ is punctured, and $\verify$($vk'$,$m$,$\sigma$) = $1$. If all these conditions are met, then in its experiment $\expt^{\sf ps}_G(1^\secparam)$, $G$ outputs $m'$ as the challenging string, and makes series of queries as in Definition \ref{defn:unforg}. Eventually $G$ outputs the tuple ($m,\sigma$), succeeding in the experiment.

Denote by $E$ the event that in a run of $\pi_{\Sigma}$ with $\mathcal{Z}$, some party $U_i$ is activated with a verification request ($\verify,sid,m$=$m'...,\sigma,vk'$), where $\verify(vk',m,\sigma) = 1$, $m \notin Q_{\sf sig}$, and $U_S$ is not corrupted before $m'$ is punctured. If $E$ does not occur, $\mathcal{Z}$ would not distinguish between a real and an ideal execution. However, we assume that $\mathcal{Z}$ can distinguish real from ideal executions with non-negligible advantage, and thus $E$ also happens with non-negligible advantage. Note that, from the view of $\mathcal{Z}$ and $\mathcal{A}$, the interaction with $G$ looks the same as the interaction with $\pi_{\Sigma}$, which means that whenever $E$ happens, a valid forgery is outputted by $G$.
\end{proof}

\ignore{
\begin{proof}
Assume that $\pi_{\Sigma}$ does not realize $\mathcal{F}_{\textsf{PS}}$, i.e. there exists an environment $\mathcal{Z}$ that can tell whether it is interacting with a prescribed simulator $\mathcal{S}$ and $\mathcal{F}_{\textsf{PS}}$, or with an adversary $\mathcal{A}$ and $\pi_{\Sigma}$. Then following the proof approach of ~\cite{canetti2004univerally} we can show $\mathcal{Z}$ can be used to construct a forger $G$ that wins with non-negligible probability in the experiment for the underlying puncturable signature scheme $\Sigma$ as defined in Definition \ref{defn:unforg}, which in turn violates the unforgeability with adaptive puncturing of $\Sigma$. Since $\mathcal{Z}$ can succeed for any simulator $\mathcal{S}$, it also succeeds for the following specific $\mathcal{S}$, where $\mathcal{S}$ runs a simulated copy of $\mathcal{A}$:

\begin{enumerate}[leftmargin=*]
\item Any input from $\mathcal{Z}$ is forwarded to $\mathcal{A}$, and any outputs from $\mathcal{A}$ is returned to $\mathcal{Z}$.
\item Whenever $\mathcal{S}$ receives ($\textsf{KeyGen},sid,U_S$) from $\mathcal{F}_{\textsf{PS}}$, it proceeds as follows: if $sid$ is not of the form ($U_S, sid'$), then $\mathcal{S}$ ignores this request. Otherwise, $\mathcal{S}$ runs $\setup(1^{\lambda})$, records the signing key ($sid,U_S,sk$), sets $P = \emptyset$, and outputs ($\textsf{VefificationKey},sid,vk$) to $\mathcal{F}_{\textsf{PS}}$.
\item Whenever $\mathcal{S}$ receives ($\textsf{Sign},sid,U_S,m=m'...$) from $\mathcal{F}_{\textsf{PS}}$, if there is a recorded signing key ($sid,U_S,sk$) and $m' \notin P$,  $\mathcal{S}$ runs $\sign((sk,P),m)$ to obtain $\sigma$, runs $\punc((sk,P),m')$ to obtain the update secret keys, and outputs ($\textsf{Signature}, sid, m, \sigma$) to $\mathcal{F}_{\textsf{PS}}$.
\item Whenever $\mathcal{S}$ receives ($\textsf{Verify},sid,m,\sigma,vk'$) from $\mathcal{F}_{\textsf{PS}}$, it returns ($\textsf{Verified},sid,m,\verify(m,\sigma,vk')$) to $\mathcal{F}_{\textsf{PS}}$.
\item When $\mathcal{A}$ corrupts a party $U_i$, $\mathcal{S}$ corrupts $U_i$ in the ideal world. If $U_i$ is the signer $U_S$, $\mathcal{S}$ reveals the current signing keys $sk$ and the internal state of algorithm $\sign$ (if there exists) as the internal state of $U_i$.
\end{enumerate}

Recall that $\mathcal{Z}$ can distinguish an ideal execution with $\mathcal{S}$ and $\mathcal{F}_{\textsf{PS}}$ from a real execution with $\mathcal{A}$ and $\pi_{\Sigma}$, given that $\mathcal{Z}$ we demonstrate that the underlying $\Sigma$ is forgeable by constructing a forger $\mathcal{G}$ as follows. $\mathcal{G}$ runs a simulated instance of $\mathcal{Z}$, and simulates for $\mathcal{Z}$ an interaction with $\mathcal{S}$ and $\mathcal{F}_{\textsf{PS}}$ where $\mathcal{G}$ plays the role of both $\mathcal{S}$ and $\mathcal{F}_{\textsf{PS}}$, moreover in the simulating process, like $\mathcal{S}$, $\mathcal{G}$ will also run a simulated run of $\mathcal{A}$.

When $\mathcal{Z}$ activates some party $U_S$ with input ($\textsf{KeyGen},sid,U_S$), $\mathcal{G}$ returns the public key $vk$ from its experiment to $\mathcal{Z}$. When $\mathcal{Z}$ activates $U_S$ with input ($\sign,sid,U_S,m = m'...$), $\mathcal{G}$ calls its signing oracle for a signature $\sigma$ for $m = m'...$, returns
$\sigma$ to $\mathcal{Z}$, calls its puncture oracle for $m'$ to update the secret keys, updates $P = P \cup \{m'\}$ and $Q_{\sf sig} = Q_{\sf sig} \cup \{m\}$. When $\mathcal{Z}$ activates some party with input ($\textsf{Verify},sid,m,\sigma,vk'$), $\mathcal{G}$ tests $vk' = vk$, $m \in Q_{\sf sig}$, $U_S$ is uncorrupted before $\punc$ query on $m'$ is made, and $\verify(m,\sigma,vk') = 1$. If these conditions are met, then in its challenge phase, $\mathcal{G}$ outputs $m'$ as the challenge string, makes $\punc$ query with input $m'$, makes {\sf Corruption} query, and continues its queries as in query phase 2. Eventually $\mathcal{G}$ outputs the tuple ($m,\sigma$), thus succeeds in the experiment (which can be easily verified).

Denote by $E$ the event that in a run of $\pi_{\Sigma}$ with $\mathcal{Z}$ and $sid=(U_S,sid')$, the signer $U_S$ generates a public key $vk$, and some party $U_i$ is activated with a verification request ($\verify,sid,m,\sigma,vk$), where $\verify(m,\sigma,vk) = 1$, $m \notin Q_{sig}$, and $U_S$ is not corrupted before $\punc$ query on $m'$ is made. If event $E$ does not occur, $\mathcal{Z}$ would not distinguish the between an ideal and a real executions. However, we are guaranteed that $\mathcal{Z}$ can distinguish real from ideal executions with non-negligible advantage, then event $E$ also happens with non-negligible advantage. Note that, from the view of $\mathcal{Z}$, the interaction with $\mathcal{G}$ looks the same as the interaction with $\pi_{\Sigma}$, which means that whenever $E$ happens, $\mathcal{G}$ outputs a successful forgery.
\end{proof}
}
\subsection{Applications in Other Proof-of-Stake Protocols}

As we have described above, most existing proof-of-stake blockchain protocols are vulnerable to LRSL attack, and we show that our puncturable signature construction can also be applied in other Proof-of-stake blockchain protocols to resist LRSL attack.

 In both Ouroboros~\cite{kiayias2017ouroboros} and Snow White~\cite{daian2016snow} protocols, each block is signed by the leader using an ordinary signature scheme and thus they cannot resist LRSL attack. Fortunately, their signature schemes can also be replaced by the puncturable signatures directly. Specifically, in Ouroborous, the leader $U_i$ signs the block $B_i$ by $\sigma = \sign(\sk_j, (sl_j, d, st_j))$ and updates the secret key of $U_i$ by ${\textsf{Puncture}}(\sk_j, sl_j)$, and the case in Snow White is similar with the exception that the slot parameter is replaced with the time step $t$. By this means, even if an adversary $\A$ obtains the updated secret key, he cannot sign for other block data $d'$ at the same slot $sl_i$ or time step $t$, which furthermore avoids the forks in blockchains and LRSL attack. In addition, our puncturable signature also can be applied in Ouroboros Genesis \cite{badertscher2018ouroboros} protocol similar to Section 4.1.


\ignore{
\section{\bf Tag-based Puncturable Signature in Proof-of-Stake Blockchain}
\label{appendix:tag-based blockchain}

In this part, we show tag-based puncturable signature can be deployed in proof-of-stake blockchain under some reasonable assumption.
In general, if the slot $sl_j$ contained in signed message $m$ is bound to one specific tag and moreover this binding relation is publicly checkable, tag-based puncturable signature can guarantee the same security as the original puncturable signature.

Note that, the tag-based PS scheme itself is not enough for PoS blockchain to resist LRSL attacks. In more detail, as described in Section IV.A, the adversary with the leaked secret key in current tag $\tau$ can forge signatures on any messages in any future tag $\tau'>\tau$. When applied to PoS blockchain protocols, assuming the leader $U_i$ issue a new block $B_i$ by $\sigma = \sign(\sk_i, ( sl_i, st_i, d, B_{\pi}))$ with the current tag $\tau$ encoded in $\sigma$, then with $\sk_i$ in tag $\tau$ the adversary can forge a valid signatures $\sigma' = \sign(\sk'_i, (sl_i, st_i, d', B'_{\pi}))$ with the tag $\tau'>\tau$ in the same slot $sl_i$ and construt a fork at $sl_i$, even though $(sl_i,\tau)$ has been punctured. To remedy this problem, we let all miners maintain the current tag of $U_i$, so that they can reject $\sigma'$ if the embedded tag is not the correct $\tau$. Next we show how to achieve this additional check in implementation.


In proof-of-stake blockchain application, if stakeholder $U_i$ is chosen as leader in slot $sl_i$, he signs the block $B_i$ by $\sigma$ = $\sign$$(\sk_i$, $($$sl_i$, $st_i$, $d$, $B_{\pi}$$)$$)$, and the current tag $\tau$ is encoded in the $\sigma$ itself. The tag $\tau$ will be updated to $\tau + 1$ if and only if the signing times of $U_i$ denoted by $N_{U_i}$ reaches $\textsf{max}$ which denotes the maximum number of puncturing times, in other words, $\tau = \lfloor N_{U_i}/\textsf{max}\rfloor$. Then we let each user (specifically, the miner) maintains one list $\mathbb{L}$ consisting of entries ($PK_{U_i}, N_{U_i}$) of all users, where $PK_{U_i}$ denotes the public key of $U_i$, and $N_{U_i}$ is initialized to be $0$ and updated by $N_{U_i} = N_{U_i} + 1$ once one signature on a new block issued by the leader $U_i$ is generated. Then the signature on message $m$ with tag $\tau$ and public key $PK_{U_i}$ would be accepted, if and only if the $\textsf{Verify}$ algorithm in TPS returns $1$ and moreover $\tau$ $=$ $\lfloor N_{U_i}/\textsf{max}\rfloor$ for ($PK_{U_i}, N_{U_i}$) $\in$ $\mathbb{L}$. In fact, our solution is inspired by the idea of maintaining UTXO in blockchain, where fully validating nodes must maintain the entire set of UTXO \cite{narayanan2016bitcoin} and each entry in UTXO has similar form indicating the available coins for one address.

\begin{table*}[h]
\center
\begin{tabular}{|p{17.8cm}|}
\hline
\begin{center}
{\sf Functionality} $\mathcal{F}_{\textsf{TPS}}$
\end{center}

\begin{description}[leftmargin=*]
\item $\mathcal{F}_{\textsf{PS}}$ interacts with a signer $U_S$ and stakeholder $U_i$ as follows:
\item[Key Generation.] Upon receiving a message $(\textsf{KeyGen}, sid, U_S)$ from a stakeholder $U_S$, verify that $sid = (U_S, sid')$ for some $sid'$. If not, then ignore the request. Else, send $(\textsf {KeyGen}, sid, U_S)$ to the adversary. Upon receiving $(\textsf {PublicKey}, sid, U_S, v)$ from the adversary, send $(\textsf {PublicKey}, sid, v)$ to $U_S$, record the entry $(sid, U_S, v)$, and set $P_{\sf str} = P_{\sf tag} = \emptyset$ and  $N_{U_S}=0$.
\item[Sign and Puncture.] Denote by $\tau_{cur}$ the current tag. Upon receiving a message $(\textsf{PSign}, sid, U_S, m = m'...)$ from $U_S$, verify that $(sid, U_S, v)$ is recorded for some $sid$ and $(m', \tau_{cur}) \notin P_{\sf str}$.  If not, then ignore the request. Else, send $(\textsf {Sign}, sid, U_S, m, \tau_{cur})$ to the adversary.

    Upon receiving $(\textsf {Signature}, sid, U_S, m, (\tau_{cur},\sigma_S))$ from the adversary, verify that no entry $(m,(\tau_{cur},\sigma_S),v,0)$ is recorded. If it is, then output an error message to $U_S$ and halt. Else, send $(\textsf{Signature}, sid, m,(\tau_{cur},\sigma_S))$ to $U_S$,  record the entry $(m, (\tau_{cur},\sigma_S), v, 1)$, and set $P_{\sf str} = P_{\sf str} \cup {(m',\tau_{cur})}$ and $N_{U_S}=N_{U_S}+1$.  If $N_{U_S} \% \textsf{max}=0$, then set $P_{\sf tag} = P_{\sf tag}\cup\{\tau_{cur}\}$ and $\tau_{cur}=\tau_{cur}+1$, where $\textsf{max}$ denotes the maximum number of
puncturing times as mentioned above.

\item[Signature Verification.] Upon receiving a message $(\textsf{Verify}, sid, m = m'..., \sigma=\{\tau,\sigma_S\}, v')$ from some stakeholder $U_i$ do:

\begin{enumerate}[leftmargin=*]
\item If $v' = v$ and the entry $(m, \sigma, v, 1)$ is recorded, then set $f = 1$. (This condition ensures completeness: If the public key $v'$ is the registered one and $\sigma$ is a legitimately generated signature for $m$, then the verification succeeds.)
\item Else, if $v' =v$, the signer is not corrupted, and no entry $(m, \sigma', v, 1)$ for any $\sigma'$ is recorded, then set $f = 0$ and record the entry $(m, \sigma, v, 0)$. (This condition ensures unforgeability: If the public key $v'$ is the registered one, the signer is not corrupted, and $m$ is never by signed by the signer, then the verification fails.)
\item Else, if there is an entry $(m, \sigma, v', f')$ recorded, then let $f = f'$. (This condition ensures consistency: All verification requests with identical parameters will result in the same answer.)
\ignore{\item Else, set $\tau' = N_{U_S}$ mod $\textsf{max}$. If $\tau \neq \tau'$, or if $\tau = \tau'$ and $\tau < \tau_{cur}$, or if $\tau = \tau = \tau_{cur}$ and $(m',\tau_{cur}) \in P$ for $\tau_{cur}$, then let $f = 0$ and record the entry $(m, \sigma, v, 0)$. Otherwise, send $(\textsf{Verify}, sid, m, \sigma, v')$ to the adversary. Upon receiving $(\textsf{Verified}, sid, m, \phi)$ from the adversary, let $f = \phi$ and record the entry $(m, \sigma, v', \phi)$. (This condition ensures that the adversary is only able to forge signatures of corrupted parties for the messages that should be signed in current tag and are unpunctured and messages that should be signed in future tag.)}

\item Else, if $\tau < \tau_{cur}$, or $\tau > \tau_{cur}$, or \{$\tau = \tau_{cur}$\} $\wedge$ $\{(m', \tau_{cur}) \in P_{\sf str})\}$, then let $f = 0$ and record the entry $(m, \sigma, v', 0)$. Otherwise,
    send $(\textsf{Verify}, sid, m, \sigma, v')$ to the adversary. Upon receiving $(\textsf{Verified}, sid, m, \phi)$ from the adversary, let $f = \phi$ and record the entry $(m, \sigma, v', \phi)$. (This condition ensures that the adversary is only able to forge signatures of corrupted parties on messages with unpunctured prefix in period with correct tag.)
\end{enumerate}
Output $(\textsf{Verified}, sid, m, f)$ to $U_i$.
\end{description}\\
\hline
\end{tabular}
\vspace{0.2cm}\\
{\bf Figure 4: Functionality $\mathcal{F}_{\textsf{TPS}}$ }\\
\end{table*}

 By this binding, tag-based puncturable signature can perfectly achieve the property puncturable signature provides for proof-of-stake blockchain. Specifically, for the additional check, the construction of tag-based puncturable signature in Section IV.B is extended as follows:

\begin{enumerate}[leftmargin=*]
\item As an initialization, we set the maximum of puncturing $\textsf{max}$ according to the desirable error probability (i.e., ${\sf max} = n$, where $n$ denotes the number of elements to be added in Bloom filter), and set $\mathbb{L} = \emptyset$.
\item The entry ($PK_{U_i}, N_{U_i}$) for the leader $U_i$ in the public $\mathbb{L}$ is updated by the miners after $U_i$ generates one block by $\sign()$ algorithm.
\item  The verification algorithm is renewed as follows. On input $\vk = ({\sf mpk}, \{H_j\}_{j\in [k]})$, a message $m$ with prefix $m'$, a signature $\sigma = \{\tau|i_{j^*}, \sigma_{S}\}$, it outputs 1 if the following conditions hold: (1) $\tau=\lfloor N_{U_i}/\textsf{max}\rfloor$, (2) $i_{j^*} \in \{H_j(m'): j \in [k]\}$, and (3) $\textsf{HIBVerify}$$({\sf mpk}$, $\tau|i_{j^*}$, $m$, $\sigma_{S})$ = $1$. Otherwise, it outputs 0.
 \end{enumerate}
\begin{spacing}{0.6}
\end{spacing}

Since only a publicly verifiable check is added, the security property of tag-based PS schemes in Section IV.B still holds and can guarantee that the adversary cannot forge signatures at the punctured slot. Then we present an ideal functionality $\mathcal{F}_{\textsf{TPS}}$ of tag-based puncturable signature scheme in Figure 4 and show any property of the protocol that we prove true in the hybrid experiment (including common prefix, chain quality and chain growth) will remain true in the setting $\mathcal{F}_{\textsf{KES}}$ is replaced by $\mathcal{F}_{\textsf{TPS}}$. In addition, we show that $\mathcal{F}_{\textsf{TPS}}$ can be realized securely by the tag based puncturable signature. The details are similar to those in Section V.A, and we omit them here.}



\subsection{On tolerating a non-negligible correctness error for Proof-of-Stake Blockchain}
The significant efficiency improvement of our PS construction stems partially from the relaxation of tolerating a non-negligible correctness error, which, in turn, comes from the non-negligible false-positive probability of a Bloom filter. Specifically, the correctness error in our puncturable signature construction means that the signing of a message $m$ may yield $\bot$ even though the secret key has never been punctured at the prefix $m'$ of the message $m$. However, the correctness error can be as small as possible by adjusting the corresponding parameters in Bloom filter (see Section 2.1), which implies a trade-off between the non-negligible correctness error and the size of secret keys.

For proof-of-stake blockchain, it is a reasonable approach to accept a small, but non-negligible correctness error, in exchange for the huge efficiency gain. In fact, existent blockchain protocols achieve security properties (i.e. common prefix, chain quality and chain growth) with high probability instead of certainty, which means a small error probability is inherent in these protocols. Moreover, the signing error would not affect the running of the blockchain system. For instance, in Ouroboros~\cite{kiayias2017ouroboros}, the stakeholder selected as one of the leaders in current slot can still get the reward even if his signing fails. While in Ouroboros Praos~\cite{david2018ouroboros} and Snow White~\cite{daian2016snow}, some slots might have multiple slot leaders or no leader (i.e., empty slot), which means the signing error for one leader would not affect the protocol running.

\begin{table*}[!t]
\center
\begin{threeparttable}
\caption{Efficiency comparison}
\begin{tabular}{c|c|c|c}
\hline     &  Ours&  \cite{itkis2001forward}&  \cite{malkin2002efficient}\\
\hline      Keygen time&   $l \cdot t_{m1}$  &  $T\lambda t_{pt} + T \cdot t'_{mN}  + 3 t_{eN}$&   $\lambda t_{eN}$ \\
\hline      Sign time&   $t_{eT}+t_{m1}+k.t'_h+t_h$ &  2 $t_{eN}$ + $t_{mN}$ + $t_h$&  2 $t_{eN}$ + $t_{mN}$ + $t_h$   \\
\hline      Verify time&   $k.t'_h+t_h+t_p+t_{eT}+t_{m2}$  &  2 $t_{eN}$ + $t_{mN}$ + $t_h$&  (4 $t_{eN}$ + 2 $t_{mN}$) + ({\text {log}}$\lambda$+{\text {log}}$t$) $ t_h$ \\
\hline      Key update time&  $k.t'_h$  &  $T \cdot t_{eN}$ +  $T\lambda \cdot t_{pt}$&   $t \cdot t_{eN}$\\
\hline      Secret key size&   $l \cdot e^{-k|P|/l}|\mathbb{G}_1|$   & $3|\mathbb{Z}^*_{N}| + \lambda + 2{\text {log}T}$&  $|\mathbb{Z}^*_{N}| + \lambda \cdot ({\text {log}\lambda}+\text{log}t)$\\
\hline      Public key size&  $|\mathbb{G}_1|$    &  $2|\mathbb{Z}^*_{N}| + {\text {log}T}$&  $\lambda$\\
\hline      Signature size&  $|\mathbb{Z}^*_{p}|+|\mathbb{G}_1|$  &   $|\mathbb{Z}^*_{N}| + 2\lambda + {\text {log}T}$&  $4|\mathbb{Z}^*_{N}| + \lambda  ({\text {log}\lambda}+{\text {log}t})$ \\
\hline
\end{tabular}
\ignore{\begin{tablenotes}
        \footnotesize
        \item[1] The time cost for the generation of the initial secret signing key from public parameters.
      \end{tablenotes}}
\end{threeparttable}
\end{table*}

\begin{table*}[!t]
\center
\caption{Experimental results comparison}
\begin{threeparttable}
\begin{tabular}{c|c|c|c|c|c|c}
\hline    &\multicolumn{2}{c|}{  Ours} &\multicolumn{2}{c|}{  \cite{itkis2001forward}}&\multicolumn{2}{c}{  \cite{malkin2002efficient}}\\
\hline& 128-bit& 192-bit& 128-bit& 192-bit& 128-bit& 192-bit\\
\hline      Keygen time (ms)&  $5.19 \times 10^3$ &  $1.11 \times 10^4$ &  $6.92 \times 10^4$ &  $1.62 \times 10^5$ &  $378.88$ & $3.40 \times 10^3$\\
\hline      Sign time (ms)&  $1.17$ &  $5.61$ &  $5.92$ &  $35.41$ &  $5.92$ &  $35.41$\\
\hline      Verify time (ms)&  $4.14$ &  $23.09$ &  $5.92$ &  $35.41$ &  $11.84$ &  $70.82$\\
\hline      Key update time (ms)&  $10^{-5}$ &  $10^{-5}$ &  $3.65\times10^5$ &  $1.93\times10^6$ &  $2.96\times10^5$ &  $1.77\times10^6$\\
\hline      Secret key size\tnote{1}&  \tabincell{c}{$1.31 \times$\\$ e^{-|P|/{1.44\times10^3}} M$} &  \tabincell{c}{$1.64 \times$\\ $e^{-|P|/{1.44\times10^3}} M$} &  $1.14 KB$ &  $2.84 KB$ &  $761.75 B$ &  $1.51 KB$\\
\hline     Public key size&  $95.25B$ &  $119.50 B$ &  $770.08B$ &  $1.88 KB$ &  $16 B$ &  $24 B$\\
\hline     Signature size&  $129.11 B$ &  $169.35 B$ &  $418.08 B$ &  $1010.08 B$ &  $1.87 KB$ &  $4.32 KB$\\
\hline
\end{tabular}
\begin{tablenotes}
        \footnotesize
        \item[1] The secret size decreases with puncturing operation. For 128-bit and 192-bit security level, the maximum is $1.31 MB$ and $1.64 M$ when $|P| = 0$, and the minimum is $0.65 M$ and $0.82 M$ when $|P|$ reaches its maximum (i.e., $10^3$), respectively.
      \end{tablenotes}

\end{threeparttable}

\end{table*}

\begin{table*}[!t]
\newcommand{\minitab}[2][l]{\begin{tabular}{#1}#2\end{tabular}}
\center
\caption{Experimental cost of each unit operation (ms)}
\begin{tabular}{c|c|c|c|c|c|c|c|c|c|c}
\hline      &$t_{m1}$ &$t_{m2}$ &$t_{eT}$ &$t_p$ &$t_{mN}$ &$t'_{mN}$ &$t_{eN}$ &$t_{pt}$&$t_h$ &$t'_h$\\
\hline 128-bit &$0.36$ &$0.97$ &$0.81$ &$2.36$ &$0.0035$ &$0.00095$ &$2.96$ &$0.0054$&\multirow{2}{*}{$2 \times 10^{-5}$}&\multirow{2}{*}{$10^{-6}$}\\
\cline{1-9} 192-bit &$0.77$ &$6.93$ &$4.84$ &$11.32$ &$0.011$ &$0.0023$ &$17.7$ &$0.0084$&&\\
\hline
\end{tabular}

\end{table*}
\subsection{Analysis and Comparison}

For the proof-of-stake blockchain application, we make a comparison between our puncturable signature and two existing forward secure signatures, in terms of functionality and performance. First, puncturable signature allows each leader to generate at most one block at any slot (by puncturing at $sl_i$, the slot number of the current block), and thus prevents attackers from compromising leaders to mount LRSL attack. Although the forward secure signature can achieve the same functionality by using different secret key for signing in each period, their performance depends on the number of time periods $T$ (being set in advance) or the time periods $t$ elapsed so far, which is undesirable for the blockchain application. More specifically, in each slot of the proof-of-stake blockchain, only one stakeholder is elected as the leader to propose and sign the block, which means some stakeholders may only have a chance to sign block after long slots (i.e. time periods), however, the computational cost of one signature in this case may almost amount to that of multiple regular signatures due to the dependence on $T$ or $t$. On the contrary, the puncturable signature can alleviate this problem because the computation is independent of time periods.
\ignore{\begin{table}[t]
\center
\caption{Relationship with time periods}
\begin{tabular}{p{1.8cm}|p{0.58cm}<{\centering}|p{0.58cm}<{\centering}|p{0.58cm}<{\centering}|p{0.9cm}<{\centering}|p{0.9cm}<{\centering}|p{0.45cm}<{\centering}}
\hline

& \cite{bellare1999forward}& \cite{krawczyk2000simple}& \cite{abdalla2000new}& \cite{itkis2001forward}& \cite{malkin2002efficient}& Ours\\
\hline      Key Gen time&  $O(T)$ &  $O(T)$ & $O(T)$ & $O(T)$ &$*$&$*$\\
\hline      Signing time&   $O(T)$ & $*$ & $O(T)$ &$*$ &$*$&$*$\\
\hline      Verification time&  $O(T)$ & $*$& $O(T)$ &$*$ & $O(\log t)$&$*$\\
\hline      Key update time& $*$ &$*$ &$*$ & $O(T)$ & $O(t)$&$*$\\
\hline      Secret size& $*$ & $*$ &$*$ & $O(\log T)$ & $O(\log t)$&$*$\\
\hline      Public key size& $*$ & $*$ &$*$ & $O(\log T)$ &$*$&$*$\\
\hline      Signature size& $*$ & $*$ &$*$ & $O(\log T)$ & $O(\log t)$&$*$\\
\hline
\end{tabular}
\label{tab:com}
\end{table}}

\ignore{Recall that in the forward secure signature, the time during which the public keys is designed to be valid is divided into several periods, and forward security is guaranteed by the usage of a different secret key for signing in each period, where the secret key in a given period is a one-way function of the key in the pervious period and thus the key leakage of current period has no influence on keys of previous periods.}
\ignore{In puncturable signature, the (extended) correctness guarantees one encoded message\footnote{Note that in puncturable signature scheme we do not restrict that one document itself but the document after encoding (if exists) by appending some tags such as time stamp can only be signed once, coinciding with general signature schemes.}can be signed only once even though the updated secret key is leaked, thus exactly achieving the goal that forward secure signature scheme designed for, namely, all signatures produced before compromise of the current signing key can still be accepted securely.}

Second, keeping on the signing and verifying operation as efficient as the underlying scheme is an important goal for the forward-secure signatures as well as our puncturable signatures. However, except for \cite{itkis2001forward}, almost all existing forward secure signature schemes require longer time for signing or verifying. Particularly, \cite{krawczyk2000simple}\cite{malkin2002efficient} requires two ordinary verification together with several hash computations, and verification time in \cite{abdalla2000new}\cite{bellare1999forward} even grows linearly with the number of periods $T$. Apparently, our construction can retain the efficiency of the underlying scheme on signing and verifying, with $k$ additional hash computations.

Third, the key update time of \cite{itkis2001forward}\cite{malkin2002efficient} depends on $T$ or $t$, which may bring some undesirable consumption and even become a fatal issue for some particular applications. Specifically, in the proof-of-stake blockchain, the signer may not even do any signing within one period but he has to update the signing key as long as the current period ends, which makes the update operation a vain effort. In some other applications, the signer has to update the secret key immediately after one signing operation, leading to that the number of update operations (i.e., $T$) within a given validity time of the public key becomes so large that the update time is unacceptable. \ignore{Despite that the general construction \cite{krawczyk2000simple} outperforms other schemes on key update, $O(T)$ non-secret certificates storage (in publicly readable tamper-proof memory) is needed, and moreover, one update includes complete key generation and verification process, which is also undesirable.} The key update in our puncturable signature construction is independent on $T$ or $t$, and only needs $k$ hash computations.

Finally, in Table 1 we compare the performance of our construction with that of \cite{itkis2001forward} and \cite{malkin2002efficient}, which are most efficient in existing forward secure signature schemes. We use $t_h$\ignore{\footnote{During the test to sign a string ``MESSAGE", we use the universal $t_h$ for timing evaluation of hash functions with different inputs in three schemes for simplicity, since the cost in three cases are almost equal (about $2 \times 10^{-5} ms$). Even though it can be ignored compared with the exponentiations and pairings according to our test, we keep it during the comparison due to the coefficient $\text{log}t$ as in Table 2.}}, $t'_h$, $t_{m1}$, $t_{m2}$, $t_{eT}$, $t_{p}$,  $t_{mN}$, $t'_{mN}$, $t_{eN}$ and  $t_{pt}$ to denote the time for computing a universal hash, a hash for $H_j(j \in [k])$ in Bloom filter, a multiplication in $\G_1$, a multiplication in $\G_2$, an exponentiation in $\G_T$, a bilinear pairing, a multiplication in $\Z^*_N$, a multiplication in $\varphi(N)$, an exponentiation in $\Z^*_N$ and one primality test for one $\lambda$-bit number, respectively. We also denote  $|\Z^*_p|$, $|\Z^*_N|$ and $|\mathbb{G}_1|$ as the bit-length of an element in $\Z^*_p$, an element in $\Z^*_N$ and an element in $\mathbb{G}_1$, respectively, where $p$ is the order of $\mathbb{G}_1$.


The implementations are written in C using version 3 of AMCL \cite{AMCL} and compiled using gcc 5.4.0, and the programme runs on a Lenovo ThinkCentre M8500t computer with Ubuntu 16.04.9 (64 bits) system, equipped with a 3.40 GHz Intel Core i7-4770 CPU with 8 cores and 8GB memory. Particularly, the AMCL library recommends two types of BLS curves (i.e., BLS12 and BLS24) to support bilinear pairings, and the curves have the form $y^2 = x^3 + b$ defined over a finite field $\mathbb{F}_q$, with $b = 15$ and $|q|=383$ for BLS12, while $b = 19$ and $|q|=479$ for BLS24, where $q$ is a prime. According to the analysis \cite{barbulescu2017updating}, BLS12 and BLS24 curves can provide 128-bit and 192-bit security levels respectively. For the group $\mathbb{Z}^*_N$, we choose $|N| = 3072$ and $|N| = 7680$ for 128-bit and 192-bit security levels respectively. For hash function, we choose SHA-384\footnote{The hash function $H_j(j \in [k])$ in bloom filter can be simulated by two hash functions according to the analysis in \cite{kirsch2008less}. In practice, the guava library \cite{guava} by Google employs Murmur3 hash \cite{murmur} for Bloom filter. For simplicity, we replace Murmur3 with SHA-384 during the test, however, our scheme would perform better using the faster Murmur3.}. In addition, we assume one stakeholder can be leader for $10^3$ times on average and set $n = 10^3$ in Bloom filter. Without loss of generality we assume the average probability that one stakeholder is selected as the leader in one slot is 1/100 (which is large enough in practice)\footnote{Note that we just choose the specific parameters to carry out the efficiency comparison. For larger $n$ and $T$, the efficiency of our scheme remains unchanged except that the time for key generation and secret size would increase according to Table 1, and thus the advantage of our scheme over forward secure signature schemes in the aspect of sign/verify/key update time as well as signature size still holds.}, which means there are at least $10^5$ slots in blockchain and set $T = 10^5$. We also set the error probability $pr = 1/1000$ of Bloom filter, then we can compute $\ell = -\frac{n\ln pr}{(\ln 2)^2} = 1.44 \times 10^4$ and $k = \lceil \frac{l}{n}\ln 2 \rceil = 10$. Note that $t$ in \cite{malkin2002efficient} denotes the time periods elapsed, also the number of signed operations so far, so we set $t = 10^5$ to evaluate the worst case.


Table 2 summarizes the experiment results, where the time represents the average time for 100 runs of each operation and the experimental cost of each basic operation over recommended groups at different security levels is shown in Table 3. The results show that our scheme performs better on signing and verification efficiency, significantly on key update efficiency. Moreover, our scheme has the smallest signature size, which drastically reduces the communication complexity for proof-of-stake blockchain. In addition, key generation in our scheme can be further optimized by pre-computing some exponentiations {\it off-line}. However, the initial secret key size in our scheme is large due to the Bloom filter. Fortunately, the secret key size shrinks with increasing amount of signing operations. In practice, the secret keys are stored locally on personal equipments, and reducing computation complexity and communication complexity may be more important with the rapid advance of storage technology.



\section{Conclusion}
Although the notion of puncturable signatures has been proposed before, this is the first work that makes it efficient enough to be deployed
in practice. We proposed a construction approach based on Bloom filter, whose puncturing operation only involves a small number of efficient computations (e.g. hashing), which outperforms previous schemes by orders of magnitude. Next, we used puncturable signature to construct practical proof-of-stake blockchain protocol resilient to LRSL attack. Our motivation stems from the observation that LRSL attack can alter transactions history and furthermore hamper the development of proof-of-stake blockchain. Our construction allows to realize practical blockchain protocol, and experiment results show that our scheme performs significantly on communication and computation efficiency.

How to design efficient puncturable signature without Bloom filter is a
worthwhile direction. We believe that puncturable signature will find applications beyond proof-of-stake blockchain protocols.

\bibliographystyle{IEEEtran}
\bibliography{abbrev3,extra}
\ignore{
\begin{IEEEbiography}[{\includegraphics[width=1in,height=1.25in,clip,keepaspectratio]{lixinyu}}]{Xin-Yu Li}
received his B.E. and M.Sc degrees from University of Science and Technology of China (USTC) in 2009 and 2013 respectively, Heifei, china. He is currently a PhD candidate in information security with Trusted Computing and Information Assurance Laboratory, Institute of Software, Chinese Academy of Sciences. His major research interests include applied cryptography, provable security, authenticated key exchange protocol and blockchain.
\end{IEEEbiography}

\begin{IEEEbiography}[{\includegraphics[width=1in,height=1.25in,clip,keepaspectratio]{xujing}}]{Jing Xu}
received the PhD degree in Computer Theory from Academy of Mathematics and Systems Science, Chinese Academy of Sciences in 2002. She is currently a research professor with Trusted Computing and Information Assurance Laboratory, Institute of Software, Chinese Academy of Sciences. Her research interests include applied cryptography and security protocol. She is a senior member of Chinese Association for Cryptologic Research.
\end{IEEEbiography}

\begin{IEEEbiography}[{\includegraphics[width=1in,height=1.25in,clip,keepaspectratio]{leo}}]{Xiong Fan}
received the PhD degree from Cornell University in 2019. He is currently a postdoctoral researcher at University of Maryland. His research interests include Cryptography, security and programming languages.
\end{IEEEbiography}

\begin{IEEEbiography}[{\includegraphics[width=1in,height=1.25in,clip,keepaspectratio]{wangyuchen}}]{Yu-Chen Wang}
received his B.E. degree from NanKai University, China in 2014. He is currently a PhD candidate in information security with Trusted Computing and Information Assurance Laboratory, Institute of Software, Chinese Academy of Sciences. His major research interests include privacy protection and anonymous authentication.
\end{IEEEbiography}

\begin{IEEEbiography}[{\includegraphics[width=1in,height=1.25in,clip,keepaspectratio]{zhangzhenfeng}}]{Zhen-Feng Zhang}
received the PhD degree from Academy of Mathematics and Systems Science, Chinese Academy of Sciences in 2001. He is currently a research professor and director with Trusted Computing and Information Assurance Laboratory, Institute of Software, Chinese Academy of Sciences. His main research interests include applied cryptography, security protocol and trusted computing. He is a member of Chinese Association for Cryptologic Research.
\end{IEEEbiography}}



\end{document}